\pgfplotsset{compat=1.6}
\newtheorem{problem}{Problem} 
\newcommand{\ceil}[1]{\lceil #1 \rceil}
\newcommand{\floor}[1]{\lfloor #1 \rfloor}
\newcommand{\expect}[1]{\mathbb{E}[#1]}
\newcommand{\cost}[1]{\ensuremath{cost(#1)}}
\newcommand{\costFull}[2]{\ensuremath{cost(#1 ; #2)}}
\newcommand{\algProbMatchings}{\ensuremath{A_1}}
\newcommand{\algProbFixedEdges}{\ensuremath{A_3}}
\newcommand{\optProbMatchings}{\ensuremath{OPT_1}}
\newcommand{\optProbFixedEdges}{\ensuremath{OPT_3}}
\newcommand{\etal}{{\em et al.}}
\definecolor{myblue}{RGB}{80,80,160}
\definecolor{mygreen}{RGB}{80,160,80}
\title{Caching Connections in Matchings}
\author{Yaniv {Sadeh}}{Tel Aviv University, Israel}{yanivsadeh@mail.tau.ac.il}{https://orcid.org/0000-0002-5712-1028}{} 
\author{Haim {Kaplan}}{Tel Aviv University, Israel}{haimk@tau.ac.il}{https://orcid.org/0000-0001-9586-8002}{}
\authorrunning{Y. Sadeh and H. Kaplan} 
\keywords{Caching, Matchings, Caching in Matchings, Edge Coloring, Online Algorithms} 
\begin{document}

\maketitle

\begin{abstract}
Motivated by the desire to utilize a limited number of configurable optical switches by recent advances in Software Defined Networks (SDNs), we define an online problem which we call the \emph{Caching in Matchings} problem. This problem has a natural combinatorial structure and therefore may find additional applications in theory and practice.

In the \emph{Caching in Matchings} problem our cache consists of $k$ matchings of connections between servers that form a bipartite graph. To cache a connection we insert it into one of the $k$ matchings possibly evicting at most two other connections from this matching. This problem resembles the problem known as \emph{Connection Caching}~\cite{ConnectionCaching}, where we also cache connections but our only restriction is that they form a graph with bounded degree $k$. Our results show a somewhat surprising qualitative separation between the problems: The competitive ratio of any online algorithm for caching in matchings must depend on the size of the graph.

Specifically, we give a deterministic $O(nk)$ competitive and randomized $O(n \log k)$ competitive algorithms for caching in matchings, where $n$ is the number of servers and $k$ is the number of matchings. We also show that the competitive ratio of any deterministic algorithm is $\Omega(\max(\frac{n}{k},k))$ and of any randomized algorithm is $\Omega(\log \frac{n}{k^2 \log k} \cdot \log k)$. In particular, the lower bound for randomized algorithms is $\Omega(\log n)$ regardless of $k$, and can be as high as $\Omega(\log^2 n)$ if $k=n^{1/3}$, for example. We also show that if we allow the algorithm to use at least $2k-1$ matchings compared to $k$ used by the optimum then we match the competitive ratios of connection catching which are independent of $n$. Interestingly, we also show that even a single extra matching for the algorithm allows to get substantially better bounds.
\end{abstract}


\section{Introduction}
\label{section_indtroduction}

We define the \emph{Caching in Matchings} online problem, on a fixed set of $n$ nodes. Requests are edges between these node. The algorithm maintains a cache of $k$ matchings, i.e.\ a $k$-edge-colorable graph. To serve a request for an edge $(u,v)$ which is not in its cache (i.e.\ a miss), the algorithm has to insert it into one of its matchings. To do this it may need to evict the edges incident to $u$ and $v$ in this specific matching. Note that an evicted edge may later be re-inserted into a different matching. The algorithm has to choose which matching to use for each miss in order to minimize its total number of misses.

One can look at this problem as a new variation of the online Connection Caching problem. In  \emph{Connection Caching}~\cite{ConnectionCaching}
the setup is the same, but the cache maintained by the algorithm must be  a graph in which each node is of degree at most $k$. In case of a miss on an edge $(u,v)$ we may choose any edge incident to $u$ and any edge incident to $v$  to evict. We do not have to maintain the edges partitioned into a particular set of $k$ matchings. Thus in Caching in Matchings we are less flexible in our eviction decisions. Once we color the new edge then the two edges we have to evict are determined.

At a first glance, the two caching problems seem  similar. In fact, the only difference is the added restriction of the coloring (matchings) that affects how the cache is maintained. Interestingly, it turns out that this seemingly small difference makes Caching in Matchings a much harder online problem compared to Connection Caching. 

\looseness=-1 
A common measure to evaluate online algorithms is their competitive ratio. We say that an online algorithm is $c$-competitive if its cost (in our case, miss count) on every input sequence is at most $c$ times the minimal possible cost for serving this sequence. One would like to design algorithms with as small $c$ as possible. The problem of Connection Caching is known to be $\Theta(k)$ (deterministic) and $\Theta(\log k)$ (randomized) competitive, and in contrast we show that the dependence on $n$ (the number of nodes) in Caching in Matchings cannot be avoided.

The motivation to our Caching in Matchings problem comes from a data-center architecture
described in \cite{BMatchingPhysicalDetails}. In this setting we have $n$ servers connected via a  communication network which is equipped with a set $O$ of $k$ optical switches. Each server is connected to all the $k$ optical switches and in each of them it is connected to both an input and an output port. Each switch is configured to implement a matching between the input and the output ports of the servers, see Figure~\ref{figure_physical_setup}.
Since each server is connected to both input and output sides, the optical switches effectively induce a degree $k$ bipartite graph with $2n$ nodes (two nodes per server). Each optical switch corresponds to a matching in our cache.
It is dynamic as we can insert and evict connections from the switch, but we try to minimize these reconfigurations since they are costly (involve shifting mirrors, and down-time).

At this point we clarify that there are two ``kinds'' of optical switching architectures. The one which we model, as explained, is based on off-the-shelf commodity switches and is sometimes referred to as Optical Circuit Switching (OCS). Each switch is a separate box, and each box, at any time, implements a matching between its ports. We use $k$ switches and connect every server to every switch, so this architecture induces $k$ matching at any time.
%
To add a connection between two servers we have to choose through which box we want to do it (choose a matching to insert it to) and then reconfigure the matching implemented by this particular box to include this edge. The other kind of switching is known as Free Space Optics (FSO) where every transmitter can point towards any receiver. When each server is 
connected to $k$ transmitters and $k$ receivers we get the standard connection caching setting. This is \emph{not} the architecture that we model here. See Table-1 of \cite{projecToR2016} for several references and their architecture types.

\begin{figure}[t]
	\centering
	\includegraphics[width=0.94\textwidth]{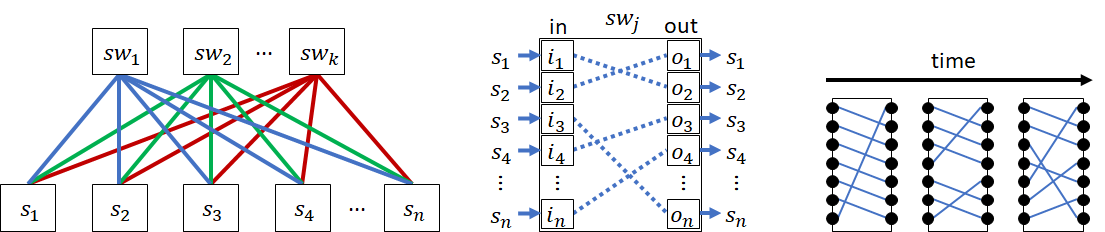}
	\caption{\small{The physical topology that motivates our problem: $n$ servers $s_1,\ldots,s_n$, each is connected to the in/out ports of $k$ optical switches ${sw}_1$ through ${sw}_k$. Each switch ${sw}_j$ uses mirrors to switch optical links, effectively inducing an in/out permutation, which may change over time at a reconfiguration cost of $1$ per each new pairing. Abstractly, we get a bipartite graph with $n$ nodes on each side (one per server), and each permutation is a matching that caches links.}}
	\label{figure_physical_setup}
\end{figure}

Several cost models considering both communication and adjustment cost were suggested for this setting \cite{BMatchingPhysicalDetails}. We choose to work with arguably the simplest model of paying $1$ for an insertion of a new edge (formally defined in Section~\ref{section_model}). This simple model already captures  the qualitative properties of the problem. We note that the competitive results shown here can be adapted (up to constant factors) to a more complicated cost model that has additional communication costs per request. We believe that our combinatorial abstraction of this setting is natural and will find additional applications.

Here is a detailed summary of our results.

\smallskip
\noindent
\textbf{Our contributions:}

\begin{enumerate}
    \item We define a new caching problem, ``Caching in Matchings'' (Problem~\ref{problem_matching_vanilla}), on a bipartite graph with $n$ nodes on each side.\footnote{The problem makes sense on a general graph as well.}
    In this problem, the cache is a union of $k$ matchings. When we insert an edge we pick the matching to insert it to and evict edges from this matching if necessary.

    \looseness=-1 
    \item We show that the competitive ratio of Caching in Matchings depends not only on the cache size  $k$ as is common for caching problems, but also on the number of nodes in the network $n$. One might argue that since we define the cache to be $k$ matchings, its size is $\Theta(nk)$ rather than $k$, so the dependency on $n$ is not surprising. But such an argument also applies to  Connection Caching~\cite{ConnectionCaching} and in that problem the competitive ratio does not depend on $n$. In other words, Caching in Matchings is provably harder than Connection Caching.\footnote{In terms of the architecture, we show that the FSO architecture has a better competitive ratio than the the OCS architecture.} Specifically we prove the following.
    \begin{enumerate}
        \item An $\Omega(\max(\frac{n}{k},k))$ lower bound on the competitive ratio of deterministic algorithms, and we give a deterministic algorithm with $nk$ competitive ratio. For $k = O(1)$ this gives a tight bound of $\Theta(n)$ on the competitive ratio.
        
        \item In contrast, in the randomized case we have a larger gap. We describe an $O(n \log k)$ competitive algorithm and prove a lower bound of $\Omega(\log \frac{n}{k^2 \log k} \cdot \log k)$ on the competitive ratio. This bound is $\Omega(\log n)$ for any $k$, and can get as worse as $\Omega(\log^2 n)$, for example if $k = n^{1/3}$. This is in contrast to other caching problems whose randomized competitive ratio is logarithmic.\footnote{Throughout the paper, where it matters, our logarithms are to base $2$.}
    \end{enumerate}

    \item We show that resource augmentation of almost-twice as many matchings, specifically $2k-1$ for the algorithm versus $k$ for the optimum, allows to get rid of the dependence on $n$. Specifically, we show a deterministic $O(k)$ competitive algorithm and a randomized $O(\log k)$ competitive algorithm for this case. Furthermore, with  $2(1+\alpha)k$ matchings we get a deterministic $O(1+\frac{1}{\alpha})$ competitive algorithm. We also show that a single extra matching already helps by allowing us to ``trade'' $\sqrt{n}$ for $\sqrt{k}$ in the competitive ratio. Concretely and more generally, with $h \ge 1$ extra matchings we get a deterministic $O(n^{1/2} (k/h)^{3/2})$ and a randomized $O(n^{1/2} (k/h)^{1/2} \log \frac{2k+h}{h})$ competitive algorithms. Moreover, it is even possible to reduce the dependence on $n$ to polylogarithmic at the cost of higher polynomial dependency on $k$, which is beneficial for small $k$. Concretely, following~\cite{MultistepVizing2023}, we get a deterministic $O \Big (\frac{k^6 \log n}{h} \min(k, \log n) \Big)$ and a randomized $O\Big (\big (\frac{k \log k}{h} \big )^6 \log \frac{2k+h}{h} \log^9 n \Big )$ competitive algorithms. The deterministic algorithm is not efficient.
\end{enumerate}

\smallskip

Our problem is a special case of a more general problem of convex body chasing in $L_1$. Bhattacharya~\etal~\cite{L1BodyChasing2023} gave a  fractional algorithm for this body chasing problem with packing and covering constraints. Their fractional algorithm requires a slight resource augmentation. For a few special cases, they show how to round their fractional solution to an integral solution that does not use additional resources. Our problem is another interesting test-case of this general setting (see Appendix~\ref{subsection_appendix_randomization_thoughts}).

\smallskip
Our full list of results is summarized in Table~\ref{table_all_results_matchings}. The rest of the paper is structured as follows. Section~\ref{section_model} formally defines the model, the notations that we use, and the caching problems. Section~\ref{section_matchings_caching} studies in depth the Caching in Matchings problem (Problem~\ref{problem_matching_vanilla}). Section~\ref{section_related_work} surveys related work on caching and coloring problems, and in Section~\ref{section_conclusions} we conclude and list a few open questions. Section~\ref{section_appendix} serves as an appendix that contains deferred proofs, and a few additional discussions.

\section{Model and Definitions}
\label{section_model}
In the following we formally define two caching problems of interest, the premise of each of them is a graph with a set $V$ of $n$ nodes. Every turn, a new edge is requested. If it is already cached, we have a ``hit'' and no cost is paid. Otherwise, we have a ``miss'', and the edge must be brought into the cache at a cost of $1$, possibly at the expense of evicting other edges. In fact, the problem that arises from~\cite{BMatchingPhysicalDetails} consists of a bipartite graph in which each server $v$ is associated with two nodes $v^{in} \in V^{in}$ and $v^{out} \in V^{out}$, modeling its receiving and sending ports, respectively. Each among $v^{in}$ and $v^{out}$ can be incident to one edge in each matching.\footnote{We note that technically, the physical switch can be configured with links of the form $(v^{in},v^{out})$, but it makes no sense and practically such requests do not exist. However, our algorithms can deal with all possible requests, and our lower bounds are proven without relying on such requests, so we  ignore this nuance onward.} Formally the problem is as follows.

\begin{problem}[Caching in Matchings]
\label{problem_matching_vanilla}
Requests arrive for edges $(u,v) \in V^{in} \times V^{out}$. The cache $M$ is a union of $k$ matchings. When a requested edge is missing from all the matchings, an algorithm must fetch it into one of the matchings (possibly evicting other edges from this matching). In addition, the algorithm may choose to add any edge to the cache at any time (while maintaining the cache's restrictions), the cost of adding an edge to the cache is $1$. It is not allowed to move an edge between matchings, but an edge may be evicted and immediately re-fetched into a new matching.
\end{problem}

\begin{remark}
\label{remark_recolor_cost}
There are other caching models in which reorganizing the cache is free, such as \cite{RestrictedCaching2014,CompanionCaching2002}. 
In our model reorganizing the matchings incurs a cost.
This is because we model a setting where changing the cache (physical links) is slow. In other cases accessing the slow memory is the costly operation.
%
\end{remark}

We use the terminology of coloring edges when discussing Caching in Matchings (Problem~\ref{problem_matching_vanilla}). Recoloring an edge implies that we evict it, and then immediately fetch it back into a different matching according to the new color of the edge. Recoloring is not free, but has the same cost of standard fetching. This models, for example, the physical setting in which such a rearrangement requires reconfiguring the link in a different optical switch.

\begin{problem}[Connection Caching~\cite{ConnectionCaching}]
\label{problem_connections_caching_vanilla}
Requests arrive for edges $(u,v) \in V^{in} \times V^{out}$. The cache $M$ is a set of edges such that every node is of degree at most $k$ in the sub-graph induced by $M$. When a requested edge is missing from $M$, an algorithm must fetch it (possibly evicting other edges). In addition, the algorithm may choose to add any edge to the cache at any time (while maintaining the degrees at most $k$), the cost of adding an edge to the cache is $1$.
\end{problem}

\begin{remark}
\label{remark_lazy_is_fine}
Note that in both problems that we defined, an algorithm is allowed to add (fetch) and remove (evict) additional edges. Technically, it is not strictly necessary because a non-lazy algorithm can always be simulated by a lazy version that fetches an edge only when it is actually needed. This is also true for the offline optimum. That being said, we will describe non-lazy algorithms for Caching in Matchings, that recolor edges, to simplify the presentation.
\end{remark}

To emphasize the difference between the problems see Figure~\ref{figure_cannot_add_link_example}, which shows the difference on bipartite graphs, as well as on general graphs (for the generalized problem).

\begin{figure}[ht]
    \centering
    \begin{subfigure}[t]{0.27\textwidth}
        \centering
	\includegraphics[width=0.5\textwidth]{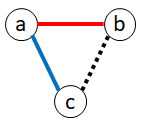}
    \end{subfigure}
    \hspace{5mm} 
	\begin{subfigure}[t]{0.27\textwidth}
        \centering
	\includegraphics[width=0.5\textwidth]{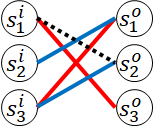}
    \end{subfigure}
    
	\caption{\small{An example of the difference between connection caching versus caching in matchings. (left) With $k=2$ max degree and $n=3$ nodes, all the connections can be cached simultaneously, but not in $2$ matchings, red and blue. (right) Bipartite example: Caching the edge $(s^i_1,s^o_2)$ (i/o for in/out) is not possible without changing the matchings (red: $\{ (s^i_1,s^o_3),(s^i_3,s^o_1) \}$; blue: $\{ (s^i_3,s^o_2),(s^i_2,s^o_1) \}$), although both $s^i_1$ and $s^o_2$ only have a single connection.}}
	\label{figure_cannot_add_link_example}
\end{figure}

The objective of an online algorithm is to minimize the number of fetched edges. We are interested in the competitive-ratio of our algorithms.

\begin{definition}[Cost, Competitive Ratio]
\label{definition_competitive_ratio}
Consider a specific caching problem. Let $A$ be an online algorithm that serves requests, and let $\sigma$ be a sequence of requests.  We denote by $A(\sigma)$ the execution of $A$ on $\sigma$, and $cost(A(\sigma))$ for the cost of $A$ when processing $\sigma$.

We denote by $OPT(\sigma)$ the optimum (offline) algorithm to serve the sequence, or simply $OPT$ when $\sigma$ is clear from the context. If there exist functions of the problem's parameters (in our case: $k$ and $n$) $c = c(n,k)$ and $d = d(n,k)$ such that $\forall \sigma: cost(A(\sigma)) \le c \cdot cost(OPT(\sigma)) + d$ then we say that $A$ is $c$-competitive. Note that $\sigma$ may be arbitrarily long, so the ``asymptotic ratio'' is indeed $c$.
\end{definition}

\begin{remark}
\label{remark_cost_of_opts}
Denote the optima for Caching in Matchings and Connection Caching by $OPT_m$ and $OPT_c$, respectively. Since $OPT_m$ implicitly 
maintains a connections cache as required by Connection Caching  (ignore the colors), then for any sequence of edge requests $\sigma$, $cost(OPT_c(\sigma)) \le cost(OPT_m(\sigma))$.
\end{remark}

\section{Caching in Matchings}
\label{section_matchings_caching}
In this section we study the problem of Caching in Matchings (Problem~\ref{problem_matching_vanilla}). We summarize the results of this section in Table~\ref{table_all_results_matchings}. We start with upper bounds (Section~\ref{sub_section_matchings_upper_bounds_bipartite}), then lower bounds (Section~\ref{sub_section_matchings_lower_bounds}). Then we study resource augmentation (Section~\ref{sub_section_matchings_upper_bounds_resource_augmentation}). Some additional discussion on randomization is deferred to the appendix (Section~\ref{subsection_appendix_randomization_thoughts}).

\begin{table}[!ht]
    \begin{center}
        \begin{tabular}{|c|c|c|c|c|c|}

        \hline
        Result & & Deterministic & Randomized & Notes \\
        \hline
        \hline


        Thm.~\ref{theorem_upper_bound_vanilla} & &  $\le nk$ & $O(n \log k)$ &  The standard scenario \\
        \hline
        \hline
        
        Thm.~\ref{theorem_lower_bound_randomized_lg_n_lg_k} & ${*}$  & . & $\Omega(\log \frac{n}{k^2 \log k} \cdot \log k)$ & .  \\
        \hline
        
        Cor.~\ref{corollary_lower_bounds_pure_n} & ${*}$ & $\Omega(max(\frac{n}{k},k))$ & $\Omega(\log n)$ & Due to Theorems~\ref{theorem_lower_bound_deterministic}+\ref{theorem_lower_bound_randomized_lg_n_lg_k} \\
        \hline

        Cor.~\ref{corollary_lower_bounds_pure_n} & ${*}$ & . & $\Omega(\epsilon \cdot \log n \cdot \log k)$ & $k = O(n^{1/2 - \epsilon})$; Due to Theorem~\ref{theorem_lower_bound_randomized_lg_n_lg_k} \\
        \hline
        \hline

        Cor.~\ref{corollary_augmented_algorithms}(\ref{item_basic}) & & $O(n^{1/2} (k/h)^{3/2})$ & $O(n^{1/2} (k/h)^{1/2} \log \frac{2k+h}{h})$ & RA: $k+h$ for $1 \le h \le k$ \\

        \hline
        Cor.~\ref{corollary_augmented_algorithms}(\ref{item_2023paper}) & ${*}$ & $O \Big (\frac{k^6 \log n}{h} \min(k, \log n) \Big)$ & $O\Big (\big (\frac{k \log k}{h} \big )^6 \log \frac{2k+h}{h} \log^9 n \Big )$ & RA: $k+h$ for $1 \le h \le k$ \\

        \hline
        Cor.~\ref{corollary_augmented_algorithms}(\ref{item_noN}) & ${*}$ & $\le k$ & $O(\log k)$ & RA: $2k-1$ \\

        \hline
        Cor.~\ref{corollary_augmented_algorithms}(\ref{item_2024paper}) & ${*}$ & . & $O ( \alpha^4 \log k)$ & \begin{tabular}{@{}c@{}}RA: $(1+O(\frac{1}{\alpha}))k$ \\ for $k \ge \Theta(\alpha^4 \log n)^{\Theta(\alpha \log \alpha)}$\end{tabular} \\
        
        \hline
        Thm.~\ref{theorem_resource_augmentation_det_even_more} & ${*}$ & $O(1 + \frac{1}{\alpha})$ & . & RA: $(2+\alpha)k$ for $\alpha > 0$ \\

        \hline

        \end{tabular}
    \end{center}
    \caption{\small{Our bounds on the competitive ratio for Caching in Matchings (Problem~\ref{problem_matching_vanilla}), for $2 \le k < n$. If $k=1$ or $k \ge n$ optimality is trivial. Results marked with ${*}$ also apply to general graphs. ``RA: $x$'' is an abbreviation for Resource-Augmentation, i.e., the algorithm has more matchings ($x$) than the optimum ($k$).}}
    \label{table_all_results_matchings}
\end{table}

\subsection{Upper Bounds for Bipartite Graphs}
\label{sub_section_matchings_upper_bounds_bipartite}
In this section we prove upper bounds on the competitive ratio of algorithms for Caching in Matchings, focusing on the non-trivial case of $2 \le k \le n-1$. Indeed, if $k=1$ there are no eviction-decisions to take so the only (lazy) algorithm is the optimal one. The other extreme case of $k \ge n$ in bipartite graphs is also easy since we can just cache the entire graph: Number the nodes $0$ to $n-1$ on each side, and use matching $i$ to store edges from node $j$ to $i+j$ modulo $n$.

Our general technique is to reduce the problem of Caching in Matchings to Connection Caching. Our algorithm, $A_m$, will run a Connection Caching algorithms $A_c$ with cache parameter $k$ to insert requested edges into the cache. Then, layered on top of $A_c$, we have the ``coloring component'' of $A_m$ that chooses the color of the new edge, and also recolors existing edges in order to produce a proper Caching in Matchings algorithm. $A_m$ can be thought of as an edge coloring algorithm in the dynamic graph settings, and in this context $A_c$ is the adversary that tells $A_m$ which edges are inserted and which are removed (with a guarantee of bounded degree $k$). As a consequence we would like to use algorithms that are efficient in terms of recoloring, to achieve the best competitive results. Unfortunately, since edge coloring of graphs of bounded degree $k$ may require $k+1$ colors by Vizing's theorem, the dynamic graph coloring literature studies this coloring problem while typically allowing more than $k$ colors. The number of extra colors ranges from $k+1$ colors \cite{MisraGries1992_EdgeColoring,Bernshteyn2022SmallRecoloring}, to $(1+\epsilon)k$ colors \cite{DynamicEdgeColoring2019,MultistepVizing2023,NibblingCycles2024}, to $2k-1$ colors \cite{BarenboimSublinearColoring,bhattacharya2018dynamic}, and sometimes even more \cite{DynamicGraphColoringBarba2019, ImprovedDynamicColoring} (the last citations actually study vertex coloring). Extra colors correspond to resource augmentation, which we study later in Section~\ref{sub_section_matchings_upper_bounds_resource_augmentation}.

\begin{remark}
\label{remark_competitiveness_of_connections}
There are known algorithms that are $k$ competitive deterministic and $O(\log k)$ competitive randomized for Connection Caching, as studied in~\cite{ConnectionCaching}.
\end{remark}

\begin{remark}
\label{remark_multiply_competitive_factors}
Due to Remark~\ref{remark_cost_of_opts} and Remark~\ref{remark_competitiveness_of_connections}, it suffices to analyze the cost ratio between $A_m$ and $A_c$. A ratio of $\rho$ implies a $\rho \cdot k$ deterministic and a $O(\rho \cdot \log k)$ randomized competitive algorithms for Caching in Matchings.
\end{remark}

\begin{theorem}
\label{theorem_upper_bound_vanilla}
There exist $nk$ deterministic and $O(n \log k)$ randomized competitive algorithms in bipartite graphs for Caching in Matchings.
\end{theorem}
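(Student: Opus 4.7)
The plan is to invoke Remark~\ref{remark_multiply_competitive_factors}: it suffices to build a matching-caching algorithm $A_m$ whose cost is at most $\rho=O(n)$ times that of an underlying Connection Caching algorithm $A_c$. For $A_c$ I would plug in the $k$-competitive deterministic and $O(\log k)$-competitive randomized algorithms of~\cite{ConnectionCaching} (Remark~\ref{remark_competitiveness_of_connections}) as a black box that already guarantees the cached subgraph has maximum degree at most $k$. The only remaining task is a coloring layer on top of $A_c$ that maintains a proper $k$-edge-coloring while paying at most $O(n)$ recolorings per insertion (recall that our cost model charges each recoloring just like a fresh fetch).

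For the coloring layer I would use an online König / Kempe-chain argument. When $A_c$ evicts an edge, $A_m$ simply forgets its color at no cost. When $A_c$ fetches an edge $(u,v)$, let $F(u),F(v)$ be the sets of colors missing at $u,v$ just before the insertion; both are nonempty since the degrees are at most $k-1$. If $F(u)\cap F(v)\ne\emptyset$, color $(u,v)$ with any common missing color for free. Otherwise pick $c_u\in F(u)\setminus F(v)$ and $c_v\in F(v)\setminus F(u)$, trace the maximal alternating path $P$ in colors $\{c_u,c_v\}$ starting from $v$ along its $c_u$-edge, swap the two colors along $P$, and then color $(u,v)$ with $c_u$. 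Each edge on $P$ contributes exactly one recoloring to $A_m$'s charge.

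The crux, and where bipartiteness is essential, is verifying (i) that the swap is legal and (ii) that $|P|=O(n)$. For (i) I would argue that $P$ cannot terminate at $u$: if it did, the last edge incident to $u$ would have color $c_u$ by a parity argument (the path alternates $c_u,c_v,c_u,\ldots$ and $u,v$ lie on opposite sides of the bipartition, so any $v$-to-$u$ alternating path has odd length), contradicting $c_u\in F(u)$; a standard check then shows no other endpoint of $P$ creates a conflict after the swap. In a non-bipartite graph this parity argument fails, which is presumably why the theorem is stated only for the bipartite case. For (ii), $P$ lives inside the union of two color classes, each of which is a matching on a bipartite graph with $n$ nodes per side and hence has at most $n$ edges, so $|P|\le 2n-1$. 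Together these give $\cost{A_m(\sigma)}\le O(n)\cdot\cost{A_c(\sigma)}$, i.e.\ $\rho=O(n)$. Plugging into Remark~\ref{remark_multiply_competitive_factors} and using $\cost{OPT_c(\sigma)}\le\cost{OPT_m(\sigma)}$ from Remark~\ref{remark_cost_of_opts} yields the claimed $O(nk)$ deterministic and $O(n\log k)$ randomized competitive ratios. The only real obstacle is the bipartiteness-based claim that the Kempe chain avoids $u$; everything else is bookkeeping.
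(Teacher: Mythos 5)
Your proposal is correct and follows essentially the same route as the paper: reduce to Connection Caching via Remark~\ref{remark_multiply_competitive_factors}, then maintain the $k$-edge-coloring by swapping colors along a bi-chromatic Kempe chain, using bipartiteness to guarantee the swap is legal. The one small difference is that you trace a single chain from $v$ and rule out a conflict at $u$ by a parity argument (giving a bound of roughly $2n$ recolorings per insertion, hence $O(nk)$), whereas the paper considers both chains $P_u$ and $P_v$, notes they are vertex-disjoint (otherwise an odd cycle), and flips the shorter one to get at most $n$ recolorings per insertion, which is what yields the stated constant-free $nk$ bound.
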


\begin{proof}
By Remark~\ref{remark_competitiveness_of_connections} and Remark~\ref{remark_multiply_competitive_factors}, it suffices to show that $A_m$ pays no more than $n$ times compared to $A_c$. Whenever an edge $(u,v)$ is requested, $A_m$  has it cached if and only if $A_c$ has it cached. Therefore when $A_m$ has a miss, so does $A_c$. To accommodate for the edge, $A_c$ ensures that $u$ and $v$ are both of degree $k-1$ before $(u,v)$ is inserted. Now consider how many edge recolorings are required from $A_m$. Nodes $u$ and $v$ each have at least one free color. If both have some common free color $c$, we are done. Otherwise, $u$ has $c_1$ free and $v$ has $c_2 \ne c_1$ free. Let $P_u$ and $P_v$ be the $(c_1,c_2)$ bi-colored paths that originate in $u$ and $v$ respectively. $P_u$ and $P_v$ must be disjoint because the graph is bipartite and does not contain odd cycles. Flipping the colors ($c_1 \leftrightarrow c_2$) for each edge on either $P_u$ or $P_v$ enables $A_m$ to insert and color $(u,v)$. since $P_u$, $P_v$ and $(u,v)$ form a simple path in a graph with $2n$ nodes, by flipping the shorter bi-colored path, $A_m$ colors at most $n$ edges when inserting $(u,v)$.
\end{proof}

\subsection{Lower Bounds}
\label{sub_section_matchings_lower_bounds}
Caching in Matchings is a generalization of caching, if we restrict the requests to  edges of a single fixed node. Observe, therefore, that any $c$-competitive online algorithm for Caching in Matchings with $2 \le k < n$ satisfies $c = \Omega(\log k)$. Moreover, if the algorithm is deterministic then $c \ge k$. The following lower bounds depend on $n$ as well as $k$. These bounds hold for the non-trivial case of $2 \le k < n$, in bipartite graphs, and therefore also hold for general graphs. Theorem~\ref{theorem_lower_bound_deterministic} is proven later in this section, the proof of Theorem~\ref{theorem_lower_bound_randomized_lg_n_lg_k} is deferred to Appendix~\ref{appendix_sub_section_matchings_lower_bounds_proofs}.

\begin{restatable}{theorem}{theoremLowerBoundDeterministic}
\label{theorem_lower_bound_deterministic} Any
\emph{deterministic} Caching in Matchings algorithm is $\Omega(\frac{n}{k})$ competitive.
\end{restatable}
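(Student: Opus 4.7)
The plan is to exhibit an adversarial request sequence on which any deterministic online algorithm for Caching in Matchings pays $\Omega(n/k)$ times what the offline optimum pays. The intuition is that the matching constraint, combined with the algorithm's lack of foresight, forces it into colorings that contain long bichromatic (Kempe) chains. A single future request at one end of such a chain then forces the algorithm either to flip the whole chain (paying one fetch per recoloring) or to evict edges that the adversary can re-request, producing the same amortized cost. In contrast, the offline optimum, knowing the whole request sequence, can pre-color its $k$ matchings so that each such attack can be absorbed by a constant number of fetches.

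Concretely, I would partition the $n$ nodes on each side into $\Theta(n/k)$ blocks of $k$ nodes each and consider a ``chain of $K_{k,k}$ gadgets'' in which consecutive blocks are glued through one or two shared boundary nodes. The first phase of the sequence requests enough edges to saturate each block at degree $k$, using up all $k$ colors in every block; this phase costs ALG and OPT the same. In the attack phase, the adversary adaptively requests an edge straddling two ends of the chain whose insertion, given the current coloring maintained by ALG, requires a Kempe-flip that traverses the shared boundary nodes across $\Omega(n/k)$ blocks. By adaptivity (the adversary examines ALG's current coloring before committing to the attack edge and to its colors), the adversary can always select a pair of colors whose bicolored subgraph glues across the entire chain, so the forced flip length is $\Omega(n/k)$.

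For OPT, the offline strategy is to choose, knowing the whole request sequence in advance, a consistent global $k$-edge-coloring of the chain gadget under which each attack edge can be placed in a common free color at its two endpoints, so the attack costs $O(1)$. A short ``reset'' subsequence between successive attacks restores the setup without letting ALG recoup its cost (each reset request is, by construction, a hit for OPT but forces ALG to undo its previous flip). Summing over $T$ attacks gives ALG cost $\Omega(T \cdot n/k)$ versus OPT cost $O(T) + O(n)$, proving the competitive ratio is $\Omega(n/k)$ asymptotically.

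The main obstacle will be proving the robustness of step two: no matter what online coloring ALG maintains, the adversary must be able to find a color pair and a boundary-crossing edge that triggers a chain of length $\Omega(n/k)$. A pigeonhole argument on color pairs across blocks together with adaptive choice of the attack edge should suffice, but the details will require careful bookkeeping, since the algorithm could try to ``localize'' its colors to short cycles within each block. Handling this case by forcing the boundary nodes to participate in many matchings simultaneously — so that any local choice propagates across blocks — is where the bulk of the argument will sit, and where the bipartiteness (absence of odd cycles, so Kempe chains are disjoint paths) will be exploited to prevent the algorithm from shortcutting the flip.
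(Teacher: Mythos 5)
Your high-level intuition matches the paper's: build a chain-of-blocks gadget with $\Theta(n/k)$ links, let an adaptive adversary exploit the determinism, and arrange things so that OPT can absorb the attacks cheaply. But the concrete mechanism you propose has two gaps that the paper's proof is specifically designed to avoid, and I don't think your plan closes them.

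First, the claim that a single adaptively-chosen attack edge forces $ALG$ to do a Kempe flip of length $\Omega(n/k)$ does not hold. When such an edge $(u,v)$ is requested, $ALG$ is free to pick \emph{any} color $c$ for it and evict the (at most two) conflicting edges incident to $u$ and $v$ in color $c$, at cost $O(1)$; nothing forces it to traverse the bichromatic path. You anticipate this and say the adversary will "re-request the evicted edges," but once you do that you are implicitly working in a model where some current set of edges must all be simultaneously cached, and the re-requests then also threaten OPT's budget. The paper makes this precise by first reducing (Lemma~\ref{lemma_new_model_is_fine_for_lower_bound}) to Problem~\ref{problem_add_remove_edges}, where a set $G$ of edges must always be a subgraph of the cached matchings; only then can one argue that a structural change (recoloring a brick or a whole segment) costs $\Omega(1)$ or $\Omega(n/k)$. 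Also, your chain of glued $K_{k,k}$'s with shared boundary nodes already violates the degree-$k$ constraint; the paper's \emph{brick} (a $k$-regular gadget minus one edge, giving two degree-$(k-1)$ endpoints) is tailored so the chain stays $k$-colorable.

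Second, the OPT analysis is not as simple as "pre-color once and handle each attack at $O(1)$." The attack position depends on $ALG$'s current state, so it moves around; a single fixed coloring will sometimes be caught paying a lot. The paper resolves this with a family of offline strategies $B_i$ (each keeps a "hole" at position $i$ of the road) and an averaging argument: $\sum_i cost(B_i) = O(m + Ns + N^2 k^2)$, which after normalizing by the $\Theta(N)$ choices of $i$ gives $cost(OPT) = O(cost(ALG)/N)$. Relatedly, the paper's adversary does not try to extract $\Omega(n/k)$ per request. Most steps are cheap "simple steps" that insert a single boundary edge between two differently colored bricks and force $ALG$ to pay only $\Omega(1)$; occasional "split steps" (a batch of $\Omega(N)$ insertions forming a $2$-road) reintroduce a color mismatch and cost $ALG$ $\Omega(N)$. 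The ratio $\Omega(n/k)$ emerges from this amortized accounting plus the $B_i$ averaging, not from a per-attack ratio. So the bulk of the work you flagged as "careful bookkeeping" is, in the paper, replaced by a different accounting scheme entirely, and the step you leave as a pigeonhole argument — always finding a long forced Kempe flip — is something the paper deliberately does not try to prove.
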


\begin{restatable}{theorem}{theoremLowerBoundRandLgNTimesLgK}
\label{theorem_lower_bound_randomized_lg_n_lg_k}
\emph{Any} Caching in Matchings algorithm is $\Omega(\log \frac{n}{k^2 \log k} \cdot \log k)$ competitive.
\end{restatable}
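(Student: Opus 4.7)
The plan is to apply Yao's minimax principle: to prove the stated lower bound it suffices to exhibit a distribution $\mathcal{D}$ over request sequences such that every deterministic online algorithm $A$ satisfies
$$\mathbb{E}_{\sigma\sim\mathcal{D}}[\cost{A(\sigma)}] \;\ge\; \Omega\!\left(\log\tfrac{n}{k^2\log k}\cdot \log k\right)\cdot \mathbb{E}_{\sigma\sim\mathcal{D}}[\cost{\mathrm{OPT}(\sigma)}].$$
The target ratio factors as a product of two logarithms, and the construction will combine two classical mechanisms, one per factor. The inner $\log k$ will come from embedding the Fiat--Karp--Luby--McGeoch--Sleator--Young (FKLMSY) randomized paging lower bound: fix a vertex $u$, pick $k{+}1$ candidate edges incident to $u$, and issue uniformly random requests from this pool, so that the $k$ matchings at $u$ act as $k$ cache slots over $k{+}1$ pages. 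The extra $\log(n/k^2\log k)$ factor is where Caching in Matchings diverges from Connection Caching, and it will come from forcing the online algorithm to recolor long bi-colored paths to accommodate each miss, using the same path-flipping mechanism as in the proof of Theorem~\ref{theorem_upper_bound_vanilla} but in reverse.

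Concretely, I plan to build the distribution in $L := \Theta(\log(n/k^2\log k))$ \emph{phases}. Phase $\ell$ introduces $\Theta(k^2\log k)$ fresh vertices on each side (so the total vertex budget stays $O(n)$) and plays a fresh FKLMSY sub-instance on those vertices. The phases are woven together by a carefully chosen bipartite topology: each edge potentially requested in phase $\ell$ shares endpoints with edges from every earlier phase. Under this topology, any valid coloring of the edges requested through phase $\ell-1$ already contains long bi-colored paths that start at the phase-$\ell$ endpoints, and so coloring a freshly missed edge at phase $\ell$ requires flipping a bi-colored path of length $\Omega(\ell)$. Per phase, the online algorithm then incurs $\Omega(\log k)$ misses from FKLMSY and pays $\Omega(\ell)$ per miss due to the flip, for a per-phase expected cost of $\Omega(\ell\log k)$. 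The offline optimum, knowing the full sequence in advance, pre-commits to a global coloring compatible with all $L$ phases (the slack of $\Theta(k^2\log k)$ vertices per phase makes this possible), so $\mathrm{OPT}$ only pays the unavoidable paging misses, i.e.\ $O(1)$ per phase. Summing, $\mathbb{E}[\cost{A}]\ge\Omega(L^2\log k)$ while $\mathbb{E}[\cost{\mathrm{OPT}}]=O(L)$, yielding the ratio $\Omega(L\log k)$, as required.

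The main obstacle I expect is proving rigorously that the online algorithm is indeed forced to flip a path of length $\Omega(\ell)$ at each phase-$\ell$ miss, \emph{regardless} of how it managed its cache and colorings in prior phases. A clever algorithm might recolor pro-actively between phases to shorten future flips, or deviate from the optimal paging policy to pay extra misses in exchange for cheaper flips. To rule this out I would condition on the algorithm's state at the start of phase $\ell$ and prove a combinatorial lemma: in any proper $k$-edge-coloring of the woven topology through phase $\ell-1$, the bi-colored subgraph induced by any pair of free colors at a phase-$\ell$ vertex pair contains an alternating path of length $\Omega(\ell)$, so the cheaper of the two path-flip options in the Vizing-style argument still costs $\Omega(\ell)$. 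Combining this structural lemma with $L$ conditionally independent applications of the FKLMSY lower bound over the fresh randomness within each phase should yield the claimed multiplicative bound.
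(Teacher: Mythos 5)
Your proposal takes a genuinely different route from the paper and, as sketched, has a gap that is unlikely to be repairable without essentially rebuilding the paper's construction.

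The paper's proof does \emph{not} embed the FKLMSY randomized paging bound anywhere. Instead, it builds ``brick'' and ``road'' gadgets (Definitions~\ref{definition_brick_short} and \ref{definition_road}) whose rigid $\oplus$-structure makes brick recolorings cost $\Theta(1)$ each (Lemma~\ref{lemma_brick_special}), chains bricks into roads, and chains roads into $k$-roads. Then it runs $h=\Theta(\log\frac{n}{k^2\log k})$ rounds of merging pairs of $k$-roads, doubling the road length each round; the $\log k$ factor comes from a ``negative-information'' merge protocol consisting of $\Theta(\log k)$ coin-flip steps that reveal partial color information and force $\Omega(k)$ road recolorings per step in expectation, while OPT pays only $O(k\log k)$ per merge. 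The $\log\frac{n}{k^2\log k}$ factor is the number of doubling rounds supported by the vertex budget. Crucially, everything happens after reducing to Problem~\ref{problem_add_remove_edges} via Lemma~\ref{lemma_new_model_is_fine_for_lower_bound}, which is what entitles the adversary to declare certain edges as ``must stay cached.'' Your sketch omits this reduction entirely, yet depends on the online algorithm maintaining a proper coloring of \emph{all} earlier-phase edges; without the reduction, the algorithm can simply evict old edges for free and never perform any long flip.

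The more serious gap is the claimed combinatorial lemma that, at phase $\ell$, \emph{every} pair of free colors at a freshly requested vertex pair yields a bi-colored alternating path of length $\Omega(\ell)$ under \emph{every} proper $k$-edge-coloring of the prior topology. You flag this yourself as the main obstacle, but it is not a technicality: such a gadget would be remarkably strong, and nothing in your sketch suggests how to construct it. A bi-colored $(c_1,c_2)$-subgraph is a disjoint union of paths and even cycles, and an adaptive algorithm is free to color earlier edges so that the path starting at a phase-$\ell$ endpoint terminates after $O(1)$ steps; nothing about ``sharing endpoints with edges from every earlier phase'' prevents this. Moreover, if such a gadget did exist, your accounting would not stop at $L = \Theta(\log\frac{n}{k^2\log k})$: with $\Theta(k^2\log k)$ fresh vertices per phase and flips that reuse old vertices, you could take $L = \Theta(n/(k^2\log k))$ phases and obtain a ratio of $\Omega(L\log k) = \Omega(n/k^2)$, polynomially stronger than the theorem and in tension with the paper's $O(n\log k)$ upper bound for small $k$. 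The fact that your sketch, taken at face value, would overshoot the claimed bound by a polynomial margin is a strong signal that the forced-long-flip step cannot hold as stated; the paper sidesteps this entirely by deriving the $\log$ factor from the \emph{number of doubling rounds} rather than from the \emph{length of individual flips}.
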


\begin{corollary}
\label{corollary_lower_bounds_pure_n}
\emph{Any} online algorithm for Caching in Matchings with $2 \le k < n$ is $\Omega(\log n)$ competitive. Moreover, if $k = O(n^{1/2 - \epsilon})$ for some $\epsilon > 0$, we get that any online algorithm for Caching in Matchings is  $\Omega(\epsilon \cdot \log n \cdot \log k)$ competitive. If the algorithm is deterministic then the competitive ratio is $\Omega(\max\{\frac{n}{k},k\})$.
\end{corollary}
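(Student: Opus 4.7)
The plan is to derive all three claims from Theorems~\ref{theorem_lower_bound_deterministic} and~\ref{theorem_lower_bound_randomized_lg_n_lg_k}, supplemented by the standard paging lower bounds already noted at the start of Section~\ref{sub_section_matchings_lower_bounds}: restricting the adversary to request only edges incident to a single fixed node reduces Caching in Matchings to classical paging with cache size $k$, so any deterministic algorithm is $\Omega(k)$ competitive and any randomized algorithm is $\Omega(\log k)$ competitive. The deterministic statement then follows immediately, since Theorem~\ref{theorem_lower_bound_deterministic} contributes $\Omega(n/k)$ and the paging reduction contributes $\Omega(k)$, so the larger of the two is $\Omega(\max\{n/k,k\})$.

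For the randomized $\Omega(\log n)$ bound valid for every $2 \le k < n$, I would split on the size of $k$ in order to handle the regime where the expression $\log(n/(k^2 \log k))$ in Theorem~\ref{theorem_lower_bound_randomized_lg_n_lg_k} becomes non-positive. When $k \ge n^{1/4}$ one has $\log k \ge \tfrac{1}{4}\log n$, so the standard paging lower bound $\Omega(\log k)$ alone already yields $\Omega(\log n)$. When $k < n^{1/4}$ one has $k^2 \log k \le \tfrac{1}{4}\sqrt{n}\log n$, whence $\log(n/(k^2 \log k)) \ge \tfrac{1}{2}\log n - O(\log\log n) = \Omega(\log n)$; combined with $\log k \ge 1$ (which holds since $k \ge 2$), Theorem~\ref{theorem_lower_bound_randomized_lg_n_lg_k} delivers the required $\Omega(\log n)$ bound.

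For the sharper $\Omega(\epsilon \log n \cdot \log k)$ bound under the assumption $k = O(n^{1/2-\epsilon})$, a direct substitution into Theorem~\ref{theorem_lower_bound_randomized_lg_n_lg_k} suffices. One obtains $k^2 \log k = O(n^{1-2\epsilon}\log n)$, so $n/(k^2 \log k) = \Omega(n^{2\epsilon}/\log n)$, and therefore $\log(n/(k^2 \log k)) \ge 2\epsilon \log n - O(\log\log n) = \Omega(\epsilon \log n)$ once $n$ is large enough in terms of $\epsilon$ (the $\log\log n$ term is absorbed into the constant). Multiplying by the $\log k$ factor in the theorem gives the claim.

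There is no real obstacle in this proof; it is essentially bookkeeping on top of the two theorems. The only mild care lies in the case split for the $\Omega(\log n)$ claim, where one must recognise that Theorem~\ref{theorem_lower_bound_randomized_lg_n_lg_k} is informative only while $k$ stays well below $\sqrt{n}$, and that the complementary regime is already covered by the trivial $\Omega(\log k)$ paging lower bound.
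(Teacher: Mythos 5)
Your proposal is correct and follows essentially the same route as the paper: the deterministic bound is obtained by taking the maximum of Theorem~\ref{theorem_lower_bound_deterministic} and the paging observation, the randomized $\Omega(\log n)$ bound by a case split on the size of $k$ (you split at $n^{1/4}$, the paper at $n^{1/3}$; either threshold works), and the $\Omega(\epsilon \log n \cdot \log k)$ bound by direct substitution of $k = O(n^{1/2-\epsilon})$ into Theorem~\ref{theorem_lower_bound_randomized_lg_n_lg_k}. The calculations match up to constants, so no further comment is needed.
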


\begin{proof}
The deterministic claim follows from the initial observation and Theorem~\ref{theorem_lower_bound_deterministic}. In the general case (randomized), we get $\Omega(\log n)$ from the maximum between the observation (when $k \ge n^{1/3}$) and Theorem~\ref{theorem_lower_bound_randomized_lg_n_lg_k} (when $k < n^{1/3}$). The 
$\Omega(\epsilon \cdot \log n \cdot \log k)$  bound 
 follows from Theorem~\ref{theorem_lower_bound_randomized_lg_n_lg_k}: If $k \le c \cdot n^{1/2-\epsilon}$ for some constant $c$ then $\log \frac{n}{k^2 \log k} > \log \frac{n^{2\epsilon}}{c^2 \log (cn)} = 2\epsilon \log n - 2\log c - \log \log (cn) = \Omega(\epsilon \cdot \log n)$.
\end{proof}

We prove the lower bounds in a setting that is closer to dynamic graph coloring. Specifically, we define Problem~\ref{problem_add_remove_edges} below, where we control which edges must be cached both by the algorithm and the optimum. We prove (Lemma~\ref{lemma_new_model_is_fine_for_lower_bound} below, proven in Appendix~\ref{appendix_sub_section_matchings_lower_bounds_proofs}) that lower bounds for algorithms for Problem~\ref{problem_add_remove_edges} imply lower bounds for Caching in Matchings, and then study lower bounds for Problem~\ref{problem_add_remove_edges}.

\begin{problem}
\label{problem_add_remove_edges}
Given a graph with $n$ vertices, we get a sequence of actions that define a subset of edges at any time. Each action either adds a missing edge or deletes an existing edge. We are guaranteed that at any point in time the graph induced by existing edges, denote it (or the set of edges) by $G$, has a proper $k$-edge-coloring. An algorithm, online or $OPT$, must maintain $k$ matchings, denote their union by $M$, such that $G$ is a subgraph of $M$. For every edge that is added to a matching, the algorithm pays $1$.
\end{problem}

\looseness=-1
Note that Problem~\ref{problem_add_remove_edges} is similar but not equivalent to dynamic edge coloring. On one hand a dynamic edge coloring algorithm that recolors $O(C)$ edges per update is not necessarily $O(C)$  competitive for Problem~\ref{problem_add_remove_edges}. The reason for this is that we allow $M$ to contain $G$. By maintaining an edge in $M$ we can avoid paying for it when it is inserted again. For example, in the proof of  Theorem~\ref{theorem_lower_bound_deterministic} an algorithm may do $O(1)$ worst-case recolorings per step, but its competitive ratio is $\Omega(\frac{n}{k})$ since $OPT$ stores in $M$ extra edges that this online algorithm keeps paying for. On the other hand, an algorithm that is $O(C)$ competitive for Problem~\ref{problem_add_remove_edges} does not give a dynamic edge coloring algorithm that recolors $O(C)$ edges per update, even amortized, because it could be that both $OPT$ and the algorithm pay a lot per edge update on some sequence, and while the ratio is $O(C)$, the absolute cost is large.

\begin{restatable}{lemma}{lemmaReductionToProblemStaticEdges}
\label{lemma_new_model_is_fine_for_lower_bound}
A lower bound of $C$ on the competitive ratio of an online algorithm for Problem~\ref{problem_add_remove_edges} implies a lower bound of $C$ on the competitive ratio of an online algorithm for Caching in Matchings.
\end{restatable}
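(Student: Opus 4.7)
The plan is to set up a cost-preserving reduction from Problem~\ref{problem_add_remove_edges} to Caching in Matchings. Given a sequence $\sigma$ for Problem~\ref{problem_add_remove_edges}, I would form the Caching in Matchings request sequence $\sigma'$ by taking the subsequence of ``add'' actions of $\sigma$, rewriting each ``add $e$'' as a request for $e$ in Caching in Matchings and dropping every ``remove'' action entirely.

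For the offline side, I would show $\cost{OPT_m(\sigma')} \le \cost{OPT_P(\sigma)}$, where $OPT_P$ denotes the offline optimum for Problem~\ref{problem_add_remove_edges} and $OPT_m$ is as in Remark~\ref{remark_cost_of_opts}. Any feasible solution for $\sigma$ under Problem~\ref{problem_add_remove_edges} is automatically a feasible solution for $\sigma'$ under Caching in Matchings with exactly the same number of matching insertions: after each ``add $e$'' the invariant $G \subseteq M$ forces $e \in M$, so the corresponding request in $\sigma'$ is served, and both models charge $1$ per insertion into a matching.

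For the online side, given any online algorithm $A'$ for Caching in Matchings, I would construct an online algorithm $A$ for Problem~\ref{problem_add_remove_edges} with $\cost{A(\sigma)} \le \cost{A'(\sigma')}$. The construction feeds each add of $\sigma$ to $A'$ as a request, mirrors $A'$'s matching updates, and does nothing on ``remove'' actions. The main obstacle is enforcing the invariant $G \subseteq M$ required by Problem~\ref{problem_add_remove_edges}, because a plain $A'$ — being oblivious to removes — may evict an edge that is still live in $G$. The hard part will be showing that $A$, which \emph{does} see the removes, can use this extra information to patch $A'$'s decisions without inflating the cost: whenever $A'$ would evict a live edge $f \in G$, $A$ substitutes a dead incident edge as the victim in the same matching if one is available, and otherwise performs a short bichromatic-path reroute analogous to the one used in the proof of Theorem~\ref{theorem_upper_bound_vanilla} (which is valid because $G$ is $k$-edge-colorable at all times). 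A careful amortized charging argument then absorbs the extra insertions made by $A$ into the insertions that $A'$ itself performs, yielding $\cost{A(\sigma)} \le \cost{A'(\sigma')}$.

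Combining the two steps gives $\cost{A'(\sigma')} \ge \cost{A(\sigma)} \ge C \cdot \cost{OPT_P(\sigma)} \ge C \cdot \cost{OPT_m(\sigma')}$, so $A'$ is at least $C$-competitive on $\sigma'$, which is precisely the lower bound claimed by the lemma.
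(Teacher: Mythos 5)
Your offline direction is fine: any feasible solution of Problem~\ref{problem_add_remove_edges} on $\sigma$ does serve the ``adds-only'' sequence $\sigma'$ at the same cost, so $\cost{\optProbMatchings(\sigma')}\le\cost{\optProbFixedEdges(\sigma)}$. The problem is the online direction, and specifically the cost claim $\cost{A(\sigma)}\le\cost{A'(\sigma')}$, which is where the whole proof has to live.

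A Caching-in-Matchings algorithm $A'$ run on the adds-only sequence $\sigma'$ is only obligated to have the \emph{most recent} request in its cache. Nothing forces it to keep the rest of $G$ cached, and it is entirely possible for $A'$ to evict many still-live edges of $G$ (for example, a well-designed $A'$ might aggressively evict edges it guesses will not be requested again, which costs it nothing on $\sigma'$). Your patching scheme — substitute a dead edge, or run a bichromatic-path reroute — then has to re-fetch/recolor an amount proportional to how much of $G$ the algorithm $A'$ chose to drop, and this quantity need not be bounded by what $A'$ pays on $\sigma'$. There is also a secondary issue: once you start patching, the state of $A$ diverges from the state of $A'$, so ``mirror $A'$'s decisions'' is no longer well-defined for subsequent requests; the patches compound and the ``careful amortized charging'' you invoke has nothing obvious to charge against.

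The paper avoids exactly this by not simply dropping the removes. For each edge inserted in $\tau$, it appends to $\sigma$ a batch of $r=nk$ rounds, each consisting of \emph{all} edges currently in $G$. This coerces the Caching-in-Matchings algorithm: either at some point within the batch its cache properly colors all of $G$ (so $\algProbFixedEdges$ can jump to that state), or it missed on at least one edge in every round, meaning it paid at least $r \ge 2\cdot|M|$ in the batch, which is enough budget for $\algProbFixedEdges$ to completely rebuild its own $k$-colored cache of $G$ and then resync to $\algProbMatchings$'s state. On the other side, $\optProbMatchings$ can simulate $\optProbFixedEdges$ on the batched sequence at no extra cost because $\optProbFixedEdges$ always keeps $G\subseteq M$. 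The batching trick, which gives the algorithm ``time'' (or equivalently, forces it to pay) to become consistent with $G$, is the missing idea in your proposal; without it, the inequality $\cost{A(\sigma)}\le\cost{A'(\sigma')}$ does not go through.
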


We now focus on deriving lower bounds for  Problem~\ref{problem_add_remove_edges}.
We define a \emph{road} gadget (Definition~\ref{definition_road}) which is a connected component with a large diameter, that is also very restricted in the way it can be colored. A road is constructed from \emph{brick} sub-gadgets (Definition~\ref{definition_brick_short}), each of size $\Theta(k)$ nodes and $\Theta(k^2)$ edges. By connecting $r \ge 2$ roads together we get the \emph{$r$-road} gadget, whose structure enforces those roads to be colored in a distinct and different way.

\begin{definition}[Brick]
\label{definition_brick_short}
A \emph{colorless-brick} is a union of $k$ perfect matchings in a bipartite graph with the following structure. Each side has $w$ nodes where $w$ is the unique power of $2$ that satisfies $\frac{w}{2} < k \le w$. Number the nodes on each side  $0,\ldots,w-1$, and number the colors  $0,\ldots,k-1$. The matching of color $c$ matches node $i$ with node $i \oplus c$ where $\oplus$ is the bitwise exclusive-or. See Figure~\ref{figure_brick_and_road-brick} for an example. Note that every color $c$ indeed defines a matching that is in fact a permutation of order $2$, and that $v \oplus c \ne v \oplus c'$ for any two colors $c \ne c'$ so the matchings are all disjoint. 
When we remove an edge from a colorless-brick, we get a \emph{brick} whose color is associated with the color of the non-perfect matching. The two nodes of degree $k-1$ are the \emph{endpoints} of the brick.
\end{definition}

\begin{figure}[ht]
	\centering
	\begin{subfigure}[t]{.14\textwidth}
        \centering
        \includegraphics[width=\textwidth]{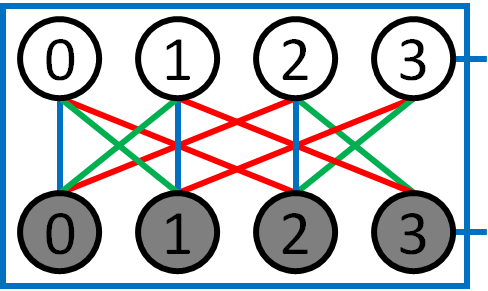}
        \caption{Brick}
        \label{figure_brick_and_road-brick} 
    \end{subfigure}
    \hspace{5mm} 
    \begin{subfigure}[t]{.21\textwidth}
        \centering
        \includegraphics[width=\textwidth]{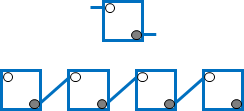}
        \caption{Road}
        \label{figure_brick_and_road-road}
    \end{subfigure}
    \hspace{5mm} 
    \begin{subfigure}[t]{.22\textwidth}
        \centering
        \includegraphics[width=\textwidth]{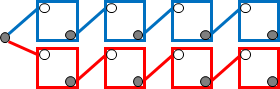}
        \caption{$r$-Road for $r=2$}
        \label{figure_brick_and_road-r_road}
    \end{subfigure}

	\caption{\small{Visualization of a \emph{brick} (Definitions~\ref{definition_brick_short}), a \emph{road} and an \emph{$r$-road} (Definition~\ref{definition_road}). (a) A brick for $k=3$ ($w=4$). The number of each node is written, and the colors are as follows: blue (0), green (1), red (2). Thus, for example, $1 \oplus red = 3$. We removed a single edge, blue $(3,3)$, thus the brick is blue. (b) A schematic way to draw a brick (top) and a road of length $4$ (bottom) which is a  chain of bricks connected to each other by their endpoints. The color of a road is well-defined by the color of its bricks. (c) An $r$-road for $r=2$, of length $4$. The node that connects to both roads is its hub. Nodes are colored by gray and white according to their side in the bipartite graph.}}
	\label{figure_brick_and_road}
\end{figure}

\begin{definition}[Road, $r$-road]
\label{definition_road}
A \emph{road} of length $d \ge 1$ is an edge colored graph obtained by connecting a sequence of $d$ bricks. Each brick is connected by an edge to the next brick in the sequence. The edge connecting two consecutive bricks is adjacent to an endpoint of each brick. Note that the color of two connected bricks must be the same since they
must agree on their free color which is the color of the edge which connects them. Therefore, we define the \emph{color of a road} to be the color of its bricks. A road has two \emph{ends}, which are the endpoints of its first and last bricks.
We also refer to $r$ ($2 \le r \le k$) roads of the same length $d$ that are all connected to a single shared node as an \emph{$r$-road} of length $d$. The shared node is its \emph{hub}. The edges of the hub all have different colors, therefore all the roads of an $r$-road have different colors. See Figure~\ref{figure_brick_and_road} for examples.
\end{definition}

\begin{lemma}
\label{lemma_brick_special}
Given a brick $B$ of  color  $c_1$, and a new color $c_2 \ne c_1$, it is always possible to recolor $3$ edges to change the color of $B$ to $c_2$.
\end{lemma}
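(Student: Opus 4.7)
My plan is to exploit the rigid XOR structure of the brick via a single Kempe chain recoloring. Denote the missing edge of $B$ (which has color $c_1$) by $(L_u, R_{u \oplus c_1})$, so that $L_u$ and $R_{u \oplus c_1}$ are the two endpoints of $B$, i.e.\ the nodes of degree $k-1$, each lacking a color-$c_1$ incident edge. I will focus on the bichromatic subgraph of the underlying colorless brick induced by matchings $c_1$ and $c_2$.

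Because every vertex of the colorless brick meets each matching exactly once, this bichromatic subgraph is $2$-regular, hence a disjoint union of cycles. Tracing the XOR rule starting from $L_u$ along color $c_1$ lands at $R_{u \oplus c_1}$, then along $c_2$ at $L_{u \oplus c_1 \oplus c_2}$, then along $c_1$ at $R_{u \oplus c_2}$, and finally along $c_2$ back to $L_u$. So every connected component of this bichromatic subgraph is a $4$-cycle, and in the brick the component containing the missing edge becomes a $3$-edge path $L_u - R_{u \oplus c_2} - L_{u \oplus c_1 \oplus c_2} - R_{u \oplus c_1}$ whose colors alternate $c_2, c_1, c_2$.

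I will then perform a Kempe swap on this path: recolor the two outer edges from $c_2$ to $c_1$ and the middle edge from $c_1$ to $c_2$. That is exactly $3$ edge recolorings. A short accounting argument closes the proof. Matching $c_1$ gains two edges and loses one, and the two new $c_1$-edges are precisely the ones incident to the previously uncovered $L_u$ and $R_{u \oplus c_1}$; so matching $c_1$ becomes a perfect matching. Matching $c_2$ loses two edges and gains one, and the lost edges are precisely those incident to $L_u$ and $R_{u \oplus c_1}$; so now matching $c_2$ is the unique non-perfect one, with $L_u$ and $R_{u \oplus c_1}$ as the two degree-deficient endpoints. Thus the brick's color is now $c_2$.

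The only subtlety, rather than a real obstacle, is the standard Kempe safety check. At the two interior nodes $R_{u \oplus c_2}$ and $L_{u \oplus c_1 \oplus c_2}$, the swap just exchanges the roles of their unique $c_1$- and $c_2$-incident edges, so properness is preserved. At the endpoints $L_u$ and $R_{u \oplus c_1}$ no conflict can arise because they had no $c_1$-incident edge before the swap. The fact that $\oplus$ is an involution is what forces the bichromatic components to be $4$-cycles, which is in turn what makes the path in the brick have length exactly $3$; any smaller number of recolorings would leave a monochromatic conflict at an interior node (e.g., recoloring only the two $c_2$-edges at the endpoints to $c_1$ would give the interior nodes two $c_1$-incident edges).
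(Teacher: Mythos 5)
Your proof is correct and takes essentially the same route as the paper: both identify the bichromatic $(c_1,c_2)$ component containing the missing edge as a $3$-edge path (the paper traces it directly via the $\oplus$ rule as $u\oplus c_1 \to u\oplus c_1\oplus c_2 \to u\oplus c_2 \to u$) and perform a Kempe swap along it. Your write-up merely adds extra scaffolding — observing the $4$-cycle structure of the full bichromatic subgraph, and spelling out the properness check at the interior and endpoint vertices — that the paper leaves implicit.
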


Note that when we recolor $B$, it no longer satisfies the $\oplus$-property of Definition~\ref{definition_brick_short}, but for convenience we still consider it as a brick.  
This would not affect our arguments below (by more than a constant factor) since we will make sure to always return to the original coloring (undo) before recoloring again.

\begin{proof}
Denote by $u$ one endpoint of the brick. By definition of the matching scheme, the other endpoint is $u \oplus c_1$, and when we restrict the graph to edges of colors $c_1$ and $c_2$, we find that the path between $u$ and $u \oplus c_1$ is of length $3$: $u \oplus c_1 \to u \oplus c_1 \oplus c_2 \to u \oplus c_2 \to u$. Therefore, it suffices to flip the color of these three edges from $c_2$ to $c_1$ and vice versa.
\end{proof}

In the remainder of this section we prove the deterministic lower bound, and a simpler but weaker version of the randomized lower bound. The more involved randomized lower bound is proven in Appendix~\ref{appendix_sub_section_matchings_lower_bounds_proofs}, using the same gadgets.

\begin{proof}[Proof of Theorem~\ref{theorem_lower_bound_deterministic}]
We prove the lower bound for Problem~\ref{problem_add_remove_edges}. Then the theorem follows by Lemma~\ref{lemma_new_model_is_fine_for_lower_bound}. We present an adversarial construction against a given algorithm $ALG$.

We begin by setting aside one special node $u$ to serve as a $2$-road hub, and divide the rest of the vertices into bricks. We construct from these bricks the longest possible road, of length $N = \Theta(\frac{n}{k})$. We number the bricks in order, from $1$ to $N$, and denote the edge between bricks $i$ and $i+1$ by $(i,i+1)$. Let $L = \floor{\frac{N}{3}}$. Initially we insert all the edges of all the bricks, without the edges connecting the bricks. These edges are never deleted. Our sequence has as many steps as we like as follows, based on the state of $ALG$, see also Figure~\ref{figure_lower_bound_det_visualized}:

\begin{enumerate}
    \item Simple step: If there exist consecutive bricks $i$ and $i+1$ of different colors, we insert the edge $(i,i+1)$. This forces $ALG$ to recolor at least one of the bricks and pay $\Omega(1)$ (it could pay more if it recolors more bricks or does other actions). We then delete $(i,i+1)$.

    \item Split step: Otherwise, all the bricks of $ALG$ have the same color. We insert all the edges between bricks $1$ through to $L$, and between $N+1-L$ through to $N$. We also insert  edges from $u$ to bricks $1$ and $N+1-L$. These insertions construct a $2$-road with $u$ as its hub, that guarantees different colors for bricks $1$ to $L$ compared to bricks $N+1-L$ to $N$. $ALG$ must recolor at least $L = \Omega(N)$ bricks. We then delete the edges that we inserted.
\end{enumerate}

\begin{figure}[ht]
	\centering
    \includegraphics[width=0.93\textwidth]{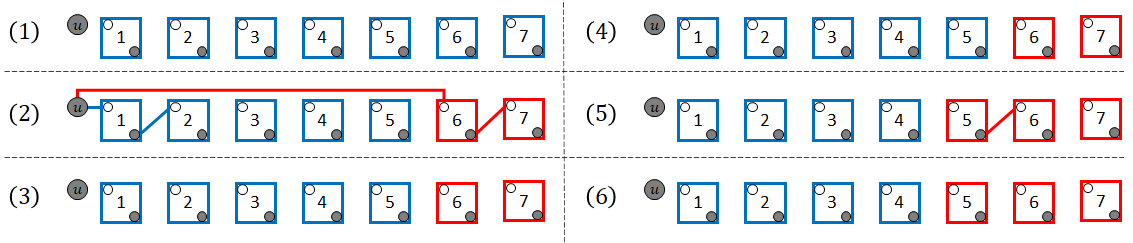}
	
	\caption{\small{Visualization for the proof of Theorem~\ref{theorem_lower_bound_deterministic}, for $N=7$, $L=\floor{\frac{N}{3}}=2$. (1)-(3) A \emph{split step} temporarily creates a $2$-road of length $L$ out of the first and last bricks, with $u$ as the hub, to guarantee consecutive bricks with different colors. (4)-(6) A \emph{simple step} finds two consecutive bricks with different color (here: $5$ and $6$), and inserts temporarily the edge between them to enforce recoloring one of them.}}
	\label{figure_lower_bound_det_visualized}
\end{figure}

In simple terms, we maintain a hole in the road which is where the color of the bricks changes (there could be multiple holes). In every step we request this hole, and once the hole disappears, the split action re-introduces a hole back near the middle of the road.\footnote{This idea is similar to the way one can prove a deterministic lower bound for $k$-server~\cite{Borodin_book}, by always requesting a server-less location. The analogy stops here, since in our case sometimes there is no hole.}

Now let us analyze the costs. If the sequence contains $m$ simple steps and $s$ split steps, then $cost(ALG) = \Omega(m + s \cdot N)$. For $OPT$, we define a family of strategies $B_i$ for $L < i < N+1-L$, to bound its cost. We define $B_i$ to store all the edges of all the bricks, all the edges connecting them, and the edge that connects $u$ to the first brick, except for the edges that connect brick $i$ to its neighbours, paying an initial $O(N \cdot k^2)$ cost.  $B_i$ colors all the bricks from $1$ to $i$ in one color, and all the bricks from $i+1$ to $N$ in another color. Whenever a simple step happens in $(i-1,i)$ or in $(i,i+1)$, $B_i$ simply recolors brick $i$ to have the same color of the neighbour it connects to, and also inserts the connecting edge. Simple steps at other locations do not affect $B_i$. When a split step happens, $B_i$ pays exactly $2$: It inserts the edge that connects $u$ to brick $N+1-L$ instead of the edge $(N-L,N+1-L)$. When the split step ends, it undoes the change, re-inserting $(N-L,N+1-L)$ instead of the edge of $u$. Since $L < i < N+1-L$, $B_i$ does not have to recolor any other edge.

Thus, if we denote by $m_i$ the number of simple steps that insert the edge $(i,i+1)$, then $cost(OPT) \le cost(B_i) = O(m_i + m_{i-1} + s + N \cdot k^2)$. Now:
$\frac{N}{3} \cdot cost(OPT) \le (N-2L) \cdot cost(OPT) \le \sum_{i=L+1}^{N-L}{cost(B_i)} = O(m + N \cdot s + N^2 k^2)$.
Note that $N^2 k^2 = O(n^2)$. Thus by extending the sequence such that $m+s = \Omega(n^2)$, we get: $\frac{N}{3} \cdot cost(OPT) = O(m + N \cdot s) = O(cost(ALG))$. Therefore, $\frac{cost(ALG)}{cost(OPT)} = \Omega(N)$. Recall that $N = \Theta(\frac{n}{k})$, and the claim follows.
\end{proof}

When proving the deterministic lower bound (Theorem~\ref{theorem_lower_bound_deterministic}) we heavily relied on determinism to know where to find neighbouring bricks of different colors. In the randomized case, we may not know where colors mismatch. Instead, we use a different and weaker construction for the randomized case. Relying on Yao's principle~\cite{YaoPrinciple} (see also \cite{Borodin_book}), we define a distribution over sequences that is hard for any algorithm.

The following Lemma~\ref{lemma_lower_bound_randomized_lg_n_over_k} is a weaker but also simpler version of Theorem~\ref{theorem_lower_bound_randomized_lg_n_lg_k} (proven in Appendix~\ref{appendix_sub_section_matchings_lower_bounds_proofs}). Proving this lemma demonstrates our main technique. 

\begin{lemma}
\label{lemma_lower_bound_randomized_lg_n_over_k}
\emph{Any} Caching in Matchings algorithm with $2 \le k < n$  matchings is $\Omega(\log \frac{n}{k})$ competitive.
\end{lemma}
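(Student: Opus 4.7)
My plan is to prove the lemma via Yao's minimax principle, following the template of the classical randomized paging lower bound. By Lemma~\ref{lemma_new_model_is_fine_for_lower_bound} it suffices to work in Problem~\ref{problem_add_remove_edges}: I will exhibit a probability distribution $\mathcal{D}$ over request sequences such that, against sequences drawn from $\mathcal{D}$, the expected cost of any deterministic algorithm is at least $\Omega(\log(n/k))$ times the expected cost of the offline optimum.

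The construction reuses the road gadget from the proof of Theorem~\ref{theorem_lower_bound_deterministic}: a single road of length $L = \Theta(n/k)$ bricks, plus a constant number of auxiliary hub nodes used to force color boundaries via split-step requests. Initially all brick edges are present and no connecting edges belong to $G$. The random sequence proceeds in $T$ identical phases. In each phase the adversary draws independent random bits and issues a sequence of split-step requests whose positions depend on those bits, then uses undo requests to restore the road to its baseline configuration before the next phase begins.

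The key idea is that each phase should satisfy two simultaneous bounds. First, the expected per-phase cost of any deterministic algorithm should be $\Omega(L)$, because without knowing the random bits in advance it cannot prepare its coloring to match the phase's forced configuration, and at each split-step it must on expectation recolor a constant fraction of the affected sub-road. Second, the amortized per-phase cost of the offline optimum should be $O(L/\log L)$, because $OPT$, with full foresight of the random bits across all $T$ phases, can plan its color changes to coincide with phase boundaries and skip most intermediate reconfigurations. The per-phase ratio is then $\Omega(\log L)$, and a standard application of Yao's principle gives the $\Omega(\log(n/k))$ competitive ratio.

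The main obstacle is the joint design of the phase structure and the matching offline strategy so that both bounds hold simultaneously. Inspired by the $(k{+}1)$-page, cache-$k$ paging adversary, I expect the phase to randomly permute a set of $\Theta(L)$ ``targets'' along the road, expressed as random split positions forced by the hub nodes, so that the online algorithm faces a harmonic-style pattern of recolorings. The offline strategy must group color changes so each brick is recolored $O(1)$ times per phase on average: by virtue of knowing the random permutation, $OPT$ anticipates the last target in each phase and batches its reconfigurations accordingly. The core technical step is a potential-function or charging argument that lower-bounds the online cost and upper-bounds the offline cost per phase; this is the simpler form of the approach fully developed in the proof of Theorem~\ref{theorem_lower_bound_randomized_lg_n_lg_k} in the appendix, specialized to the regime that gives only the $\log(n/k)$ factor without the additional $\log k$.
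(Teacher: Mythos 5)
Your high-level framework (Yao's principle, reduction to Problem~\ref{problem_add_remove_edges}, phases) matches the paper, but the concrete construction and the source of the $\log$ factor do not, and the steps you flag as ``to be done'' are exactly where the difficulty lives.

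The paper does \emph{not} use one long road of length $L=\Theta(n/k)$ with random split positions. It partitions the nodes into $2^h$ tiny $r$-roads (taking $r=2$, $d_0=1$), and each phase consists of $h=\Theta(\log(n/k))$ \emph{rounds}: in round $i$, the $2^{h-i}$ surviving $r$-roads of length $d_0\cdot 2^{i-1}$ are paired up and merged into $r$-roads of twice the length, by cutting one $r$-road off its hub and extending the roads of the other in a uniformly random assignment. For a single merge of length-$d$ $r$-roads, $OPT$ pays only $O(r)$ (hub edges) while any algorithm pays $\Omega(r\cdot d)$ recolorings in expectation because, by a coupon-collector-style count, a constant fraction of the extensions have mismatched colors. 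Summing over rounds, $OPT$ pays $O(2^h\cdot r\cdot d_0)$ per phase and any algorithm pays $\Omega(h\cdot 2^h\cdot r\cdot d_0)$; the ratio is $\Omega(h)$. The logarithmic factor comes from the \emph{depth} of the merge hierarchy, with $OPT$ and the online algorithm both paying $\Theta(\text{number of bricks})$ at the start of each phase but the online algorithm paying an extra factor $h$ inside it.

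Your proposed cost structure is qualitatively different: you want the online algorithm to pay $\Omega(L)$ per phase and $OPT$ to pay only $O(L/\log L)$, via some amortization across phases in the style of the classical paging lower bound. That is the gap. First, Problem~\ref{problem_add_remove_edges} requires $OPT$ to maintain a proper coloring containing $G$ at all times, so it cannot defer recolorings across phases the way a paging $OPT$ can defer evictions; if the phase forces $\Theta(L)$ bricks to change color, $OPT$ pays $\Theta(L)$ for that phase, and against a single random split on a single road $OPT$ gains no structural foresight advantage, so the per-phase ratio would be $O(1)$. Second, the ``harmonic-style pattern of recolorings'' you gesture at is exactly what the hierarchical doubling of $r$-roads makes precise, but you have not designed that hierarchy, and you explicitly defer the charging/potential argument that would prove both bounds. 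Without the hierarchy (or an equivalent mechanism that makes the online/offline gap per phase grow logarithmically), the construction does not obviously yield the claimed bound.

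In short: the proposal identifies the right framework but leaves the load-bearing construction unbuilt, and the plan it sketches (single road, paging-style amortization of $OPT$'s cost) would not, as stated, produce the $\log(n/k)$ separation; you need the binary-tree-of-merges gadget (or something equally hierarchical) to generate the logarithm.
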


\begin{proof}
We prove the lower bound for Problem~\ref{problem_add_remove_edges}. Then the theorem follows by Lemma~\ref{lemma_new_model_is_fine_for_lower_bound}. For convenience, since any action of fetching or recoloring can be lazily postponed to the next request for adding an edge to $G$, we assume that the algorithm does nothing else when an edge is removed.

Given a fixed $k$, we divide the nodes to $r$-roads of length $d_0$. We aim to have $2^h$ $r$-roads in total, for $h$ as large as possible. We can bound $h$ from below by noting that a brick requires at most $4k$ nodes, a road of length $d_0$ requires at most $4k \cdot d_0$ nodes ($d_0$ bricks), and an $r$-road requires at most $4k \cdot d_0 \cdot r + 1$ nodes ($r$ roads and a hub). So we get that $h \ge \floor{\log \frac{n}{4k \cdot d_0 \cdot r + 1}}$. Simplified, we get $h = \Omega(\log \frac{n}{k \cdot d_0 \cdot r})$. Eventually we will choose $r=2$ and $d_0=1$ to get $h = \Omega(\log \frac{n}{k})$.

We construct the distribution over request sequences  in phases. A phase begins with $2^h$ $r$-roads. Then, during each phase we have $h$ rounds, numbered from $i=1$ to $i=h$. In each round we pair the $r$-roads, and merge each pair into a single twice-longer $r$-road. Note that in round $i$ there are $2^{h-i}$ pairs of $r$-roads whose length is $d_i = d_0 \cdot 2^{i-1}$. Once we get to a final single $r$-road of length $d_h$, we delete the edges that were used to connect the roads of length $d_0$, and insert back hub edges to re-form the initial $r$-roads of length $d_0$. Then a new phase begins.

We explain later how exactly a pair of $r$-roads of length $d_i$ are merged, for now just assume that $OPT$ pays $O(r)$ and that any algorithm pays $\Omega(r \cdot d_i)$ in expectation for such merge, and let us analyze the competitive ratio. When a phase begins, $OPT$ can choose a consistent color for each road of the $r$-roads such that no further recoloring is necessary during this phase, at a cost of $O(2^h \cdot r \cdot d_0)$ by recoloring $O(1)$ edges in each brick (according to Lemma~\ref{lemma_brick_special}) and each edge that connects to a brick to the desired color. Then, throughout the rounds it pays additional $\sum_{i=1}^{h}{2^{h-i} \cdot O(r)} = O(2^h \cdot r)$. Finally, when a phase ends it pays $O(2^h \cdot r)$ more to re-attach hubs when re-creating the initial $r$-roads of length $d_0$, and $O(2^h \cdot r \cdot d_0)$ more to undo any recoloring made when the phase begun.\footnote{It is necessary to revert to the exact initial coloring before the next phase because Lemma~\ref{lemma_brick_special} for recoloring bricks requires a very specific coloring scheme.} Overall, $OPT$ pays per phase $O(2^h \cdot r \cdot d_0)$. In contrast, $ALG$ pays at least $\sum_{i=1}^{h}{2^{h-i} \cdot \Omega(r \cdot 2^{i-1} \cdot d_0)} = \Omega(h \cdot 2^h \cdot r \cdot d_0)$.

Let $t$ be the number of phases. The one-time initialization cost is $c_0 = \Theta(k^2 \cdot (2^h \cdot r \cdot d_0))$ for inserting $\Theta(k^2)$ edges per brick. Therefore, the competitive ratio that we get is $\frac{\mathbb{E}[cost(ALG)]}{cost(OPT)} \ge
\frac{c_0 + t \cdot \Omega(h \cdot 2^h \cdot r \cdot d_0)}{c_0 + t \cdot O(2^h \cdot r \cdot d_0)}$. For $t = \Omega(k^2)$ we can neglect $c_0$ and get that $\frac{\mathbb{E}[cost(ALG)]}{cost(OPT)} = \Omega(h)$. Recall that $h = \Omega(\log \frac{n}{k \cdot r \cdot d_0})$, so we choose $r=2$ and $d_0=1$ to maximize the competitive ratio and get $\Omega(\log \frac{n}{k})$, as claimed.

It remains to explain how we merge a pair of $r$-roads of length $d$, $X$ and $Y$, see Figure~\ref{figure_multi_road_simple_merge} for a visual example for $r=2$ and $k=4$. We have $r$ iterations, where iteration $i$ cuts the $i$th road of $Y$ away from the hub, and extends a uniformly random not yet extended road of $X$. $OPT$ pays at most $r$ for the newly introduced $r$ edges because it can refrain from recoloring roads. As for $ALG$, observe that it must recolor a road if the colors of the extended road and its extension do not match. For the first two roads that we combine (one from each $r$-road), there is a probability of at most $\frac{1}{r}$ for the colors to agree (the probability is maximized if $X$ and $Y$ use the same colors for their roads, out of the $k$ possible colors). More generally, in the $i$th iteration there is a probability of at most $\frac{1}{r+1-i}$ for the colors to agree, maximized if the remaining roads of $Y$ share their colors with the not yet extended roads of $X$. Thus $ALG$ recolors in expectation at least $\sum_{i=1}^{r}{(1-\frac{1}{r+1-i})} = r - H_r$ roads throughout the process, where $H_r$ is the $r$th harmonic number. For $r \ge 2$, this amounts to a cost of $\Omega(r \cdot d)$ recolorings.
\end{proof}

\begin{figure}[ht]
	\centering
    \begin{subfigure}[t]{.22\textwidth}
        \centering
        \includegraphics[width=\textwidth]{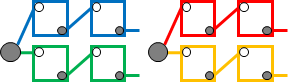}
        \caption{Initial state}
        \label{figure_multi_road_simple_merge1}
    \end{subfigure}
    \hspace{5mm} 
	\begin{subfigure}[t]{.22\textwidth}
        \centering
        \includegraphics[width=\textwidth]{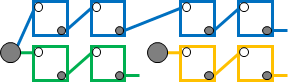}
        \caption{One road extended}
        \label{figure_multi_road_simple_merge2}
    \end{subfigure}
    \hspace{5mm} 
    \begin{subfigure}[t]{.22\textwidth}
        \centering
        \includegraphics[width=\textwidth]{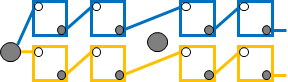}
        \caption{Merge complete}
        \label{figure_multi_road_simple_merge3}
    \end{subfigure}
    
		\caption{\small{Visualization of merging a pair of $2$-roads of length $d=2$ to a single twice longer $2$-road. In this example $k\ge 4$. Initially the $2$-roads are disjoint. Then, we cut a road from the right $2$-road and extend another road in the left $2$-road. If necessary, the algorithm recolors one or more of the roads. Then we do the same for the remaining road. In the end, the hub of the cut-down $2$-road is a node of degree $0$.}}
	\label{figure_multi_road_simple_merge}
\end{figure}

\subsection{Upper Bounds with Resource Augmentation}
\label{sub_section_matchings_upper_bounds_resource_augmentation}

In this section we study upper bounds with resource augmentation. That is, we assume that the optimum still has $k$ matchings, but our algorithm has more. Interestingly, it dramatically improves the competitive ratios, in both deterministic and randomized settings.

Recall our general approach and notations: our caching in matchings algorithm $A_m$ implements a component of an edge-coloring algorithm, over a component of a connection caching algorithm which we denote by $A_c$. By Remark~\ref{remark_multiply_competitive_factors}, we can divide our attention between connection caching and dynamic edge-coloring. Concretely, given $h \ge 1$ extra matchings, we maintain connection caching with $k' \equiv k + h_1$ connections per node, and maintain edge-coloring of a graph of maximum degree $k'$, with $k' + h_2$ colors, such that $h_1 + h_2 = h$. We choose $h_2 \ge 1$ because a single extra color yields a dramatic improvement. We use either $h_2 = \ceil{\frac{h}{2}}$ or $h_2 = h$, Corollary~\ref{corollary_augmented_algorithms} summarizes our choices.

We begin by listing three important facts about caching and connection caching algorithms, which we  use as $A_c$. Concretely, Lemma~\ref{lemma_resource_augmentation_connections_det} details a deterministic algorithm for Connection Caching, and Lemma~\ref{lemma_resource_augmentation_rand} together with Theorem~\ref{theorem_decoupling_competitive} yield a randomized algorithm for Connection Caching in Corollary~\ref{corollary_resource_augmentation_connections_rand}.

\begin{lemma}[Corollary 8 of \cite{ConnectionCaching}]
\label{lemma_resource_augmentation_connections_det}
There is a deterministic Connection Caching algorithm with cache of size $r$, that is $\frac{2r}{r-k+1}$-competitive against the optimum with cache of size $k \le r$.
\end{lemma}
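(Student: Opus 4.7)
The plan is to reduce the $r$-cache Connection Caching problem to $n$ independent paging instances, one per node, and invoke the classical Sleator--Tarjan competitive bound for LRU. For each node $v$ I would run an independent LRU cache $L_v$ of size $r$ whose ``pages'' are the edges incident to $v$. The global cache maintained by the algorithm would be the intersection $\{(u,v) : (u,v) \in L_u \cap L_v\}$, which automatically has degree at most $r$ at every node. On each request for $(u,v)$ the algorithm checks whether both $L_u$ and $L_v$ contain it. If yes, it is a hit; otherwise the algorithm pays $1$ and inserts $(u,v)$ into whichever of $L_u, L_v$ do not yet contain it, applying the standard LRU eviction rule within each local cache. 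Touching a local cache that already contains the edge just refreshes its recency at no physical cost.

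The analysis would proceed in three steps. First, bound the algorithm's Connection Caching cost by the sum of per-node paging misses: if $(u,v)$ is a Connection Caching miss then at least one of $L_u, L_v$ misses it, so $m^{ALG}_{CC} \le \sum_v m^{ALG}_v$, where $m^{ALG}_v$ denotes the miss count of $L_v$. Second, apply Sleator--Tarjan locally: the paging access sequence at $v$ is exactly the subsequence of global requests involving $v$, which is an external input independent of the caching algorithm, so on this sequence the LRU cache of size $r$ satisfies $m^{ALG}_v \le \frac{r}{r-k+1}\, m^{OPT}_v$, where $m^{OPT}_v$ is the miss count of OPT's global cache restricted to $v$ and viewed as a (not necessarily optimal) paging algorithm of cache size $k$. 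Third, observe that each Connection Caching miss of OPT on an edge $(u,v)$ is a paging miss at \emph{both} $u$ and $v$, since the edge is absent from OPT's cache prior to the miss; hence $\sum_v m^{OPT}_v = 2\, m^{OPT}_{CC}$. Chaining these three inequalities yields
$$m^{ALG}_{CC} \;\le\; \sum_v m^{ALG}_v \;\le\; \frac{r}{r-k+1}\sum_v m^{OPT}_v \;=\; \frac{2r}{r-k+1}\, m^{OPT}_{CC},$$
which is the claimed competitive ratio.

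The delicate point I would verify carefully is that the local paging sequence at $v$ is genuinely identical for the algorithm and for OPT, namely the subsequence of global requests containing $v$, independent of any caching decisions. This is what lets Sleator--Tarjan apply per node without worrying that evictions performed on one side (by $L_u$) distort the ``sequence'' seen by the other side ($L_v$). The factor of $2$ in the final ratio is inherent to this reduction: each global request is counted in two local instances, so summing per-node guarantees doubles the comparison between algorithm and OPT. A minor secondary check is that inserting $(u,v)$ into a local cache that already contains it is truly free of additional evictions, which is immediate for LRU and ensures the first inequality above is tight enough.
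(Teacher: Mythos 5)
Your proof is correct and takes essentially the same route as the paper: the lemma follows from the decoupling theorem (Theorem~\ref{theorem_decoupling_competitive}, proved in Appendix~\ref{appendix_subsection_misc_proofs} via exactly the per-node local-cache construction of Algorithm~\ref{alg_compose_connection_caching}) instantiated with LRU and the Sleator--Tarjan bound $\frac{r}{r-k+1}$. The only cosmetic difference is that the paper introduces an intermediate per-node offline comparator $B_u$ before relating it to OPT, whereas you compare directly to OPT's cache restricted to $v$; both are valid, and both incur the same factor of $2$ from double-counting each request at its two endpoints.
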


\begin{lemma}[Section 2.2 of \cite{NealYoungRandAugmented}]
\label{lemma_resource_augmentation_rand}
Let $r$ be the cache size of the randomized caching algorithm \emph{MARK}~\cite{MARK_FIAT1991}, and let $k$ be the cache size of the optimum. Then \emph{MARK} is: $O(\log r)$-competitive if $r=k$; $O(\log \frac{r}{r-k})$-competitive if $\frac{e-1}{e} r<k<r$; and, $2$-competitive if $k \le \frac{e-1}{e} r$.
\end{lemma}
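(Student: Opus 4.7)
The plan is to extend the classical phase-based analysis of MARK to exploit its extra cache slots over OPT. Partition the request sequence into MARK-phases, where each phase runs from one unmarking event to the next. By construction, within a phase exactly $r$ distinct pages are requested. Call a page \emph{clean} in phase $i$ if it is requested during phase $i$ but was not in MARK's cache at the start of the phase, and let $c_i$ denote the number of clean pages in phase $i$; the remaining $r - c_i$ requested pages are \emph{stale}.

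The upper bound on MARK's expected cost per phase follows the classical harmonic argument. The $c_i$ clean pages are immediate misses, and a stale page contributes to MARK's cost only if it has been evicted by some earlier random eviction triggered by a clean miss. Tracking, at each clean arrival, the probability that a given still-unmarked stale page is the one evicted, and summing over the phase, yields an expected per-phase cost of $O\bigl(c_i(1 + \ln(r/c_i))\bigr)$.

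For the lower bound on OPT I would consider pairs of consecutive phases. Over phases $i$ and $i+1$ combined, MARK sees $r + c_{i+1}$ distinct pages, so OPT, having cache size only $k$, must incur at least $r + c_{i+1} - k$ misses across these two phases regardless of its initial state. Amortizing gives $\mathrm{OPT} \ge \tfrac{1}{2}\sum_i (r - k + c_i)$. Combining this with the MARK upper bound leads to a per-phase ratio
\[
\frac{\mathbb{E}[\mathrm{MARK}]}{\mathrm{OPT}} \;=\; O\!\left(\max_{c \in [1, r]} \frac{c \ln(r/c)}{r - k + c}\right).
\]
A case analysis on the optimizer $c$ then recovers the three regimes: at $r = k$ the augmentation term vanishes and $c = 1$ gives $O(\log r)$; at $k \le (1 - 1/e)r$ the denominator is always $\Omega(r)$ while the numerator is at most $r/e + O(c)$, making the ratio constant (refinable to $2$ by a tighter per-phase analysis); and in between, the optimum lies at $c \approx r-k$, giving $O(\log(r/(r-k)))$.

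\textbf{The main obstacle} is obtaining tightness in the middle regime. A naive aggregation that separately sums the numerator and denominator over phases loses a log factor, because the adversary can mix small-$c_i$ phases with large-$c_i$ phases to inflate the ratio artificially. The remedy is to argue the ratio per-phase as above, which in turn requires that the OPT lower bound of $r - k + c_i$ really holds phase-by-phase rather than only globally. This is justified by combining the two-phase lower bound with the fact that the $c_{i+1}$ clean pages of phase $i+1$ are by definition outside MARK's cache at the end of phase $i$ and hence outside the phase-$i$ page set, so the $r + c_{i+1}$ distinct pages counted across two phases are genuinely new ``needs'' for OPT.
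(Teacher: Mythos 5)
The paper does not prove this lemma itself; it is quoted directly from Section 2.2 of Young's paper on paging with resource augmentation, so there is no in-paper proof to compare against. That said, your reconstruction follows the same phase-based analysis that Young uses: per phase, MARK's expected cost is $O\bigl(c_i(1 + \ln(r/c_i))\bigr)$ from the harmonic argument, and pairing consecutive phases gives OPT a cost of at least $r + c_{i+1} - k$ over each pair, which sums to $\mathrm{OPT} \ge \tfrac{1}{2}\sum_i(r - k + c_i)$ up to boundary terms. Your case analysis of $\max_c \frac{c\ln(r/c)}{c + (r - k)}$ is correct: with $d = r - k$, the expression is at most $\ln(r/d)$ whenever $d < r/e$ (since $c \mapsto c\ln(r/c)$ is increasing on $(0, r/e]$, so the worst $c$ is near $d$), and when $d \ge r/e$ the denominator is $\Omega(r)$ while $c\ln(r/c) \le r/e$, giving a constant.

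One remark: the ``obstacle'' you worry about is not actually an issue. The mediant inequality $\frac{\sum a_i}{\sum b_i} \le \max_i \frac{a_i}{b_i}$ (valid for nonnegative $a_i,b_i$) already reduces the global ratio to the per-phase ratio with no loss, so there is no extra log factor coming from mixing small- and large-$c_i$ phases, and no need for a separate per-phase lower bound on OPT beyond the standard two-phase pairing. Your proof is therefore correct and is, modulo minor constants, the argument of the cited source.
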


\begin{restatable}[Theorem 7 of \cite{ConnectionCaching}]{theorem}{theoremDecouplingCCtoCaching}
\label{theorem_decoupling_competitive}
Let $A$ be a $c(r,k)$-competitive caching algorithm, with additive term $\delta$, where $r$ and $k$ are the cache sizes of the algorithm and the optimum, respectively. Then there is a $2 \cdot c(r,k)$-competitive algorithm for Connection Caching, with additive term $|V| \cdot \delta$ where $|V|$ is the number of nodes in the graph.
\end{restatable}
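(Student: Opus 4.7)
The plan is to build the Connection Caching algorithm $A'$ by running, for each node $v \in V$, an independent copy $A_v$ of the paging algorithm $A$ with cache of size $r$ whose ``pages'' are the other vertices. When a request for edge $(u,v)$ arrives, feed the page $v$ to $A_u$ and the page $u$ to $A_v$, letting each perform whatever evictions it wants. Define the connection cache to be
\[
M \;=\; \{(u,v) : v \in \mathrm{cache}(A_u) \text{ and } u \in \mathrm{cache}(A_v)\}.
\]
Since each $A_v$ holds at most $r$ pages, every vertex has degree at most $r$ in $M$, so this is a valid state for Connection Caching.

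The first step is to charge $A'$'s misses to the local caches. If a request for $(u,v)$ arrives with $(u,v) \notin M$, then either $v \notin \mathrm{cache}(A_u)$ or $u \notin \mathrm{cache}(A_v)$, so at least one of $A_u,A_v$ registers a local miss on this request. Hence $\cost{A'} \le \sum_v \cost{A_v}$, where $\cost{A_v}$ counts the misses of $A_v$ on the subsequence $\sigma_v$ of requests incident to $v$.

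Next I would relate the local optima to $\optProbFixedEdges$. From $\optProbFixedEdges$'s cache one reads off, at each $v$, the paging strategy on $\sigma_v$ with cache size $k$ whose cache at any time is $\{u : (u,v) \text{ is in the cache of } \optProbFixedEdges\}$. A local miss of this induced strategy at $v$ happens only when $\optProbFixedEdges$ itself missed on an edge incident to $v$, so the miss count of this strategy upper-bounds the local paging optimum $\mathrm{OPT}_v$ on $\sigma_v$ with cache size $k$. Summing over $v$, each $\optProbFixedEdges$-miss on an edge $(u,v)$ is charged at most twice (once at each endpoint), giving $\sum_v \cost{\mathrm{OPT}_v} \le 2 \cdot \cost{\optProbFixedEdges}$.

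To finish, apply the competitive guarantee $\cost{A_v} \le c(r,k) \cdot \cost{\mathrm{OPT}_v} + \delta$ at every vertex and combine the two inequalities:
\[
\cost{A'} \;\le\; \sum_v \cost{A_v} \;\le\; c(r,k) \sum_v \cost{\mathrm{OPT}_v} + |V|\,\delta \;\le\; 2\,c(r,k) \cdot \cost{\optProbFixedEdges} + |V|\,\delta.
\]
The main conceptual subtlety I anticipate is that the local caches can become asymmetric (for example $v \in \mathrm{cache}(A_u)$ while $u \notin \mathrm{cache}(A_v)$), so the intersection definition of $M$ is essential: without it one could not guarantee that every $A'$-miss forces at least one local miss. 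Once this is in place the two counting steps above are routine, and the randomized case follows by taking expectations throughout.
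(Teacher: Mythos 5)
Your proof is correct and uses essentially the same approach as the paper's: decouple into per-vertex caching instances $A_v$, define the connection cache as the intersection of the two endpoints' local caches, charge each connection-caching fetch to at least one local miss, and then compare each $A_v$ against the local caching strategy induced at $v$ by the connection-caching optimum, whose misses sum to twice the optimum's cost (the paper phrases this last step via intermediate offline algorithms $B_u$, while you read the induced local strategy off the optimum directly, but the bound is the same). One minor notational slip: $\optProbFixedEdges$ in this paper denotes the optimum of Problem~\ref{problem_add_remove_edges}, not the Connection Caching optimum you intend.
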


For completeness, we give the explicit reduction and proof of Theorem~\ref{theorem_decoupling_competitive} in Appendix~\ref{appendix_subsection_misc_proofs}.

\begin{corollary}
\label{corollary_resource_augmentation_connections_rand}
There exists a randomized connection caching algorithm, with cache size $r$ compared to $k$ of the optimum, that is $O(\log r)$-competitive if $r=k$; $O(\log \frac{r}{r-k})$-competitive if $\frac{e-1}{e} r<k<r$; and, $4$-competitive if $k \le \frac{e-1}{e} r$.
\end{corollary}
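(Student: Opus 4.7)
The plan is to derive this corollary as an immediate consequence of combining Lemma~\ref{lemma_resource_augmentation_rand} with Theorem~\ref{theorem_decoupling_competitive}. Lemma~\ref{lemma_resource_augmentation_rand} tells us exactly how the MARK algorithm behaves on standard (single-node) paging with cache size $r$ versus an optimum of size $k$: logarithmic in the three regimes, and constant when there is substantial slack. Theorem~\ref{theorem_decoupling_competitive} is a black-box reduction stating that any $c(r,k)$-competitive \emph{paging} algorithm, with additive term $\delta$, yields a $2 \cdot c(r,k)$-competitive \emph{Connection Caching} algorithm with additive term $|V| \cdot \delta$. So the entire argument is just plugging one into the other.

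Concretely, I would instantiate Theorem~\ref{theorem_decoupling_competitive} with $A = \textrm{MARK}$ and take $c(r,k)$ from the three cases of Lemma~\ref{lemma_resource_augmentation_rand}. This yields:
\begin{itemize}
\item If $r = k$, then $c(r,k) = O(\log r)$ and the resulting Connection Caching algorithm is $2 \cdot O(\log r) = O(\log r)$-competitive.
\item If $\frac{e-1}{e} r < k < r$, then $c(r,k) = O(\log \frac{r}{r-k})$ and the resulting algorithm is $O(\log \frac{r}{r-k})$-competitive.
\item If $k \le \frac{e-1}{e} r$, then $c(r,k) = 2$ and the resulting algorithm is $2 \cdot 2 = 4$-competitive.
\end{itemize}
The additive blow-up by $|V|$ is harmless since competitive ratio is defined asymptotically (Definition~\ref{definition_competitive_ratio}).

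There is no real obstacle here: both ingredients are cited verbatim, and Theorem~\ref{theorem_decoupling_competitive} is stated in exactly the form needed for the plug-in. The only minor care needed is to record that the constant ``$2$'' from MARK in the slack regime compounds with the ``$2$'' in front of $c(r,k)$ from the reduction to give the constant ``$4$'' claimed in the corollary, whereas in the two logarithmic regimes the multiplicative $2$ is absorbed into the $O(\cdot)$ notation. Consequently the proof reduces to a single sentence of bookkeeping once the two prior results are invoked.
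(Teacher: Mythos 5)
Your proposal is correct and is exactly how the paper derives the corollary: it is stated without a separate proof as an immediate consequence of plugging the three cases of Lemma~\ref{lemma_resource_augmentation_rand} into the factor-$2$ reduction of Theorem~\ref{theorem_decoupling_competitive}. Your bookkeeping of the constants (absorbing $2$ into $O(\cdot)$ in the logarithmic regimes, and $2\cdot 2 = 4$ in the slack regime) matches the intended reading.
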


Next, we list several results for dynamic edge coloring.

\begin{lemma}[Greedy, Folklore]
\label{lemma_trivial_greedy}
Let $G$ be a dynamic graph with the guarantee that its maximum degree is at most $k$ at any time. Then we can maintain for it a $2k-1$ edge-coloring without needing to recolor any edge.
\end{lemma}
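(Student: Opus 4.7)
The plan is to exhibit a simple greedy strategy and verify that it never requires recoloring, by a direct counting argument on the colors blocked at the two endpoints of a newly inserted edge.

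First I would specify the algorithm. On edge deletion, do nothing: a proper coloring remains proper after removing edges. On edge insertion of a new edge $(u,v)$, assign it any color in $\{1,\dots,2k-1\}$ that does not already appear on an edge incident to $u$ or to $v$, and never modify the color of any existing edge. So the only thing to verify is that some admissible color is always available at the moment of insertion.

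Next I would do the counting. By the degree guarantee, after the insertion both $u$ and $v$ have degree at most $k$ in $G$, so just before the insertion each has degree at most $k-1$. Hence the set of edges currently incident to $u$ or to $v$ (excluding the new edge) has size at most $2(k-1) = 2k-2$. The colors forbidden for $(u,v)$ are exactly the colors used on those edges, so at most $2k-2$ colors are forbidden. Since the palette has $2k-1$ colors, at least one color is free, and the greedy assignment succeeds. By induction on the update sequence, the coloring remains proper throughout, and no existing edge is ever recolored.

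The argument has no real obstacle: the only subtlety worth making explicit is that we compare degrees \emph{before} the insertion (degree at most $k-1$) rather than after, which is what gives the bound $2k-2$ instead of $2k$. I would also remark that $2k-1$ is tight for any algorithm that never recolors, since a star $K_{1,k}$ together with further edges can force $2k-1$ distinct colors at a single vertex's neighborhood under an adversarial insertion order — but that tightness claim is beyond what the lemma asserts, so in the actual proof I would stop after the existence argument above.
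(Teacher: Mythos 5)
Your proof is correct and is essentially the same greedy argument as the paper's: both observe that just before inserting $(u,v)$ each endpoint has degree at most $k-1$, so with a palette of $2k-1$ colors a common free color must exist; you phrase it as ``at most $2k-2$ colors are forbidden'' while the paper phrases it as ``each endpoint has at least $k$ free colors so they intersect,'' but these are the same counting. The remark on tightness is a fine aside and, as you note, not needed for the lemma.
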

\begin{proof}
When $(u,v)$ is inserted, both $u$ and $v$ are of degree at most $k-1$, thus each has at least $k$ free colors, and they must have at least one common free color that we can use.
\end{proof}

\begin{restatable}{lemma}{lemmaSqrtNK}
\label{lemma_sqrtnk}
Let $G$ be a dynamic \emph{bipartite} graph with the guarantee that its maximum degree is at most $k$ at any time, and let $h \ge 1$. We can maintain a deterministic $(k+h)$-edge-coloring of $G$ in amortized $O(\sqrt{nk/h})$ recolorings per insertion.
\end{restatable}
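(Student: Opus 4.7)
The plan is to use the classical bichromatic augmenting-path technique specialized to bipartite graphs, exploiting the slack of $h$ extra colors. When the edge $(u,v)$ is inserted, both endpoints have degree at most $k-1$ beforehand, so their free-color sets $F_u, F_v$ each contain at least $h+1$ colors. If $F_u \cap F_v \ne \emptyset$, color $(u,v)$ with a common free color at zero cost. Otherwise, for any chosen $c_1 \in F_u$ and $c_2 \in F_v$, traverse the maximal $(c_1,c_2)$-alternating path $P$ from $u$ and flip its two colors; the bipartite parity argument already used in the proof of Theorem~\ref{theorem_upper_bound_vanilla} guarantees that $P$ cannot reach $v$, so $c_2$ becomes free at $u$ and we color $(u,v)$ by $c_2$ at a cost of $|P|$ recolorings. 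The whole task reduces to bounding $|P|$ amortized by $O(\sqrt{nk/h})$.

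The amortized scheme uses a threshold $T := \sqrt{nk/h}$ together with periodic rebuilds. At each insertion, enumerate the $\ge (h+1)^2$ pairs $(c_1,c_2) \in F_u \times F_v$, traversing each alternating path from $u$ up to $T$ edges. If some pair produces a path of length $\le T$, flip it and pay $O(T)$ recolorings. Otherwise, trigger a \emph{rebuild}: discard the current coloring and recompute a valid $(k+h)$-edge-coloring from scratch using K\"onig's static bipartite algorithm (which already fits within $k$ colors, comfortably below the $k+h$ budget), paying $O(nk)$ recolorings in one shot.

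To amortize the rebuilds I would introduce a potential $\Phi$ capturing the ``imbalance'' of the coloring (for example, a weighted interaction count among color classes, or an $\ell_p$-style functional of the sizes $(|M_c|)_c$) and establish three properties: (i) $\Phi$ is small immediately after a rebuild; (ii) each short-path insertion increases $\Phi$ only slightly; and (iii) the all-paths-long condition that triggers a rebuild certifies that $\Phi$ is already large. Calibrating (i)--(iii) so that rebuilds occur at most once per $\Omega(\sqrt{nkh})$ insertions amortizes the $O(nk)$ rebuild cost to $O(nk/\sqrt{nkh}) = O(\sqrt{nk/h})$ per insertion, matching the $O(T)$ short-path cost.

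The main obstacle will be establishing property (iii), i.e.\ translating ``each of the $\Theta(h^2)$ candidate alternating paths from $u$ exceeds $T$'' into a concrete lower bound on the global potential $\Phi$. The intuition is a double-counting: every long path uses more than $T$ edges drawn from color classes in $F_u \cup F_v$, which collectively span an $O(nh)$-edge subgraph, so systematically long paths force the color-class sizes to be highly concentrated --- precisely the regime where $\Phi$ is large. Selecting the right form and weighting of $\Phi$ so that the per-insertion growth in (ii) and the long-path lower bound in (iii) meet exactly at the target $\sqrt{nk/h}$ is the delicate technical point that I expect to carry the proof.
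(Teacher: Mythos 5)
Your plan is blocked at property (iii). The rebuild trigger --- ``every $(c_1,c_2)$-alternating path from $u$ exceeds $T$'' --- is a purely local condition at the endpoints of the new edge, and no potential $\Phi$ that is a function of the coloring alone can be forced large by it: one can arrange colorings whose global statistics (color-class sizes, $\ell_p$ functionals, interaction counts) are perfectly balanced while the bichromatic components through a particular node are all long. Your double-counting observation gives only $(h+1)T = O(nh)$, i.e.\ $T = O(n)$, which is vacuous at the target $T = \sqrt{nk/h}$. Worse, (iii) clashes with (i): a fresh K\"onig rebuild gives no guarantee that alternating paths out of the \emph{next} request's endpoints are short, so a request arriving immediately after a rebuild could trigger another rebuild while $\Phi$ is still at its post-rebuild minimum.

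The paper sidesteps this by \emph{enforcing} short paths rather than inferring them from the state. It designates the $h$ extra colors as scratch colors, caps each at $y$ edges, and always routes the swap so that one of its two colors is an extra color; such an alternating path has length $O(y)$ simply because it alternates and there are at most $y$ edges of that extra color in the entire graph. (In the bipartite setting the swap at one endpoint cannot disturb the other, so one or two such swaps suffice to color $(u,v)$.) When all $h$ budgets fill up, it does a global bipartite $k$-coloring rebuild costing $O(nk)$, amortized by the potential $\Phi = \frac{nk}{hy} \cdot (\text{number of extra-color edges})$, which grows by $O(nk/(hy))$ per insertion and reaches $\Theta(nk)$ exactly when a rebuild fires. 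Balancing $y + nk/(hy)$ at $y = \sqrt{nk/h}$ gives the bound. To salvage your approach, replace the search for the right imbalance potential with this invariant --- always swap through a capacity-bounded extra color --- and your threshold/rebuild/amortization scaffolding goes through with $T = O(y)$.
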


The proof of Lemma~\ref{lemma_sqrtnk} is deferred to Appendix~\ref{appendix_sub_section_matchings_augmentation_proofs}. It is like the proof of Theorem~\ref{theorem_upper_bound_vanilla} but we only have few edges from each extra color, such that we can recolor bi-chromatic paths quickly. 
Periodically, we recolor the graph using only $k$ colors and amortize this work over several operations.

The following results
are particularly useful when $k$ is small. The high probability in Theorem~\ref{theorem_multi_vizing_dynamic_graphs} below is for bounding the running time, not for getting a proper coloring.

\begin{theorem}
\label{theorem_multi_vizing_existence}
Let $G' \equiv G \cup \{ e \}$ be a graph with maximum degree $k$, such that $G$ is $(k+1)$-edge-colored, and $e$ is uncolored. Then there is a $(k+1)$-edge-coloring of $G'$ which recolors only $N$ edges in $G$ where $N = O(k^7 \log n)$  by \cite{MultistepVizing2023} (Theorem 3), or $N = (k+1)^6 \log^2 n$ by \cite{Bernshteyn2022SmallRecoloring} (Corollary 6.4).
\end{theorem}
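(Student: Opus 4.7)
The statement is stated as a direct consequence of two external results, so my plan is to outline the \emph{multi-step Vizing chain} construction underlying both~\cite{Bernshteyn2022SmallRecoloring} and~\cite{MultistepVizing2023} rather than rederive their bounds from scratch. The starting point is the classical Vizing chain used to extend a $(\Delta+1)$-edge-coloring by one edge. Given the uncolored edge $e=(u_0,v)$, pick a color $\alpha$ missing at $u_0$ and a color $\beta_1$ missing at $v$; if $\alpha=\beta_1$ we simply color $e$ with $\alpha$ and we are done. Otherwise we build a Vizing fan at $u_0$: a sequence of neighbors $v=v_1,v_2,\ldots$ where the edge $(u_0,v_{i+1})$ has color $\beta_i$, the color $\beta_i$ is missing at $v_i$, and we stop as soon as either (i) some $v_j$ has $\alpha$ missing, in which case a ``fan shift'' finishes the extension, or (ii) the fan closes on itself, in which case we must flip an $(\alpha,\beta_j)$-bichromatic path $P$ starting at $v_j$ and then shift. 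In the worst case $P$ has length $\Theta(n)$, which is what we need to avoid.

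The first key step is the definition of a \emph{multi-step} Vizing chain: instead of flipping the whole path $P$, we flip only a short prefix and then restart the fan construction from an interior vertex, iterating this process. This produces a tree-like object --- an \emph{augmenting subgraph} --- whose recoloring extends the coloring to $e$. The second key step is the size bound on this augmenting subgraph. Bernshteyn's argument picks the starting vertex of each chain segment uniformly at random along the bichromatic path; using the fact that each vertex is incident to only $k$ colored edges while the available palette is of size $k+1$, one shows that with constant probability the subsequent fan is short. A concentration/recursion argument then gives that the whole augmenting subgraph has size $O(k^6 \log^2 n)$ with high probability, yielding the $(k+1)^6\log^2 n$ bound in expectation/existence; the weaker $O(k^7\log n)$ bound of~\cite{MultistepVizing2023} follows from a similar but deterministic contraction argument that trades a factor of $k$ for one $\log n$.

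The third step is correctness: I would verify, by induction on the depth of the multi-step chain, that after performing all the prescribed local shifts and partial path flips, every color class remains a matching and the final edge $e$ is assigned a color missing at both endpoints. This is essentially an iteration of the classical Vizing shift lemma, with the only subtlety that subsequent segments of the chain must not revisit previously recolored edges --- which is built into the construction by excluding already-visited vertices and colors.

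The main obstacle is the probabilistic/size analysis in step two, since all the work is in showing that the random Vizing chain contracts with constant probability at each level (so that the recursion depth is $O(\log n)$ and each level grows by a factor $\mathrm{poly}(k)$). This is exactly the content of~\cite{Bernshteyn2022SmallRecoloring,MultistepVizing2023}, so I would invoke their lemmas as black boxes. The remainder --- reducing the coloring extension to the existence of a small augmenting subgraph, and verifying properness of the resulting coloring --- is a routine adaptation of Vizing's classical argument and would add only constant factors to $N$.
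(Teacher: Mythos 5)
The paper gives no proof of this statement at all: it is a direct citation of Theorem~3 in \cite{MultistepVizing2023} and Corollary~6.4 in \cite{Bernshteyn2022SmallRecoloring}, and that is all that is logically required. Your proposal takes the same ultimate route (invoking those results as black boxes), so on that level it is fine, and the high-level sketch of multi-step Vizing chains is a reasonable summary of what those papers do.

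One small correction of substance: you describe the $O(k^7\log n)$ bound of \cite{MultistepVizing2023} as arising from ``a similar but \emph{deterministic} contraction argument.'' That is not accurate. The existence statement in \cite{MultistepVizing2023} is also established by a probabilistic argument, which is precisely why the paper notes (in the proof of Corollary~\ref{corollary_augmented_algorithms}) that the resulting deterministic algorithm is \emph{inefficient}: the small recoloring set is only shown to exist nonconstructively, so the algorithm must search for it. If the bound really came from a deterministic contraction, there would be no reason to flag the algorithm as inefficient. Beyond this, since the theorem is merely a citation, the level of detail you devote to re-deriving the Vizing-chain machinery is not needed for the paper's purposes; what matters is only that the two cited results say exactly what the statement claims, with the parameters matched up ($\Delta = k$, palette size $k+1$).
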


\begin{theorem}[Theorem 6 of \cite{MultistepVizing2023}]
\label{theorem_multi_vizing_dynamic_graphs}
Let $G$ be a dynamic graph such that its maximum degree never exceeds $k$. Then there exists a fully-dynamic algorithm that maintains a $\ceil{(1+\epsilon)k}$-edge-coloring with $O(\epsilon^{-6} \log^6 k \log^9 n)$ worst-case update time with high probability.
\end{theorem}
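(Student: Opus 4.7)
The plan is to combine the existence of short augmenting subgraphs (the multistep Vizing chains behind Theorem~\ref{theorem_multi_vizing_existence}) with a randomized procedure that locates such a subgraph in polylogarithmic time. Deletions are trivial --- we just release the color of the removed edge --- so the whole task reduces to inserting a single edge $(u,v)$ into a $\lceil (1+\epsilon)k \rceil$-edge-colored graph and recoloring so the result remains proper.

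First I would define a \emph{multistep Vizing chain} rooted at $(u,v)$: a sequence of Vizing fans interleaved with short bichromatic (Kempe) paths. The classical single-step Vizing argument uses one fan followed by one Kempe path but may touch $\Theta(n)$ edges; the multistep version instead chains together many short fans, each of length $O(1/\epsilon)$, closing as soon as a free color reappears at the active boundary vertex. With $\epsilon k$ slack colors, existence of such a chain of depth polylogarithmic in $n$ (and polynomial in $1/\epsilon$ and $\log k$) is the content of the static lemma underlying Theorem~\ref{theorem_multi_vizing_existence}. To build it online, I would grow the chain one fan at a time: at each active boundary vertex, sample a missing color uniformly at random from its $\Omega(\epsilon k)$ free colors and try to extend or close. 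Because free colors are plentiful and the "bad" extensions at each step are relatively few, each sample makes progress with constant probability; boosting by $O(\log n)$ independent retries at each stage yields the high-probability guarantee.

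The main obstacle is controlling the \emph{running time} of each extension, not merely the overall chain length. A naive traversal of a bichromatic $(\alpha,\beta)$-path inside a single extension could still cost $\Theta(n)$ in the worst case, so every such sub-walk must be truncated at polylogarithmic depth, and one must argue probabilistically that a random choice of the pair $(\alpha,\beta)$ produces a path that is either short or safe to cut before this depth. Exactly this nested sampling --- at each of polylogarithmically many extension steps, with a union bound over all conflicting color classes --- is what produces the $\epsilon^{-6}\log^6 k \log^9 n$ dependence stated in the bound. Finally, applying the recoloring along the chain preserves properness by the standard invariants of Vizing fans and Kempe flips, so verifying correctness requires no new ideas beyond careful bookkeeping of the free colors at the boundary vertices.
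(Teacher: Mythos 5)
The paper does not prove this statement; it is imported verbatim as Theorem~6 of \cite{MultistepVizing2023} and invoked as a black box in Corollary~\ref{corollary_augmented_algorithms}(\ref{item_2023paper}). There is therefore no in-paper proof against which to compare your sketch.

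That said, as a thumbnail of the cited paper's machinery your outline is pointed in the right direction: deletions release a color, insertions grow a chain of short Vizing fans linked by truncated Kempe walks, and random sampling of free colors boosted over $O(\log n)$ retries gives the with-high-probability worst-case bound. But everything that actually makes the theorem true is asserted rather than argued. In particular, you would still need to prove (i) that with slack $\epsilon k$ a chain of the claimed polylogarithmic depth exists and is found with good probability per extension step; (ii) that cutting a bichromatic walk after a polylogarithmic prefix can be done without breaking properness or blocking future progress, which is the technically delicate truncation step; and (iii) that the nested randomness across stages is independent enough for a union bound to produce the specific $\epsilon^{-6}\log^6 k\log^9 n$ exponent. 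None of these follow from "free colors are plentiful so each sample makes progress with constant probability." If the goal is to re-derive the external result you would have to reproduce the quantitative lemmas of \cite{MultistepVizing2023}; if the goal is to justify the paper's use of it, the correct observation is simply that the paper cites it and needs no proof of its own.
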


The paper~\cite{NibblingCycles2024} gives an efficient randomized edge-coloring for sufficiently large $k$.

\begin{theorem} [\cite{NibblingCycles2024}]
\label{theorem_nibble_improved}
Let $G$ be a dynamic graph such that its maximum degree never exceeds $k$. If $k \ge (100 \alpha^4 \log n)^{30\alpha \log \alpha}$, there is a fully-dynamic algorithm maintaining a $(1+O(\frac{1}{\alpha}))k$-edge-coloring with $O(\alpha^4)$ edge recolorings in expectation per update.
\end{theorem}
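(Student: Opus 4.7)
The plan is to adapt the R\"odl nibble (a semi-random method originally used to build near-perfect matchings and decompositions) to the fully dynamic edge-coloring setting. In the static version, a graph of maximum degree $k$ is $(1+O(1/\alpha))k$-edge-colored by $L = \Theta(\alpha \log \alpha)$ rounds of the following procedure: partition the palette into $L$ disjoint blocks of size $\Theta(k/L)$ plus additive slack, and in round $\ell$ assign every still-uncolored edge a uniformly random color from block $\ell$, keeping the assignment only when no conflict occurs at either endpoint. With $O(1/\alpha)$ multiplicative slack and $k$ large enough, standard nibble concentration (e.g.\ Talagrand's inequality together with quasirandomness invariants tracked across rounds) shows that a constant fraction of the surviving edges become permanently colored per round, so after $L$ rounds everything is colored.

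I would then lift this to the dynamic setting by maintaining a layering of the edges according to the round that permanently colored them. On insertion of $e = (u,v)$, first try to color $e$ at the bottom layer from its residual free colors; if none are available, cascade upward layer by layer, re-running a local nibble on a constant-radius ball around $\{u,v\}$ in order to free a color. On deletion, simply release colors and, if useful, push edges downward to lower layers. Because the per-layer slack is $\Theta(k/(L\alpha))$, the probability that the cascade must pass layer $\ell$ decays geometrically in $\ell$; so the expected cascade depth is $O(\alpha)$, the expected work per visited layer is $O(\alpha^3)$ (re-nibble cost on a neighborhood of size $O(\alpha^3)$), and the product gives the claimed $O(\alpha^4)$ expected recolorings per update.

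The hard part will be the dynamic concentration analysis. Unlike the static nibble, the random choices here must remain quasirandom under an adversarial sequence of updates, and every local re-nibble must succeed against all randomness already revealed. This is exactly the role of the threshold $k \ge (100\alpha^4 \log n)^{30\alpha \log \alpha}$: at each of the $\Theta(\alpha \log \alpha)$ active layers, after a union bound over potential edges and the update history, the failure probability must be at most $n^{-\Theta(\alpha \log \alpha)}$, which forces $\log k = \Omega\bigl(\alpha \log \alpha \cdot \log(\alpha^4 \log n)\bigr)$, matching the stated bound. The final ingredient is a potential/amortization argument that charges the expected random work at higher layers against the colors that it frees lower down, converting the geometric cascade bound into the clean $O(\alpha^4)$ in-expectation guarantee on update cost.
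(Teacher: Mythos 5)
This theorem is not proven in the present paper: it is imported verbatim from \cite{NibblingCycles2024} and used as a black box in Corollary~\ref{corollary_augmented_algorithms}(\ref{item_2024paper}). There is therefore no in-paper proof against which to check your sketch.

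As a reconstruction of the cited result your proposal captures the right flavor --- a nibble run over $\Theta(\alpha\log\alpha)$ rounds with $O(1/\alpha)$ multiplicative palette slack, followed by some local repair mechanism to handle updates --- but it glosses over precisely the two steps where the difficulty lives. First, the static nibble does not color every edge; it colors most of them and leaves a sparse residual that must be finished by a separate mechanism (in the cited work, rotations along alternating paths and cycles, whence the name), so your claim that ``after $L$ rounds everything is colored'' is not something the nibble delivers on its own. Second, the statement that the cascade depth decays geometrically in $\ell$ and that each visited layer costs $O(\alpha^3)$ is exactly what needs to be established, yet you support it only by reverse-engineering the stated exponent $30\alpha\log\alpha$; that is a consistency check on the parameters, not a proof. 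Likewise, maintaining quasirandomness invariants against an adaptive adversary performing arbitrary insertions and deletions (rather than a single static batch of random choices) is the core technical obstacle, and the proposal names it without addressing it. If you wanted to actually prove this statement you would need the residual-plus-augmenting-structure analysis and a genuine dynamic concentration argument, not a layer hierarchy with asserted geometric decay.
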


Finally, we combine the various results of connection caching and edge coloring to derive the following competitive algorithms.

\begin{corollary}
\label{corollary_augmented_algorithms}
Given resource augmentation of extra $1 \le h = O(k)$ matchings, that is $k+h$ for the algorithm versus $k$ for OPT, the following competitive algorithms for caching in matchings exist:
\begin{enumerate}
    \item \label{item_basic} $O(n^{1/2} (k/h)^{3/2})$ deterministic and $O(n^{1/2} (k/h)^{1/2} \log \frac{2k+h}{h})$ randomized competitive algorithms in bipartite graphs.
    
    \item \label{item_2023paper} $O \Big (\frac{k^6 \log n}{h} \min(k, \log n) \Big)$ deterministic and $O\Big (\big (\frac{k \log k}{h} \big )^6 \log \frac{2k+h}{h} \log^9 n \Big )$ randomized competitive algorithms in general graphs. The deterministic algorithm is inefficient.
    
    \item \label{item_noN} If $h=k-1$ we can remove the dependence on $n$, yielding $k$ deterministic and $O(\log k)$ randomized competitive algorithms in general graphs.
    
    \item \label{item_2024paper} $O(\alpha^4 \log k)$ randomized competitive algorithm, where $\alpha = O(\frac{k}{h})$ and provided that $k \ge (100 \alpha^4 \log n)^{30\alpha \log \alpha}$.
\end{enumerate}
\end{corollary}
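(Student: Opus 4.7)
The plan is to follow the two-level template set up earlier in Section~\ref{sub_section_matchings_upper_bounds_resource_augmentation}: our caching-in-matchings algorithm $A_m$ runs a connection caching algorithm $A_c$ to choose \emph{which} edges live in the cache, and a dynamic edge coloring subroutine underneath to choose \emph{which matching} each cached edge occupies. Given $h$ extra matchings we split $h = h_1 + h_2$ with $h_1 \ge 0$ and $h_2 \ge 1$: $A_c$ maintains degree at most $k+h_1$ against an optimum of degree $k$, while the coloring subroutine edge-colors $A_c$'s graph (of maximum degree $k+h_1$) using all $k+h_1+h_2 = k+h$ available matchings. Writing $\rho_c$ for the competitive ratio of $A_c$ against the size-$k$ optimum and $C$ for the amortized number of colorings the subroutine performs per edge that $A_c$ inserts, the usual telescoping (Remark~\ref{remark_multiply_competitive_factors} generalized to augmentation, together with Remark~\ref{remark_cost_of_opts}) gives $\mathrm{cost}(A_m) = O(C \cdot \rho_c) \cdot \mathrm{cost}(OPT_m)$. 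Each item of the corollary then reduces to plugging in compatible ingredients.

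For item~(\ref{item_basic}) I take the symmetric split $h_1 = \lfloor h/2 \rfloor$ and $h_2 = \lceil h/2 \rceil$. Lemma~\ref{lemma_sqrtnk} provides a deterministic coloring subroutine with $C = O(\sqrt{n(k+h_1)/h_2}) = O(\sqrt{nk/h})$, and Lemma~\ref{lemma_resource_augmentation_connections_det} provides $\rho_c = \frac{2(k+h_1)}{h_1+1} = O(k/h)$; multiplication yields $O(\sqrt{n}\,(k/h)^{3/2})$. For the randomized variant I keep the same coloring subroutine and instead invoke Corollary~\ref{corollary_resource_augmentation_connections_rand}; since the split keeps $r-k = h_1 = \Theta(h)$ below $r/e$ whenever $h = O(k)$, this yields $\rho_c = O(\log\frac{r}{r-k}) = O(\log\frac{2k+h}{h})$, and the product is $O(\sqrt{n}\,(k/h)^{1/2}\log\frac{2k+h}{h})$. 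For item~(\ref{item_noN}), where $h = k-1$, I use $h_1 = 0$ and $h_2 = k-1$ so that Lemma~\ref{lemma_trivial_greedy} colors every requested edge with no recolorings ($C = 1$); this reduces the problem to standard non-augmented connection caching as in Remark~\ref{remark_competitiveness_of_connections}, with $\rho_c = O(k)$ deterministically and $\rho_c = O(\log k)$ randomized.

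For item~(\ref{item_2023paper}) only one extra color is needed to apply the Vizing-style results: take $h_2 = 1$ and $h_1 = h-1$ for the deterministic inefficient algorithm, so that Theorem~\ref{theorem_multi_vizing_existence} (choosing between its two bounds) yields $C = O((k+h_1)^6 \log n \cdot \min(k+h_1, \log n))$ while the remaining $h_1$ augmentation lowers $\rho_c$, giving after simplification (using $k+h_1 = \Theta(k)$ because $h = O(k)$) the claimed ratio. For the randomized efficient version I instead use Theorem~\ref{theorem_multi_vizing_dynamic_graphs} with $\epsilon = \Theta(h/k)$, so $C = O((k/h)^6 \log^6 k \log^9 n)$, paired with the $O(\log\frac{2k+h}{h})$ randomized connection caching ratio. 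For item~(\ref{item_2024paper}) I set $h_1 = 0$ and $h_2 = h$ so that $\alpha = \Theta(k/h)$ makes $(1+O(1/\alpha))k = k+h$; Theorem~\ref{theorem_nibble_improved} (whose hypothesis on $k$ is the stated side condition) gives $C = O(\alpha^4)$ in expectation, and multiplying by the standard randomized connection caching ratio $O(\log k)$ yields the promised $O(\alpha^4 \log k)$.

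The main obstacle is purely bookkeeping: I must verify for each item that the chosen split $(h_1,h_2)$ simultaneously lies in the regime where Corollary~\ref{corollary_resource_augmentation_connections_rand} delivers the $O(\log\frac{2k+h}{h})$ factor (rather than dropping to $O(\log r)$ or $O(1)$), and where the coloring theorem's own hypotheses (the $\epsilon$ parameter and the size lower bound on $k$) are satisfied. Because the corollary already assumes $h = O(k)$, the symmetric split $h_1 = h_2 = \Theta(h)$ and the extreme splits $h_1 = 0$ and $h_2 = 1$ all fall in a tame regime where these conditions hold with only constant-factor slack, so no genuinely new technique is required beyond wiring the ingredients together.
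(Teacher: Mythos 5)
Your proposal follows essentially the same route as the paper: run a connection caching component $A_c$ with $k+h_1$ slots, layer a dynamic $(k+h)$-edge-coloring subroutine over its cache, and multiply the two costs per Remark~\ref{remark_multiply_competitive_factors}. The concrete ingredients you wire in are the same ones the paper uses, and your split choices (e.g.\ $h_1=h-1,h_2=1$ rather than $h_1=\floor{h/2},h_2=\ceil{h/2}$ for the deterministic half of item~(\ref{item_2023paper}), or $h_1=0$ rather than $h_1=\floor{h/2}$ for item~(\ref{item_2024paper})) give the same asymptotics because all that is needed is $k+h_1=\Theta(k)$ and $h_2=\Theta(h)$, so these are cosmetic differences.

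The one place you gloss over a real technical point is the randomized half of item~(\ref{item_2023paper}). Theorem~\ref{theorem_multi_vizing_dynamic_graphs} only bounds the recoloring cost \emph{with high probability}, not deterministically and not directly in expectation; on the failure event the update could in principle touch $\Theta(nk)$ edges. To turn the w.h.p.\ bound into an expected-recoloring bound $C$ (which is what the competitive analysis needs), the paper chooses the constants so the failure probability is at most $1/n^2$ and falls back to a full recoloring of cost $O(nk)$ when the fast path fails, so that the expected per-update recoloring increases only by a negligible $O(k/n)$. You should include this patch; without it the step ``$C = O((k/h)^6\log^6 k \log^9 n)$'' is not yet an expectation bound. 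The rest of the bookkeeping caveats you flag (verifying the regime of Corollary~\ref{corollary_resource_augmentation_connections_rand} and the size hypothesis of Theorem~\ref{theorem_nibble_improved}) are handled exactly the way you indicate and match the paper.
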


\begin{proof}
We use the augmentation to have extra $h_2$ colors for the coloring component, where $h_2 = h$ for Part-(\ref{item_noN}), and $h_2 = \ceil{\frac{h}{2}}$ for the rest. Part-(\ref{item_basic}) is by Lemma~\ref{lemma_sqrtnk} with Lemma~\ref{lemma_resource_augmentation_connections_det} and Corollary~\ref{corollary_resource_augmentation_connections_rand}. Part-(\ref{item_2023paper}) is by Theorem~\ref{theorem_multi_vizing_existence} with Lemma~\ref{lemma_resource_augmentation_connections_det}, and by Theorem~\ref{theorem_multi_vizing_dynamic_graphs} with Corollary~\ref{corollary_resource_augmentation_connections_rand}. The deterministic algorithm is inefficient because Theorem~\ref{theorem_multi_vizing_existence} only proves the existence of the stated recoloring by probabilistic arguments. Regarding the randomized part, Theorem~\ref{theorem_multi_vizing_dynamic_graphs} guarantees recoloring that is cheap with high probability. We can choose the constants such that the failure probability is $\le \frac{1}{n^2}$, and fully recolor the graph if the cheap method fails. The expected number of recolorings is negligibly affected, and proper edge coloring is guaranteed. Notice that we set $\epsilon = \frac{h}{k}$.
Part-(\ref{item_noN}) is by Lemma~\ref{lemma_trivial_greedy}.
Part-(\ref{item_2024paper}) is by Theorem~\ref{theorem_nibble_improved} and Corollary~\ref{corollary_resource_augmentation_connections_rand}.
\end{proof}

\begin{remark}
\label{remark_on_augmentation_corollary}
Choosing $h_2 = \ceil{\frac{h}{2}}$ in Corollary~\ref{corollary_augmented_algorithms} divides the augmentation into equal halves and is good enough if we do not optimize the constants, since essentially both caching and coloring components ``benefit'' from $\Theta(h)$ augmentation.

In Part-(\ref{item_2024paper}) it is simplest to think of $\alpha$ as a constant, in which case the requirement $k \ge f(n,\alpha)$ for the function $f$ given in the statement, requires $k = \Omega(poly (\log n))$. However, $\alpha$ can also depend on $k$, as long as there are values of $k$ that satisfy $k \ge f(n,\alpha(k))$. Observe that because $(\log n)^{\log n / \log \log n} = n$, for $k \le n$ it must be that $\alpha = O(\frac{\log n}{\log \log n})$ or else the inequality cannot be satisfied. Then in particular $\alpha = o(\log n)$, and $k^{1 / {\tilde{\Theta}(\alpha)}} \ge \log n$ (for the appropriate constants) implies $k \ge f(n, \alpha(k))$.
Crudely simplified for the sake of a clean example, if $k^\frac{1}{\alpha} \ge \log n$ we could choose $\alpha = \frac{\log k}{2 \log \log k}$, and still have a non-empty range of applicable $k$ values.
\end{remark}

Finally, we improve the competitive ratio further with a larger resource augmentation.

\begin{restatable}{theorem}{theoremResourceAugmentationLargeDet}
\label{theorem_resource_augmentation_det_even_more}
Given  $k+h$ matchings to  the algorithm compared to only $k$ matchings to the optimum, for $h \ge k-1$, there is a deterministic algorithm that is $2(1 + \frac{k-1}{\floor{\frac{h+3-k}{2}}})$-competitive for Caching in Matchings. In particular, with $h = (1 + \alpha) k$ extra matchings we get a competitive ratio of $O(1 + \frac{1}{\alpha})$.
\end{restatable}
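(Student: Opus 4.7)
The plan is to split the $h$ extra matchings into two parts: $h_1$ matchings to augment the deterministic connection caching subroutine, and $h_2 = h - h_1$ matchings to provide enough extra colors so that the edge-coloring layer never has to recolor at all. Following the reduction framework of Remark~\ref{remark_multiply_competitive_factors}, the algorithm $A_m$ will run the deterministic connection caching algorithm $A_c$ of Lemma~\ref{lemma_resource_augmentation_connections_det} with cache parameter $r = k + h_1$, which is $\frac{2r}{r-k+1} = 2\bigl(1 + \frac{k-1}{h_1+1}\bigr)$-competitive against an optimal connection caching algorithm with cache $k$. By Remark~\ref{remark_cost_of_opts}, that bound carries over against $OPT_m$.

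To make the coloring free, I would invoke Lemma~\ref{lemma_trivial_greedy}: greedy coloring requires no recoloring whenever the number of colors is at least $2k'-1$ for a graph of maximum degree $k'$. Here $k' = k + h_1$ and we have $k + h_1 + h_2 = k + h$ colors available, so the condition becomes $k + h \ge 2(k+h_1) - 1$, i.e.\ $h_1 \le \frac{h-k+1}{2}$. I set $h_1 = \floor{\frac{h+1-k}{2}}$, which is non-negative precisely because $h \ge k-1$. A short case analysis on the parity of $h-k$ verifies that $h_1 + 1 = \floor{\frac{h+3-k}{2}}$, matching the denominator in the statement, and that $h_2 = h - h_1 \ge 0$. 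Because no recoloring is ever performed, each miss of $A_m$ is a miss of $A_c$ and costs exactly one insertion, so $cost(A_m) = cost(A_c)$.

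Putting the pieces together gives
$cost(A_m) \le 2\bigl(1 + \tfrac{k-1}{h_1+1}\bigr)\, cost(OPT_c) + O(1) \le 2\bigl(1 + \tfrac{k-1}{\floor{(h+3-k)/2}}\bigr)\, cost(OPT_m) + O(1)$,
which is the main bound. The ``in particular'' clause then follows by plugging $h = (1+\alpha)k$: the denominator becomes $\floor{\frac{\alpha k + 3}{2}} = \Theta(\alpha k)$, so the ratio collapses to $2\bigl(1 + \frac{k-1}{\Theta(\alpha k)}\bigr) = O\bigl(1 + \frac{1}{\alpha}\bigr)$.

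There is no conceptual obstacle here beyond choosing the split correctly; the main care-point is the arithmetic bookkeeping, namely checking that $h_1 = \floor{\frac{h+1-k}{2}}$ both keeps $h_1 + h_2 \le h$ and ensures the greedy-coloring inequality $k + h \ge 2(k+h_1) - 1$ holds, and then verifying the small identity $h_1 + 1 = \floor{\frac{h+3-k}{2}}$ so that the bound from Lemma~\ref{lemma_resource_augmentation_connections_det} lines up exactly with the expression in the theorem.
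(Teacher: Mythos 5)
Your proof is correct and takes essentially the same route as the paper's: both reduce to Lemma~\ref{lemma_resource_augmentation_connections_det} for connection caching with augmented cache $r$ and Lemma~\ref{lemma_trivial_greedy} for zero-recoloring greedy edge coloring with $2r-1$ colors. Your split $r = k + h_1$ with $h_1 = \lfloor\frac{h+1-k}{2}\rfloor$ is identical to the paper's choice $r = \lfloor\frac{k+h+1}{2}\rfloor$, and the arithmetic bookkeeping you describe matches the paper's derivation of $\frac{2r}{r-k+1} = 2(1+\frac{k-1}{\lfloor(h+3-k)/2\rfloor})$.
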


The proof is in Appendix~\ref{appendix_sub_section_matchings_augmentation_proofs}. It follows from Lemma~\ref{lemma_trivial_greedy} and Lemma~\ref{lemma_resource_augmentation_connections_det}.

\begin{corollary}
\label{corollary_explicit_augmentations}
Consider the Caching in Matchings problem where the optimum is given $k$ matchings. There is a $6$-competitive algorithm that uses $3k-3$ matchings ($h = 2k-3$).
\end{corollary}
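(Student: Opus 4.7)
\textbf{Proof plan for Corollary~\ref{corollary_explicit_augmentations}.}
The plan is to derive the result as a direct specialization of Theorem~\ref{theorem_resource_augmentation_det_even_more}, which has already done all the real work. First I would note that the case $k=1$ is trivial (the only lazy algorithm is optimal, as remarked at the beginning of Section~\ref{sub_section_matchings_upper_bounds_bipartite}), so it suffices to consider $k \ge 2$, in which case the proposed augmentation $h = 2k-3$ indeed satisfies the hypothesis $h \ge k-1$ of Theorem~\ref{theorem_resource_augmentation_det_even_more}.

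Next, I would substitute $h = 2k-3$ into the competitive ratio expression given by the theorem. We have $h + 3 - k = k$, so $\floor{\frac{h+3-k}{2}} = \floor{\frac{k}{2}}$, and the ratio becomes
\[
2\left(1 + \frac{k-1}{\floor{k/2}}\right).
\]
It remains to bound this quantity by $6$ uniformly in $k$, i.e.\ to show that $(k-1)/\floor{k/2} \le 2$. For odd $k \ge 3$ we have $\floor{k/2} = (k-1)/2$, so the fraction equals exactly $2$; for even $k \ge 2$ we have $\floor{k/2} = k/2$, so the fraction equals $2(k-1)/k < 2$. In every case the overall ratio is at most $2 \cdot (1+2) = 6$, yielding the claimed $6$-competitive algorithm using $k + h = 3k - 3$ matchings.

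There is no genuine obstacle here: the statement is purely a numerical specialization, and the only care needed is to verify the hypothesis $h \ge k-1$ of the parent theorem and to check the worst case of the fraction $(k-1)/\floor{k/2}$, which is attained on odd values of $k$.
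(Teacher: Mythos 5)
Your proof is correct and matches the paper's (implicit) derivation: the paper states Corollary~\ref{corollary_explicit_augmentations} as an immediate numerical specialization of Theorem~\ref{theorem_resource_augmentation_det_even_more}, which is exactly what you carried out, including the check that $h=2k-3 \ge k-1$ holds for $k\ge 2$ and that the worst case of $(k-1)/\floor{k/2}$ is attained at odd $k$, giving exactly $6$.
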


\textbf{A note on the running times:} 
This work focuses on competitive analysis and therefore we do not attempt to optimize polynomial running time. We note that  the algorithms in Corollary~\ref{corollary_augmented_algorithms}(\ref{item_basic}),(\ref{item_noN}) take $O(n)$ time: maintaining connection caching takes $O(k)$ time, and the coloring algorithms in Lemma~\ref{lemma_trivial_greedy} and Lemma~\ref{lemma_sqrtnk} naively take time of $O(\Delta)$ to find a color and $O(\rho)$ to recolor a path of length $\rho$. The randomized algorithms of Corollary~\ref{corollary_augmented_algorithms}(\ref{item_2023paper}),(\ref{item_2024paper}) also run in polynomial time per update.

\section{Related Work}
\label{section_related_work}

\textbf{Caching Problems:}
Caching problems have been studied in many variants and cost models. \emph{Connection caching} is the caching variant closest to our problem. We presented it in a centralized setting, Cohen~\etal~\cite{ConnectionCaching} introduced it in a distributed setting. Albers~\cite{ConnectionCachingGeneralized} studies generalized connection caching. Bienkowski~\etal~\cite{BMatchingPaper} study connection caching in a cost model that is similar to that of caching with rejections~\cite{RejectionCaching}. Another related variant is \emph{restricted caching}~\cite{RestrictedCaching2014,CompanionCaching2002} where not every page can be put into every cache slot. Buchbinder~\etal~\cite{RestrictedCaching2014} study the case where each page $p$ has a subset of cache slots in which it can be cached. In our problem we also have a restriction of similar flavour, implied by the separation into matchings. We note that the cost model in \cite{RestrictedCaching2014,CompanionCaching2002} only counts cache-misses, while we also pay for rearranging the cache.


\smallskip
\noindent
\textbf{Coloring Problems:} As mentioned in Section~\ref{section_matchings_caching}, an efficient dynamic edge coloring that uses a small number of colors can be useful for competitive analysis. Subsection~\ref{sub_section_matchings_upper_bounds_resource_augmentation} covers results which we use for our advantage. The coloring literature studies the tradeoff between the number of colors, amount of recoloring (sometimes called \emph{recourse}), and the running time of the algorithms. Some algorithms require a bound $\Delta$ on the maximum degree of the dynamic graph, while others are adaptive with respect to the maximum degree in their running time or recoloring. Literature on vertex coloring also exists, but reducing edge coloring to vertex coloring by coloring the line-graph is too wasteful in the number of colors, whether this number is parameterized by $\Delta$, or by the arboricity of the graph as in \cite{DynamicColoringImplicitExplicit2020}.
Works on maintaining dynamically an \emph{implicit coloring}~\cite{christiansen2022fullydynamicArboricity,DynamicColoringImplicitExplicit2020} cannot apply to our case because the matchings form an explicit coloring. Azar~\etal~\cite{AzarCompetitiveVertexRecoloring} study dynamic vertex coloring in the context of competitive analysis.

\looseness=-1
El-Hayek~\etal~\cite{BMatching2023} are motivated by the same architecture as us. They solve a problem of dynamically maximizing the size of a $k$-edge-colorable subgraph of a dynamic graph.

\smallskip
\noindent
\textbf{Linear Programming and Convex Body Chasing in $L_1$:} 
The aforementioned caching problems, like many other combinatorial problems, can be formulated as a linear program~\cite{Buchbinder_book}. This line of research led to the development of competitive algorithms for weighted and  generalized caching~\cite{Generalized_Caching_SODA2012,Generalized_Caching,GeneralizedCaching_Buchbinder,WeightedCaching_Buchbinder}. A recent result of Bhattacharya~\etal~\cite{L1BodyChasing2023} uses linear programming with packing and covering constraints to formulate and frame the problem as convex body-chasing in $L_1$. They give a fractional algorithm that requires a slight resource augmentation, along with some rounding schemes to get randomized algorithms for specific problems. Our problem can be thought of as another special case of the problem considered by \cite{L1BodyChasing2023}, see Appendix~\ref{subsection_appendix_randomization_thoughts} for this formulation and further details.

\section{Conclusions and Future Work}
\label{section_conclusions}
In this paper we studied the online Caching in Matchings problem, in which we receive requests for edges in a graph and need to maintain a cache of the edges which is a union of $k$ matchings. The problem abstracts some hardware architecture in which a datacenter is enhanced with reconfigurable optical links. Interestingly, we proved that the Caching in Matchings problem is inherently harder than the similar-looking Connection Caching problem and other caching problems.
Specifically, its competitive ratio depends not only on the number of matchings $k$ (``cache size'') but also on the number of nodes in the graph. Our randomized lower bound rules out an $O(\log n)$ competitive algorithm, and the best competitive ratio we can hope for is $O(poly(\log n))$. Our lower bound for deterministic algorithms is linear in $n$.

We derived our algorithms by running a coloring algorithm that maintains a coloring of the cache of a connection caching algorithm. This approach is simple to describe and analyze, but inherently multiplies the competitive ratios of the two algorithms. It is natural to ask whether a ``direct'' algorithm for Caching in Matchings exists, and if so does it improve the competitive ratio?

Regarding resource augmentation of $h \ge 1$ extra matchings, we see that there are two interesting ``discontinuities''. First, immediately for $h=1$ the competitive ratio drops to $poly(k, \log n)$, in particular ``breaking'' the deterministic lower bound. Second, there seems to be a point in which the competitive ratio becomes independent of $n$. It clearly happens for $h=k-1$, and even sooner if $k$ is large enough (revisit Corollary~\ref{corollary_augmented_algorithms}). 
These ''discontinuities'' beg the following two questions. First, is there an $1 \le h < k-1$ and an algorithm that uses $h$ extra matchings with a competitive ratio of $O(poly(\log n))$ for any $2 \le k < n$? Differently phrased, can we achieve a competitive ratio that is $poly(\log k, \log n)$ instead of $poly(k, \log n)$? Second, is it possible to remove the dependence on $n$ using less than $h=k-1$ extra matchings for any $k$, and if so how small can $h$ be?

A natural generalization would be to study upper bounds for Caching in Matchings in general graphs. When $k=1$ optimality is still trivial, 
and when $k = n$, by Vizing's theorem,  we are also optimal since we can edge-color the full $n$-clique with $n$ colors. In fact, for $n \ge 2$, $k=n-1$ colors are sufficient if and only if $n$ is even (Lemma~\ref{lemma_even_graph_factorization} in Appendix~\ref{appendix_subsection_misc_proofs}).
In the non-trivial regime $2 \le k < n$, there exists a naive deterministic $O(n^2 k^2)$ competitive algorithm (Lemma~\ref{lemma_k_upper_bound_general} in Appendix~\ref{appendix_subsection_misc_proofs}).
Resource augmentation of an extra matching ($k+1$ colors) dramatically reduces the competitive ratio to $O(nk)$ (deterministic) and $O(n \log k)$ (randomized) by allowing us to update the coloring of the graph when a new edge is inserted according to a single step of the Misra-Gries algorithm~\cite{MisraGries1992_EdgeColoring},\footnote{The Misra-Gries algorithm edge-colors an uncolored graph with $m$ edges and $n$ nodes in $m$ iterations. In each iteration it colors an edge and fixes the colors of previously colored edges, by recoloring $O(n)$ of them in $O(n)$ time. In our case, the graph is always fully colored up to the newly requested edge, so we get the competitive ratio of a connection caching algorithm, multiplied by $O(n)$.} or to $O(poly(k,\log n))$ as in Corollary~\ref{corollary_augmented_algorithms}(\ref{item_2023paper}). It is an interesting question whether the problem without augmentation is indeed that much harder in general graphs. General graphs also provide additional difficulties, such as the fact that finding minimal edge coloring for $k \ge 3$ is generally NP-complete~\cite{ChromaticIndexNPC}.

\pagebreak 

\section{Appendix: Deferred Proofs and Discussions}
\label{section_appendix}

\subsection{Caching in Matchings Lower Bounds (Proofs)}
\label{appendix_sub_section_matchings_lower_bounds_proofs}

\lemmaReductionToProblemStaticEdges*

\begin{proof}
Let $\algProbMatchings$ be a $c$-competitive algorithm for Caching in Matchings (Problem~\ref{problem_matching_vanilla}), with some additive term $d$. We show how to derive from it an algorithm $\algProbFixedEdges$ that is $c$-competitive for Problem~\ref{problem_add_remove_edges}, which proves the claim. We will also use corresponding subscripts $\optProbMatchings$ and $\optProbFixedEdges$ for the optimum of each problem (with respect to a given sequence). 

Given a sequence $\tau$, Algorithm $\algProbFixedEdges$ takes its decisions while processing $\tau$ by simulating $\algProbMatchings$ on a sequence $\sigma$ which is constructed as follows. We traverse $\tau$ in order, and whenever an edge is inserted, we add to $\sigma$ a batch of requests which is a concatenation of $r=nk$ identical subsequences, each subsequence contains all the edges currently in $G$ (in some arbitrary order). When an edge is deleted, we do nothing.

 We now specify $\algProbFixedEdges$ such that $cost(\algProbFixedEdges(\tau))) \le cost(\algProbMatchings(\sigma))$ by having $\algProbFixedEdges$ maintain its state such that it ``jumps'' between ``check-points'' in the state of $\algProbMatchings$.

$\algProbFixedEdges$ works as follows. When an edge is inserted by $\tau$, $\algProbFixedEdges$ feeds $\sigma$ to $\algProbMatchings$ until one of two things happens: either (1) the state of $\algProbMatchings$ provides a proper coloring of $G$, or (2) it reaches the end of the batch that corresponds to the current edge inserted by $\tau$. In case (1), $\algProbFixedEdges$ changes its state by replaying the changes that $\algProbMatchings$ made. Then by definition of this case, it ends up with a proper coloring of $G$. In case (2), we know that during the whole batch $\algProbMatchings$ did not have a proper coloring of $G$, which means that in each of the $r$ rounds it paid at least $1$ for a missing edge, for a total of at least $r$. Rather than replaying the changes and ending up with an illegal state for $\algProbFixedEdges$, we have a budget of $r$ to completely change its state. $\algProbFixedEdges$ uses half of the budget to completely empty its state and fetch all of $G$ with some proper coloring (such coloring exists by definition of the problem). Indeed it has the budget, $|G| \le |M| \le \frac{nk}{2}$. The other half of its budget is used to copy the state from which $A_1$ continues to process $\sigma$, by flushing everything, and fetching into the cache the state of $A_1$. This ensures that the state of $\algProbFixedEdges$ is once again identical to $\algProbMatchings$. Overall, we argued that $cost(\algProbFixedEdges(\tau))) \le cost(\algProbMatchings(\sigma))$. This is true in the deterministic case, and also for the randomized case for every fixing of the random coins.

Now observe that $cost(\optProbMatchings(\sigma)) \le cost(\optProbFixedEdges(\tau))$. Indeed, $\optProbMatchings$ may simulate the behavior of $\optProbFixedEdges$ by making changes to its own state at the beginning of each batch in $\sigma$. In conclusion, we get that $cost(\algProbFixedEdges(\tau))) \le cost(\algProbMatchings(\sigma)) \le c \cdot cost(\optProbMatchings(\sigma)) + d \le c \cdot cost(\optProbFixedEdges(\tau)) + d$ in the deterministic case, or similarly $\mathbb{E}[cost(\algProbFixedEdges(\tau)))] \le \mathbb{E}[cost(\algProbMatchings(\sigma))] \le c \cdot cost(\optProbMatchings(\sigma)) + d \le c \cdot cost(\optProbFixedEdges(\tau)) + d$ in the randomized case (recall that $c$ and $d$ were defined at the beginning of the proof).
\end{proof}

\theoremLowerBoundRandLgNTimesLgK*
\begin{proof}
This proof uses a similar high-level construction as the one used to prove Lemma~\ref{lemma_lower_bound_randomized_lg_n_over_k}, and the only difference is in the way we merge pairs of $r$-roads. Here we set $r=k$. Assume for now that when merging two $k$-roads of length $d$, $OPT$ pays $O(k \log k)$ and any algorithm pays $\Omega(k \log k \cdot d)$ in expectation.

With this in mind, revisit the competitive analysis: when a phase begins $OPT$ pays $O(2^h \cdot k \cdot d_0)$ to recolor bricks to their desired color, it then pays $\sum_{i=1}^{h}{2^{h-i} \cdot O(k \log k)} = O(2^h \cdot k \log k)$ in the merging rounds, and finally it pays $O(2^h \cdot k \cdot d_0)$ to restore the original $k$-roads for the next phase. Overall, its cost is $O(2^h \cdot k \cdot (d_0 + \log k))$. In comparison, $ALG$ pays for recoloring, in expectation, at least $\sum_{i=1}^{h}{2^{h-i} \cdot \Omega(k \log k \cdot 2^{i-1} \cdot d_0)} = \Omega(h \cdot 2^h \cdot k \log k \cdot d_0)$. Assuming a sequence with $t = \Omega(k^2)$ phases, we get that the competitive ratio is $\frac{\mathbb{E}[cost(ALG)]}{cost(OPT)} = \Omega(\frac{h \cdot 2^h \cdot k \log k \cdot d_0}{2^h \cdot k \cdot (d_0 + \log k)}) = \Omega(\frac{h \cdot \log k \cdot d_0}{d_0 + \log k})$. To maximize the expression we balance and choose $d_0 = \ceil{\log k}$, getting $\Omega(h \log k)$. We determine $h$ as before, except that the complicated merging technique requires a reusable extra node, so we have that $h \ge \floor{\log \frac{n-1}{4k^2 \cdot d_0 + 1}}$. Simplified, and with $d_0 = \ceil{\log k}$, we get $h = \Omega(\log \frac{n}{k^2 \log k})$, therefore the competitive ratio is $\Omega(\log \frac{n}{k^2 \log k} \cdot \log k)$.

\begin{figure}[ht]
	\centering
    \begin{subfigure}[t]{.22\textwidth}
        \centering
        \includegraphics[width=\textwidth]{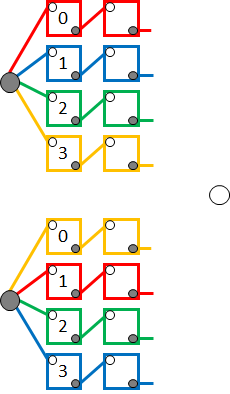}
        \caption{Initial state}
        \label{figure_rroad_complex_merge_step1}
    \end{subfigure}
    \hspace{2mm} 
	\begin{subfigure}[t]{.22\textwidth}
        \centering
        \includegraphics[width=\textwidth]{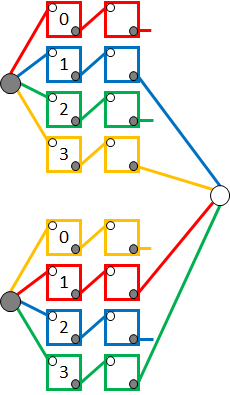}
        \caption{Step 1}
        \label{figure_rroad_complex_merge_step2}
    \end{subfigure}
    \hspace{2mm} 
    \begin{subfigure}[t]{.22\textwidth}
        \centering
        \includegraphics[width=\textwidth]{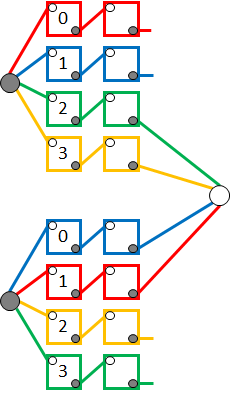}
        \caption{Step 2}
        \label{figure_multi_road_complex_merge3}
    \end{subfigure}
    \hspace{2mm} 
    \begin{subfigure}[t]{.22\textwidth}
        \centering
        \includegraphics[width=\textwidth]{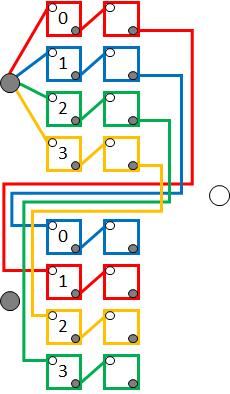}
        \caption{Final merge}
        \label{figure_multi_road_complex_merge4}
    \end{subfigure}
    
		\caption{\small{Visualization of merging a pair of $k$-roads, for $k=4$, of length $d=2$ to a single twice longer $k$-road, by ``negative information''. (a) The roads are numbered from $0$ to $3$, with their number written inside their first brick. There are $\log k = 2$ steps. (b) In the first step we temporarily connect roads $\{1,3\}$ (least significant bit $1$) of the top $k$-road with either roads $\{0,2\}$ or $\{1,3\}$ of the bottom $k$-road, to a shared hub (the white node). In this example we connect $\{1,3\}$, and as a results roads $2$ and $3$ of the bottom $k$-road were recolored. (c) In the second step we temporarily connect roads $\{2,3\}$ (second bit is $1$) of the top $k$-road with either roads $\{0,1\}$ or $\{2,3\}$ of the bottom $k$-road to a shared hub. In this example we connect $\{0,1\}$, and as a result roads $0$ and $2$ of the bottom $k$-road were recolored. (d) Finally there is a round of ``positive information'' in which we simply extend each road on the top color-consistently with a road on the bottom. The consistency depends on the choices of the previous steps, and in this example it matches road $x$ on the top with road $x \oplus 1$ on the bottom. The recoloring in this example is such that in the final extension no road is recolored.}}
	\label{figure_k_roads_neg_merge_example}
\end{figure}

Now we explain and analyze in details the merging of two $k$-roads, denote them by $X$ and $Y$. See Figure~\ref{figure_k_roads_neg_merge_example} for a visual example with $k=4$. For simplicity, let us start with $k$ being an integer power of $2$, say $k=2^\ell$. We start with $\ell$ steps of ``negative information'' in which we reveal roads that are of \emph{different} colors, and do so by connecting the free end of these roads to a new shared hub, denote it by $v$. Concretely, number the roads of $X$ from $0$ to $k-1 = 2^\ell - 1$ and denote by $X_{i,b}$ the roads of $X$ whose $i$th bit is $b$. We define similarly the subsets of roads for $Y$. In round $i$, we connect to $v$ the roads $X_{i,1}$ and $Y_{i,b_i}$ where $b_i$ is chosen uniformly at random between $0$ and $1$. Note that $|X_{i,1}| = |Y_{i,b_i}| = \frac{k}{2}$ so the degree of $v$ is $k$ (legal). When the round ends, we delete the edges of $v$. Finally, in the $\ell+1$ step we produce the longer $k$-road with ``positive information'' by cutting the roads of $Y$ from their hub and extending the roads of $X$, according to the unique way which does not contradict the previous $\ell$ steps. This way exists: road $y \in Y$ extends the road $x$ whose binary representation is $x = y \oplus B$ where the bits of $B$ are $b_i$ and $\oplus$ is the bitwise exclusive-or operation.

Let us analyze the costs of $OPT$ and $ALG$. Since $OPT$ knows the correct colors in advance, it can pay at most $1$ per edge that is inserted. We insert $k$ edges per step (even if most of them are later deleted), in a total of $\ell + 1$ steps. This totals to $O(k \log k)$. We argue that $ALG$ recolors in expectation $\Omega(k)$ roads per each of the first $\ell$ steps. To simplify the analysis, assume that $ALG$ recolors after learning $b_i$ rather than being introduced online each edge of $v$ one by one (which can only hurt $ALG$). Then indeed every road in $X_{i,1}$ has probability $\frac{1}{2}$ to be in a conflict of color with a road of $Y_{i,b_i}$, so by the linearity of expectation, we get at least $\frac{k}{2} = \Omega(k)$ road recolorings (if $ALG$ is ``reasonable'', it recolors $O(k)$ roads per step, so allowing it to be semi-offline did not lose more than a constant factor). Observe that our bound for round $i$ is not affected by previous rounds. So we conclude that $ALG$ pays $\Omega(\ell \cdot k \cdot d) = \Omega(k \log k \cdot d)$ for recoloring in expectation.

The case of $k$ not being a power of $2$ is similar. Each road is still assigned a number, and we regard its binary representation with $\ceil{\log k}$ bits, but only make $\ell = \floor{\log k}$ rounds. Note that now $X_{i,1}$ is not necessarily of size $\frac{k}{2}$, but rather might be smaller. The bias is always in favor of $0$ because of how counting works, and it is such that $|X_{i,1}| \ge \frac{k - 2^i}{2}$ ($i=0$ is the least significant bit). So we can still choose $Y_{i,b_i}$ with $b_i$ uniformly random, there is no problem to connect all the roads of $X_{i,1}$ and $Y_{i,b_i}$ to their shared hub. Also, each road in $X_{i,1}$ still has a color conflict with probability $\frac{1}{2}$. The only thing that changes is that the expectation of road recolorings is not $\frac{k}{2}$ per round, but rather $|X_{i,1}|$ in round $i$. This yields at least $\sum_{i=0}^{\ell - 1}{|X_{i,1}|} \ge \ell \cdot \frac{k}{2} - \frac{1}{2} \sum_{i=0}^{\ell - 1}{2^i} > \ell \cdot \frac{k}{2} - \frac{2^\ell}{2} > \frac{(\ell - 1)k}{2}$ road recolorings in expectation for $ALG$, which is still $\Omega(k \log k)$ in total. The analysis of $OPT$ is unchanged, and its total cost is $O(k \log k)$ in total per merging a pair of $k$-roads (of any length).
\end{proof}

\begin{remark}
\label{remark_notes_on_lower_bound_proof}
A few notes on the proofs of Lemma~\ref{lemma_lower_bound_randomized_lg_n_over_k} and Theorem~\ref{theorem_lower_bound_randomized_lg_n_lg_k}:
\begin{enumerate}
    \item The random choices of the adversary can be boiled-down to the random order of extending roads (in Lemma~\ref{lemma_lower_bound_randomized_lg_n_over_k}) and the bits $b_i$ (in Theorem~\ref{theorem_lower_bound_randomized_lg_n_lg_k}). The $2$-roads and $k$-roads themselves are chosen once, and even the pairings of each merging round may be fixed.

    \item For clarity, we presented it as if we need $2^h$ different hubs, one per $r$-road. In fact, we only need $h+1$ hubs if we reuse them: $h$ of them to maintain an $r$-road for each unique length, and another one for the length in which we currently merge a pair of $r$-roads. This saving is negligible compared to the number of nodes used to compose the roads.
\end{enumerate}
\end{remark}

\subsection{Caching in Matchings Resource Augmentation Upper Bounds (Proofs)}
\label{appendix_sub_section_matchings_augmentation_proofs}

In this section we restate and prove the claims from Section~\ref{sub_section_matchings_upper_bounds_bipartite} that we did not prove there.

\lemmaSqrtNK*
\begin{proof}
Recall the proof of Theorem~\ref{theorem_upper_bound_vanilla}, we will use the same idea of a color-swap on a bi-colored path. The key difference is how we use the $h \ge 1$ extra colors.

First consider $h=1$ and denote the extra color as yellow. We allow at most $y$ yellow edges in the graph, and if we need more, we recolor the whole graph from scratch without using yellow. Such a recoloring is possible because the graph is bipartite and every node is of degree at most $k$. When coloring a newly inserted edge $(u,v)$, both $u$ and $v$ have at least one free color. We have three cases:
\begin{enumerate}
    \item If $u$ and $v$ share a free color, including yellow: Then use this color.

    \item If $u$ does not have a yellow edge and $v$ does (the other case is symmetric): Let $c$ be a free color of $v$, and apply a color-swap of $c$ and yellow with respect to $v$. This makes yellow a free color of $v$. Note that $u$ is unaffected by the color-swap, because the graph is bipartite
    (affecting $u$ implies that the path of the swap 
   closes an odd cycle  with the edge $(u,v)$). Now color $(u,v)$ in yellow.

    \item If both $u$ and $v$ have a yellow edge: Let $c$ be a free color of $u$. Apply a color-swap of $c$ and yellow with respect to $u$. This makes yellow a free color of $u$. Now apply the previous case ($v$ will still have a yellow edge at this point, even if it is on the path affected by $u$).
\end{enumerate}

We apply up to two color-swaps, each of length $O(y)$ because there are at most $y$ yellow edges in the whole graph. Recall that we might have a global recoloring once we reach $y$ yellow edges. We charge these recolorings to the yellow edges. Formally, we define a potential for the cache which equals $\frac{nk}{y} \cdot i$ when there are $i$ yellow edges. Thus when we accumulate $y$ yellow edges, the potential can pay for the global recoloring. Each insertion of an edge causes $O(y)$ recoloring and increases the potential by at most $\frac{nk}{y}$, due to possibly inserting a yellow edge (our color-swaps never increase the number of yellow edges). We conclude that the amortized cost is $O(y + \frac{nk}{y})$ per insertion. Balancing with $y=\sqrt{nk}$ gives $O(\sqrt{nk})$.

We generalize the previous logic for $h \ge 1$ by allowing each extra color to have at most $y$ edges, and when it fills up we proceed to use the next extra color. Only when all $h$ colors have $y$ edges we invoke a full recoloring. The potential in this case is $\frac{nk}{hy}$ per edge, and the amortized cost is therefore $O(y + \frac{nk}{hy})$. Balancing with $y = \sqrt{nk/h}$ gives $O(\sqrt{nk/h})$.
\end{proof}

\theoremResourceAugmentationLargeDet*
\begin{proof}
Let $\sigma$ be a sequence of requests. Denote an algorithm $A$ with cache parameter $x$ as $A^x$, and use subscripts $m$ for Caching in Matchings and $c$ for Connection Caching. We have $cost(A_m^{2r-1}(\sigma)) = cost(A_c^{r}(\sigma))$ for any integer $r \ge 1$ by considering $r-1$ matchings as resource augmentation, such that we require no recoloring (by Lemma~\ref{lemma_trivial_greedy}). Since this reduction halves the cache parameter, and our algorithm initially has cache of size $k+h$, we use $r=\floor{\frac{k+h+1}{2}}$. If $k+h = 2r$, we do not use one of the colors, on purpose, to ensure using exactly $2r-1$ colors. Taking $A_c$ to be the algorithm that satisfies Lemma~\ref{lemma_resource_augmentation_connections_det}, $cost(A^r_c(\sigma)) \le \frac{2r}{r-k+1} \cdot cost(OPT^k_c(\sigma)) + d$ for some fixed term $d$. By Remark~\ref{remark_cost_of_opts}, $cost(OPT^k_c(\sigma)) \le cost(OPT^k_m(\sigma))$. Plugging everything together we get that 
$cost(A^{2r-1}_m(\sigma)) \le \frac{2r}{r-k+1} \cdot cost(OPT^k_m(\sigma)) + d$, hence $A^{2r-1}_m$ is $\frac{2r}{r-k+1} = 2(1 + \frac{k-1}{r-k+1}) = 2(1 + \frac{k-1}{\floor{\frac{h+3-k}{2}}})$ competitive for Caching in Matchings.
\end{proof}

\subsection{Further Discussion on Randomization for Caching in Matchings}
\label{subsection_appendix_randomization_thoughts}

The gap between lower and upper bounds is not too wide in the deterministic case, when $k$ is small (e.g. $k=O(1)$), but it is exponentially wide with respect to $n$ in the randomized case. It is quite common that randomized algorithms achieve better competitive ratios, and it is particularly known in caching and other online problems (e.g., k-servers), which gives a ``reason'' to believe this may be so for Caching in Matchings as well.

In this section we describe a naive randomized algorithm and prove that it fails, and then proceed to discuss in details a more promising direction based on linear programming formulation.

\begin{lemma}
\label{lemma_RC_bad_competitive}
Consider the following randomized algorithm, $ALG$: When a new edge is requested, if no matching can accommodate it, pick its color uniformly at random and evict conflicting neighbours. For $k=2$, $ALG$ is $\Omega(n)$ competitive.
\end{lemma}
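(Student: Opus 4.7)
The plan is to construct an adversarial request sequence that forces $ALG$ to suffer $\Omega(n)$ times as many misses as $OPT$, by exploiting $ALG$'s uniform random color choice whenever both matchings are blocked at the endpoints of a newly requested edge.

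I would use $\Theta(n)$ vertices to form a bipartite path $P=v_0,v_1,\ldots,v_m$ with $m=\Theta(n)$, keeping a pool of fresh ``intruder'' vertices aside. In the setup phase I request the edges $e_1,\ldots,e_m$ in order; each such insertion has a unique conflict-free matching, so $ALG$ installs the alternating 2-coloring without invoking randomness, and both $ALG$ and $OPT$ pay $m$ misses. In the subsequent ``attack rounds'', I request a fresh edge $f_j=(x_j,v_i)$ at an interior path vertex $v_i$, followed by a scan of the path edges $e_1,\ldots,e_m$. Because $v_i$ is saturated in both matchings, $ALG$ is forced to pick a color uniformly at random, evicting one of $v_i$'s path edges. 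When the evicted path edge is re-requested during the scan, a local conflict analysis shows both matchings are again blocked (since $f_j$ now occupies the same slot as the evicted edge), triggering a further forced random choice that either restores the local coloring by evicting $f_j$, or evicts two further neighbouring path edges and propagates a ``gap'' outwards along $P$. Iterating, the position of the gap performs a near-symmetric random walk on the path, and a gambler's-ruin type analysis yields an expected cascade length of $\Omega(n)$ per attack round. In contrast, $OPT$ --- knowing the whole sequence in advance --- can serve each round at cost $O(1)$ by fetching $f_j$ (cost $1$), deliberately evicting one path edge, and re-fetching that single path edge on its next re-request (cost $1$), leaving the rest of $P$ intact.

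Summing over $T$ attack rounds gives $cost(OPT)=O(T+n)$ while $\mathbb{E}[cost(ALG)]=\Omega(nT+n)$, which for appropriate choice of $T$ yields the claimed $\Omega(n)$ ratio. The main technical obstacle will be the cascade analysis: verifying that each forced-random re-insertion along the scan has a constant probability of extending the gap further along $P$ rather than collapsing it, so that gambler's ruin produces a truly linear expected hitting time rather than a constant one; and writing down an explicit $OPT$ strategy that maintains the $O(1)$-per-round bound regardless of where the cascade propagates on $ALG$'s side. A secondary subtlety is that the sequence must be oblivious (fixed in advance), but this can be handled by re-requesting the entire path on every round so that any anomaly $ALG$ has introduced --- forwards or backwards --- is discovered and charged.
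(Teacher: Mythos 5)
Your proposal is a genuinely different construction from the paper's, and while it captures the right high-level intuition (force a coin-flipping cascade and compare it to a cheap $OPT$), there is a concrete structural problem that makes the intended gambler's-ruin analysis much thornier than you acknowledge — and it is exactly the problem the paper's gadget is designed to avoid.

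The paper does not attach a dangling edge to an interior vertex. Instead it alternates between two global configurations: a $2$-road through a hub $u$ (forcing the two halves of the structure to have \emph{opposite} colors) and a single long road (forcing them to have \emph{equal} colors). After a stable odd phase, the even-phase structure is a single path of length $\Theta(L)$ whose red/blue parity is mismatched at a single edge near the middle, with no extra edges hanging off any vertex. Re-requesting the hole then always evicts \emph{exactly one} path edge, so the hole performs a clean symmetric random walk on the line starting from the centre, and the hitting time to either end is $\Omega(L^2)$, against $OPT$'s $O(L)$ per phase pair. Your construction breaks this cleanliness in two ways. First, after you insert $f_j$ at $v_i$ and the hole sits on $e_i$, the two endpoints of the hole are asymmetric: $v_{i-1}$ has one free colour, but $v_i$ is fully saturated by $f_j$ and the other path edge. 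A re-request of $e_i$ therefore either evicts \emph{only} $f_j$ (probability $1/2$, collapsing the whole perturbation in a single step) or evicts \emph{two} path edges $e_{i-1}$ and $e_{i+1}$ (probability $1/2$), creating two holes at once. This is not a single-gap random walk; the state branches, and you must control the interaction of multiple holes, the forward-scan order (a hole that jumps behind the scan pointer is not re-requested until the next round), and the interference from the new $f_{j'}$ you inject in every subsequent round. Second, even if the single-hole picture were accurate, the walk starts at distance $1$ from an absorbing boundary, not from the middle: gambler's ruin from position $1$ does give expected hitting time $\Theta(n)$, but the typical run is $O(1)$ and the $\Omega(n)$ expectation is carried by rare long runs that span many rounds; converting that into a per-round or per-$T$ lower bound while $OPT$ pays exactly $2$ per round requires an amortization argument you haven't sketched.

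You explicitly flag "the cascade analysis" as the main obstacle — and that is precisely right. The paper's gadget (the $2$-road with alternating phases) exists to replace your cascade with a single well-positioned hole performing an unambiguous symmetric walk from the centre of a long path, at the small price of letting $OPT$ pay $O(L)$ rather than $O(1)$ per phase; the ratio $\Omega(L^2)/O(L) = \Omega(n)$ comes out the same. If you want to pursue your variant, you would need (i) a careful treatment of the two-hole branching at $v_i$, showing the dynamics still dominate a nontrivial random walk; (ii) an argument that holes left behind the scan pointer are eventually charged; and (iii) a check that freshly injected $f_{j'}$'s do not let the algorithm cheaply "absorb" old holes. None of these look impossible, but none are done, and the paper's route sidesteps all three.
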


\begin{proof}
We construct a sequence on which the expected cost of this algorithm is high. The construction proceeds in alternating phases as follows. We single out a special node $u$ and divide the rest into $2L = \Theta(\frac{n}{k}) = \Theta(n)$ (since $k=2$) bricks, from which we construct two roads of length $L$ each. See Figure~\ref{figure_coinflip_phases}:
\begin{enumerate}
    \item In odd phases we regard $u$ and the two roads as a $2$-road. That is, we add the two edges that connect $u$ to the roads.
    \item In even phases we regard the two roads as a single long road. That is, we add an extra edge that connects the last brick of the first road and the first brick of the second road.
\end{enumerate}

Every phase has $m$ rounds, and in each round we request all the edges of the structure of that phase (the $2$-road or single long road), in some arbitrary order.

By definition, in order to fully transition from one structure to the other, at least one road (of length $L$) must be recolored, because in odd phases they should disagree in color, and in even phases they should agree in color. As an upper bound for $OPT$, consider an algorithm that simply recolors one of the roads when a new phase begins, then $OPT$ pays $O(L)$ per phase. Next, consider $ALG$ on an odd-then-even pair of phases. In general, we say that $ALG$ is stable in a phase if it caches all the edges of that phase simultaneously. When $ALG$ is unstable, at least one edge is missing from its cache. There are two cases:
\begin{enumerate}
    \item $ALG$ is unstable by the end of the odd phase: If at some point during that phase $ALG$ was stable, it remains stable. Therefore we conclude that throughout the whole phase it was unstable, and paid at least $1$ per round. In total, $ALG$ pays for the two phases at least $m$ (due to the odd phase).

    \item $ALG$ is stable by the end of the odd phase: then the two roads have different colors and $ALG$ is unstable with respect to the new even phase. $ALG$ pays at least $1$ per round until either it stabilizes or the phase ends. Observe that since $k=2$, a brick is simply a path of three edges, and that the graph is simply a path of $8L-1$ edges whose red/blue ``parity'' is mismatched in the middle. We can order the path as a line, and correspond the edges to integers. When the missing edge, the \emph{hole}, is requested, choosing its color is equivalent to randomly moving the hole left or right. So $ALG$ stabilizes in the even phase if and only if the corresponding random walk reaches $4L-1$ or $-(4L-1)$ (starting from $0$, the middle), see Figure~\ref{figure_red_blue_chains_mismatch}. It is known\footnote{e.g., https://math.stackexchange.com/questions/288298/symmetric-random-walk-with-bounds} that the expected number of steps in such a random walk to reach either $a$ or $-b$ ($a,b>0$) is $a \cdot b$. Therefore if $m$ was infinite, $ALG$ would pay in expectation $\Omega(L^2)$. Because $m$ is finite, the expectation is smaller. However, Lemma~\ref{lemma_math_truncated_random_walk} proves that by choosing $m$ to be large enough, say $N$,  the truncation loses at most a constant factor, and still yield an expected cost of $\Omega(L^2)$.
\end{enumerate}

By choosing $m = \max\{N,L^2\}$ we get that $ALG$ pays in expectation $\Omega(L^2)$ for the pair of phases in either case. A repetition of enough phases makes the initialization cost (pre-phases) negligible, and yields a competitive ratio $\Omega(L) = \Omega(\frac{n}{k}) = \Omega(n)$ (recall that $k=2$).
\end{proof}

\begin{figure}[t]
	\centering
    \includegraphics[width=0.6\textwidth]{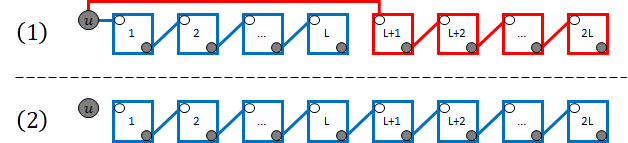}
    
	\caption{\small{The two states of graph in Lemma~\ref{lemma_RC_bad_competitive}. (1) During odd phases, requested edges imply a $2$-road. (2) During even phases, requested edges imply a single long road.}}
	\label{figure_coinflip_phases}
\end{figure}

\begin{figure}[t]
	\centering
    \includegraphics[width=0.45\textwidth]{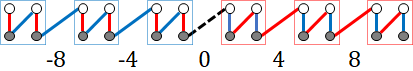}
    
	\caption{\small{The initial state of an unstable even phase induces a random walk on the integer line. The left (blue) road and the right (red) road form a path of edges with a ``hole'' marked by the dashed black edge whose location changes according to a random walk. For clarity we only marked some of the integers, those that correspond to edges between bricks.}}
	\label{figure_red_blue_chains_mismatch}
\end{figure}

It is plausible that Lemma~\ref{lemma_RC_bad_competitive} should generalize to a lower bound of $\Omega(\frac{n}{k})$ for $k \ge 2$. The proof should go almost exactly the same, except that now the random walk argument is not trivial. When $k>2$, a road is not simply a path, and there are two issues: (1) When randomly coloring an added edge, it is possible that two neighbouring edges would be removed, creating multiple holes that may help or interfere with the movement of other holes; (2) The random walk itself is not as clean as a walk on the integer line. There are still ``bottlenecks'' thanks to the edges between consecutive bricks, but one should be careful when analyzing the probabilities to advance a hole towards the ends of the roads. Moreover, even if everything is symmetric at the beginning of the even phase, after $ALG$ starts recoloring edges, it gets messier because different bricks may have different coloring to their ``inside edges'', affecting the probability of a hole crossing over each brick.

\begin{lemma}
\label{lemma_math_truncated_random_walk}
Consider a random walk that starts at $0$, moving at each step to either side with probability $\frac{1}{2}$ each. Let $E_m$ be the expected number of steps until stopping, either due to reaching $a > 0$ or $-b < 0$, or after $m$ steps. Then there is $N > 0$ such that for $m \ge N$, $E_m \ge \frac{1}{2} E_\infty$.
\end{lemma}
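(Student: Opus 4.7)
Let $T$ denote the (possibly infinite) first-passage time of the unrestricted symmetric walk to the set $\{a, -b\}$, starting from $0$. The hypothesis $E_\infty = ab$, cited in the proof of Lemma~\ref{lemma_RC_bad_competitive}, already supplies the two facts I need: $T < \infty$ almost surely, and $\mathbb{E}[T] = ab < \infty$. My plan is to identify $E_m$ with the expectation of the truncated random variable $\min(T,m)$ and push $m \to \infty$.

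First I would observe that, by the definition of the stopping rule in the statement, $E_m = \mathbb{E}[\min(T,m)]$: the truncated walk stops as soon as it hits $a$ or $-b$, which happens at time $T$ if $T \le m$, and otherwise it is forced to stop at step $m$. Second, as $m$ increases the sequence $\min(T,m)$ is non-decreasing and converges pointwise to $T$. Since these are non-negative random variables, the monotone convergence theorem gives
\[
\lim_{m \to \infty} E_m \;=\; \lim_{m \to \infty} \mathbb{E}[\min(T,m)] \;=\; \mathbb{E}[T] \;=\; E_\infty.
\]
Because $E_\infty = ab$ is a finite positive number (we are in the regime $a,b \ge 1$, so $E_\infty \ge 1 > 0$), the definition of a limit yields some $N$ such that $|E_m - E_\infty| \le \tfrac{1}{2} E_\infty$ for every $m \ge N$, and in particular $E_m \ge \tfrac{1}{2} E_\infty$, which is exactly the claim.

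I do not expect a real obstacle: the entire argument rests on monotone convergence applied to $\min(T,m) \uparrow T$, and on the already-quoted classical value $\mathbb{E}[T] = ab$. If one wanted a quantitative $N$ instead of an abstract one (which the lemma does not require), one could use the standard Azuma/Hoeffding-type tail bound for the first-passage time of a bounded-interval simple random walk, which gives $\Pr[T > m] \le C \rho^{m/(a+b)^2}$ for some constants $C$ and $\rho < 1$, and then $E_\infty - E_m = \mathbb{E}[(T-m)\mathbf{1}_{T>m}]$ is bounded by a geometric tail; but for the statement as written the one-line monotone convergence argument suffices.
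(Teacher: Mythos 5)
Your proof is correct, but it takes a genuinely different route from the paper. You identify $E_m = \mathbb{E}[\min(T,m)]$, observe that $\min(T,m)\uparrow T$ pointwise, and apply the monotone convergence theorem together with the known value $E_\infty = ab$ to conclude $E_m\to E_\infty$, after which the existence of $N$ is immediate from the definition of a limit. The paper instead proves the statement entirely by elementary computation: it bounds the survival probability $q_i=\Pr[T>i]$ geometrically by noting that any block of $a+b$ consecutive steps in one direction forces the walk to exit the interval, so $q_i\le (1-2^{-(a+b)})^{\lfloor i/(a+b)\rfloor}$, and then it bounds $E_\infty - E_N = \sum_{i\ge N}(i+1-N)p_{i+1}$ by an explicit geometric series which tends to $0$ as $N$ grows. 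Your argument is shorter and cleaner but non-constructive, relying on a measure-theoretic limit theorem; the paper's argument is longer but fully self-contained and yields an explicit, quantitative choice of $N$, which it needed because the adversary in Lemma~\ref{lemma_RC_bad_competitive} must actually pick a concrete round budget $m$. You note at the end that a quantitative tail bound is available if desired --- that remark is essentially a sketch of what the paper's proof does.
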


\begin{proof}
Stopping the walk due to capping the number of steps only decreases the expectation, therefore $E_1 \le E_2 \le \ldots \le E_\infty$. Let $p_i$ and $q_i$ be the probabilities that a non-truncated walk ends after exactly $i$ steps, and after more than $i$ steps, respectively ($q_i = \sum_{j=i+1}^{\infty}{p_j} > p_{i+1}$). Note that if $a+b$ consecutive steps all walk to the right, then we stop (must hit $a$), so for every block of $a+b$ consecutive steps the probability to end the walk is at least $\frac{1}{2^{a+b}}$. Therefore $i$ steps can be divided to disjoint blocks of consecutive steps, and we conclude that $p_{i+1} < q_i \le (1-\frac{1}{2^{a+b}})^{\floor{i/(a+b)}}$. To simplify, denote $z = (1-\frac{1}{2^{a+b}})$ and remember that $z < 1$. The exponential decrease of $q_i$ also implies that $E_\infty$ is bounded and well-defined, even without knowing its exact value ($E_\infty = a \cdot b$). Now:
$$
E_\infty - E_N
= \sum_{i=1}^{\infty}{i \cdot p_i} - \sum_{i=1}^{\infty}{\min\{i,N\} \cdot p_i}
= \sum_{i=N+1}^{\infty}{(i-N) \cdot p_i}
= \sum_{i=N}^{\infty}{(i+1-N) \cdot p_{i+1}}
$$
Note that $i - (a+b) < (a+b) \cdot \floor{i/(a+b)}$. Set $N = 1 + (c+1) \cdot (a+b)$ for integer $c \ge 1$ (to be determined), then $i+1-N < (a+b) \cdot (\floor{i/(a+b)} - c)$. We continue:
$$
< \sum_{i=N}^{\infty}{(i+1-N) \cdot z^{\floor{i/(a+b)}}}
< (a+b) \sum_{i=N}^{\infty}{(\floor{i/(a+b)} - c) \cdot z^{\floor{i/(a+b)}}}
$$
We can batch all the summands for  $j \cdot (a+b) \le i \le (j+1) \cdot (a+b)-1$ together. Continue by batching, and note that the first batch is incomplete (hence the strict inequality):
$$
< {(a+b)}^2 \sum_{j=c+1}^{\infty}{(j-c) \cdot z^j}
= (a+b)^2 z^c \sum_{j=1}^{\infty}{j \cdot z^j}
= (a+b)^2 z^c \cdot \frac{z}{(1-z)^2}
$$
Since $\lim_{c \to \infty}{(E_\infty - E_N)} = 0$, choosing a large enough $c$ yields $N$ such that $E_m \ge \frac{1}{2} E_\infty$ for any $m \ge N$.
\end{proof}


\textbf{Linear-formulation and rounding schemes:}
We can formalize the Caching in Matchings problem as a set of linear constraints, some of them arrive online and correspond to requests. This enables us to use  additional tools and techniques for designing competitive algorithms~\cite{Buchbinder_book}, in particular for caching~\cite{Generalized_Caching,GeneralizedCaching_Buchbinder}. The LP-formulation of our problem for a sequence $\sigma$ is given below.\footnote{The formulation implicitly assumes that the graph is bipartite, as in our problem. To generalize the formulation to general (non-bipartite) graphs we should require for every subset of $2r+1$ nodes that the sum of all variables of fixed color $c$ adjacent to them is at most $r$, by Edmonds's theorem~\cite{Edmonds1965MaximumMA} regarding the convex hull of the integer matchings.}

\begin{algorithm}
\caption{Fractional Caching in Matchings (Linear Program)}
    $\min
        \sum_{t=1}^{|\sigma|}{
            \sum_{c=1}^{k}{
                \sum_{e \in E}{
                    \Delta^t_{e,c}
                }
            }
        }
    $
    
    Such that:
    \begin{enumerate}[nosep]
        \item $\forall t: \sum_{c=1}^{k}{x^t_{\sigma(t),c}} \ge 1$ // Covering requested edges 
        \item $\forall t,v,c: \sum_{e \in E(v)}{x^t_{e,c}} \le 1$ // Packing proper coloring, $E(v)$ are the edges of $v$
        \item $\forall t,e,c: x^{t}_{e,c}-x^{t-1}_{e,c} \le \Delta^t_{e,c}$ // Technical for the formulation
        \item $\forall t,e,c: \Delta^t_{e,c}, x^t_{e,c} \ge 0$ // Non-negativity
    \end{enumerate}
    \label{problem_fractional_caching_in_matchings}
\end{algorithm}
    
In simple words, the LP-formulation of the problem consists of $\Theta(n^2 k)$ variables of the form $x_{e,c}$, one for every possible color of every possible edge. Technically, for defining the objective function of the LP, we define a different copy of each variable for each time, and also add the variables $\Delta^t_{e,c}$ to represent the growth of variable $x_{e,c}$ from time $t-1$ to time $t$. Note that the formulation only considers growth, which is fine because growth corresponds to fetching into the cache (i.e., if $x^t_{e,c} - x^{t-1}_{e,c} < 0$ then the solution can have $\Delta^t_{c,e} = 0$). We denote by $\sigma(t)$ the request at time $t$, the covering constraints ensure that at any time the current request is fully cached, and the packing constraints correspond to (fractional) proper coloring. We refer to this formulation as \emph{fractional Caching in Matchings}. This is a special case of the online convex body-chasing problem, in the $L_1$ metric. Of course, the integer formulation requires to change the non-negativity constraints to $\forall t,e,c: x^t_{e,c} \in \{0,1\}$. While we can also add the constraints that $\forall t,e,c: x^t_{e,c} \le 1$, these are implied, and a reasonable algorithm should never increase a variables to more than $1$. Indeed, increasing variables has a cost, a covering constraint that is satisfied by $x^t_{e,c} > 1$ remains satisfied if $x^t_{e,c} = 1$, and packing constraints loosen and also remain satisfied for smaller values. To simplify notations, henceforth we omit the time superscript from $x_{e,c}$.

In terms of body-chasing, the dimension of our problem is $d = \Theta (k \cdot n^2)$ because there are $k$ colors and $\Theta(n^2)$ edges. While an $\Omega(\sqrt{d})$ lower bound on the competitive ratio of randomized algorithms is known for general convex body-chasing~\cite[see Lemma 5.4]{ConvexBodyChasingLB2020}, that bound necessitates an adversary that is much stronger than the adversary in our case. In our case, the adversary can only introduce a single temporary covering constraint at a time, while the rest of the constraints are fixed. Therefore, this lower bound \emph{cannot} be used to argue for an $\Omega(\sqrt{k} \cdot n)$ randomized lower bound for Caching in Matchings.

If the packing constraints are relaxed such that $\sum_{e \in E(v)}{x_{e,c}} \le 1 + \epsilon$, we can use a more general result on convex body-chasing in $L_1$ of Bhattacharya~\etal~\cite{L1BodyChasing2023} that applies to our case.

\begin{theorem}[Theorem 1.1 in \cite{L1BodyChasing2023}]
\label{theorem_convex_augmented}
For any $\epsilon \in (0,1]$, there is an $O(\frac{1}{\epsilon} \cdot \log \frac{k}{\epsilon})$ competitive algorithm for fractional Caching in Matchings where the bound for each packing constraint is $1+\epsilon$ instead of $1$.
\end{theorem}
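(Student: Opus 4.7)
The plan is to cast fractional Caching in Matchings as an instance of the online covering-packing LP problem that Bhattacharya~\etal~\cite{L1BodyChasing2023} analyze in their $L_1$ body-chasing framework, and then invoke their main theorem as a black box. First, I would observe that the LP in Algorithm~\ref{problem_fractional_caching_in_matchings} already has the required shape: the state vector is $(x_{e,c})_{e \in E,\, c \in [k]}$; the objective $\sum_t \sum_{e,c} \Delta^{t}_{e,c}$ is precisely the total $L_1$ movement of this vector over time (since the auxiliary constraints force $\Delta^t_{e,c} \ge \max\{0,\, x^t_{e,c} - x^{t-1}_{e,c}\}$, and a minimizing solution sets equality); the packing constraints $\sum_{e \in E(v)} x_{e,c} \le 1$ are fixed throughout the execution; and at each time step $t$ a single covering constraint $\sum_{c=1}^{k} x_{\sigma(t),c} \ge 1$ arrives. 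All coefficients in every constraint lie in $\{0,1\}$.

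Second, I would identify the width parameter on which the competitive ratio in \cite{L1BodyChasing2023} depends. Their guarantee scales with $O(\epsilon^{-1} \log(d/\epsilon))$, where $d$ is the row sparsity on the covering side, i.e., the maximum number of variables appearing in any arriving covering constraint. In our instance this number is exactly $k$, since each request targets a single edge $\sigma(t)$ but allows any of the $k$ colors. Crucially, the dimension $\Theta(k n^2)$ of the state vector and the total number of packing constraints do not appear in their bound, which is what lets us avoid paying for $n$ in the competitive ratio in this relaxed setting.

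Third, I would apply Theorem~1.1 of \cite{L1BodyChasing2023} directly. Their algorithm maintains a fractional solution which satisfies every covering constraint exactly while violating each packing constraint by at most a $(1+\epsilon)$ factor, incurring total $L_1$ movement at most $O(\epsilon^{-1} \log(k/\epsilon))$ times that of any solution satisfying the unrelaxed constraints. Translated back, the matchings are fractionally overloaded by a factor of at most $1+\epsilon$, which is precisely the relaxation appearing in the statement, and the movement cost is the number of fetches counted by our objective.

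The main obstacle is simply verifying the technical preconditions of Theorem~1.1 of \cite{L1BodyChasing2023} --- that covering constraints arrive one at a time, that coefficients are binary, and that no additional assumption on the packing side (e.g., a bound on the ratio of packing to covering widths) is violated. Since our constraints are unit-coefficient sums of disjoint variable subsets on the packing side and a single sum of $k$ variables on each covering side, this verification is routine bookkeeping and requires no new ideas beyond the reduction itself.
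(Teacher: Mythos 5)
Your proposal is correct and matches what the paper does: the theorem is stated as a direct import of Theorem~1.1 of \cite{L1BodyChasing2023}, and the paper gives no separate proof, relying implicitly on the fact that the fractional LP of Algorithm~\ref{problem_fractional_caching_in_matchings} is a unit-coefficient online packing/covering instance whose covering constraints each involve $k$ variables. Your explicit verification of this fit (single arriving covering constraint per step, fixed packing constraints, binary coefficients, covering row-sparsity $k$) is precisely the bookkeeping the paper leaves to the reader.
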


Unfortunately, Theorem~\ref{theorem_convex_augmented} does not allow for $\epsilon = 0$. One can think of $\epsilon > 0$ as a kind  of resource augmentation, but it is  not  the natural augmentation for our problem. Rather than having extra colors, this augmentation allows having a bit extra from each existing color. It does not seem that the two types of augmentations, having more colors and having more of the same colors, are equivalent.

At a first glance, it looks like we can remove the need for some of the augmentation.

\begin{lemma}
\label{lemma_reduce_A_to_B}
Let $A$ be the algorithm claimed by Theorem~\ref{theorem_convex_augmented} for $\epsilon = \frac{1}{2k}$. It satisfies $\forall v: \sum_{c=1}^{k}{\sum_{e \in E(v)}{x_{e,c}}} \le k \cdot (1+\epsilon)$. We can transform it into an algorithm $B$ that guarantees:
\begin{enumerate}[nosep]
    \item $B$ satisfies the same constraints that $A$ does, the covering constraints, and the packing constraints up to $1+\epsilon$.
    \item For any sequence of requests $\sigma$, $cost(B(\sigma)) \le 2 \cdot cost(A(\sigma))$.
    \item For every node $v$ at any time, $\sum_{c=1}^{k}{\sum_{e \in E(v)}{x_{e,c}}} \le k$.
\end{enumerate}

\end{lemma}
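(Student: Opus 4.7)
The plan is to construct $B$ by having it track $A$'s solution as closely as possible, while enforcing the new per-node constraint $\sum_c \sum_{e \in E(v)} y_{e,c} \le k$ via local reductions. Since reductions are free but create a potential covering deficit, $B$ also makes small extra fetches on the current request when needed. The slack $\epsilon k = 1/2$ provided by $A$'s per-node upper bound of $k(1+\epsilon) = k + 1/2$ is what makes these compensations cheap.

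Concretely, at each time $t$, $B$ processes $\sigma(t)$ in three steps. (1) Mirror $A$'s change: set $y_{e,c}^t \leftarrow y_{e,c}^{t-1} + \Delta_{e,c}^{A,t}$ for all $(e,c)$. (2) For every node $v$ whose total $T_v := \sum_c \sum_{e \in E(v)} y_{e,c}$ now exceeds $k$, reduce $y_{e,c}$ on edges $e \in E(v) \setminus \{\sigma(t)\}$ until $T_v = k$; if that is not enough to absorb the overflow, continue on $y_{\sigma(t),c}$. (3) If covering $\sum_c y_{\sigma(t),c} \ge 1$ is violated because of earlier reductions at $\sigma(t)$, refetch the deficit on $\sigma(t)$, using the packing slack that those very reductions opened.

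The three required properties follow by construction. Packing at $1+\epsilon$ is inherited from $A$: the only deviations from $y \le x$ are the extras at $\sigma(t)$ in step (3), each of which only refills slack opened by earlier reductions on the same $(e^{*},c)$. The per-node bound is enforced in step (2). Covering is restored in step (3). The key geometric fact that makes step (2) safe for covering is that $k \ge 2$ and $\sum_c x_{\sigma(t),c}^t \le k(1+\epsilon) = k + 1/2$, so even if reduction must touch $\sigma(t)$, the remaining mass is at least $\min(k, \sum_c x_{\sigma(t),c}^t) \ge 1$.

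For the cost bound, I split $B$'s positive updates into those inherited from $A$, whose total is at most $cost(A)$, and the extras from step (3). The plan is to amortize each unit of extra fetch on $\sigma(t)$ against a prior unit of reduction on that same edge: between two consecutive requests of $e$, reductions accumulate ``debt'' $\rho(e)$, and at the next request the extras are at most $\rho(e)$. Summed per edge, total extras $\sum f \le \sum r$. To bound $\sum r$, observe that at every node $v$ the net growth of $y$'s per-node sum cannot drop below zero, so cumulative reductions at $v$ are bounded by $A$'s cumulative increases at $v$; summing over nodes, and recalling that each reduction at $(e,c)$ touches two endpoints, yields $\sum r \le cost(A)$ and hence $cost(B) \le cost(A) + \sum f \le 2 \cdot cost(A)$.

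The main obstacle I expect is a subtle feedback loop between steps (2) and (3): the extras added in step (3) themselves raise the per-node totals at the two endpoints of $\sigma(t)$ and could in principle re-trigger step (2), possibly coupling $\sum f$ and $\sum r$ in a way that weakens the above amortization to a tautology. Handling this will require ordering the operations carefully, choosing the colors on which to do the step-(3) extras so that the packing slack they consume is exactly the slack created by past reductions on $\sigma(t)$ (so that $y \le x$ is restored at $\sigma(t)$ after step (3) rather than pushed above $x$), and verifying that with this choice the extras never increase a per-node total beyond the pre-step-(2) value. Once this invariant is maintained, the node-wise amortization closes up cleanly and gives the claimed factor of $2$.
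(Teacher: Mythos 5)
Your construction is procedural — mirror $A$'s increases, cap each node at $k$ by local reductions, then refetch on $\sigma(t)$ to restore covering, and amortize refetches against reductions. The paper's proof is not procedural at all: it couples $B$ to $A$ by a one-line \emph{functional} rule, $b_e := \max\{2a_e - 1, 0\}$ where $a_e = \sum_c x^A_{e,c}$ and $b_e = \sum_c x^B_{e,c}$, together with the pointwise dominance $x^B_{e,c} \le x^A_{e,c}$. From this rule the covering property is immediate (a requested edge has $a_e = 1$, hence $b_e = 1$), the cost bound is immediate (any change of $\epsilon$ in $a_e$ moves $b_e$ by at most $2\epsilon$ in the same direction), and the per-node budget $\le k$ is a short case analysis on the edges with $a_e \le 1/2$ versus $a_e > 1/2$. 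There is no round of reductions and no refetching step, hence no feedback between them.

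This matters because the gap you flag at the end of your write-up is a real one and is not a technicality. In your own accounting, the per-node conservation argument gives $2\sum r \le 2\,cost(A) + 2\sum f$ (refetches contribute to node-level increases just as mirrored increases do, because both endpoints of a refetched edge see their totals rise), and the edge-wise amortization gives $\sum f \le \sum r$. Combining the two yields $\sum r \le cost(A) + \sum r$, which is vacuous and does not bound $\sum r$ by $cost(A)$. To break this circularity you would need the additional invariant you gesture at — that step-(3) extras never push a node total above $k$, so they never re-trigger step (2) — but you do not establish it, and it is not clear it can be enforced: after step (2) a node can sit exactly at $k$, and any refetch touching an edge incident to that node then overshoots. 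The paper's $\max\{2a_e-1,0\}$ coupling sidesteps this entirely because there is no ``reduce then refetch'' cycle to control; the state of $B$ is determined by the state of $A$, and all three properties are verified directly on that closed form. As it stands your proposal has a genuine hole at the amortization step, and without the missing invariant the factor-$2$ cost bound does not follow.
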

\begin{proof}
We define $B$ such that it maintains the two following invariants. Note that this does not uniquely define $B$, but any algorithm that satisfies these invariants work. First, for every  variable $x_{e,c}$, $x^B_{e,c} \le x^A_{e,c}$ (superscripts correspond to the algorithms). Second, denote the total load of an edge $e$ by $a_e = \sum_{c=1}^{k}{x^A_{e,c}}$ and $b_e = \sum_{c=1}^{k}{x^B_{e,c}}$, we maintain that for every edge $e$: $b_e = \max\{2 \cdot a_e - 1,0\}$. Note that $e$ is fully cached in $B$ if and only if it is fully cached in $A$.\footnote{We may assume that $a_e \le 1$ for similar reasons as to why $x^A_{e,c} \le 1$ separately per each color (no gain, only cost).} Now we prove the claims in the order that we stated them:

\begin{enumerate}
    \item When edge $e_t$ is requested, $a_{e_t} = 1$ therefore $b_{e_t} = 1$ as well and the covering constraint is satisfied. In addition, since $x^B_{e,c} \le x^A_{e,c}$ for every $e$ and $c$, $B$ satisfies all the packing constraints (up to $\epsilon$) because $A$ does.

    \item When $A$ changes $a_e$ by $\epsilon$, $B$ changes $b_e$ in the same direction (increasing or decreasing) by at most $2\epsilon$ (could be less if $a_e$ decreases below $\frac{1}{2}$, because the decrease of $b_e$ stops at $0$).

    \item Fix the node $v$ with edges $E(v)$. Let $X$ be the set of edges of $v$ for which $a_e \le \frac{1}{2}$ and let $Y = E(v) \setminus X$. Denote also the slack $s_a$ such that $(k + \frac{1}{2}) - \sum_{e \in X \cup Y}{a_e} = s_a$, for $s_a \ge 0$. We need to prove that $\sum_{e \in Y}{b_e} \le k$ (note that $\sum_{e \in X}{b_e} = 0$ by definition of $X$ and the second invariant). There are two cases:
    \begin{enumerate}
        \item If $\sum_{e \in X}{a_e} + s_a \ge \frac{1}{2}$, then we are done because $\sum_{e \in Y}{b_e} \le \sum_{e \in Y}{a_e} \le (k+\frac{1}{2}) - \frac{1}{2} = k$.

        \item If $\sum_{e \in X}{a_e} + s_a < \frac{1}{2}$, then $\sum_{e \in Y}{a_e} > k$. Because each summand is bounded by $1$, $|Y| \ge k+1$. Then:
        $\sum_{e \in Y}{b_e} =
        2 \sum_{e \in Y}{a_e} - |Y| \le         2(k+\frac{1}{2}) - (k+1) \le k$.
         \qedhere
    \end{enumerate}
\end{enumerate}

\end{proof}

Since we chose $\epsilon = \frac{1}{2k}$ in Lemma~\ref{lemma_reduce_A_to_B}, by Theorem~\ref{theorem_convex_augmented} both $A$ and $B$ are $O(k \log k)$ competitive. While $B$ is not using more than a total of $k$ colors per node, it is not free of resource augmentation. We did not address the packing constraints specifically, but rather globally per node, and therefore the state of $B$ still allows the portion of some color in some node to exceed $1$ up to $1 + \frac{1}{2k}$.

Other than the issue of the non-natural resource augmentation, there is of course the question of finding a rounding scheme. In order to get a randomized solution, one must find a rounding scheme for the fractional solution. It would be nice if the solution could be done in two steps, first resolving the augmentation of either algorithm $A$ or $B$, and then finding a rounding scheme, but the previous paragraphs hint that perhaps such separation is not so simple if it even exists.

\subsection{Miscellaneous Proofs}
\label{appendix_subsection_misc_proofs}

We begin by re-stating and proving the theorem that derives a connection caching algorithm out of a caching algorithm, by losing only a factor of $2$ in the competitive ratio. The explicit derivation is given in Algorithm~\ref{alg_compose_connection_caching}. For completeness, we formally define the caching problem.

\begin{problem}[Caching]
\label{problem_caching}
A sequence of page requests $p_1,p_2,\ldots$ is revealed one at a time. An algorithm maintains a cache of size $k$. When a page that is not in the cache is requested, the algorithm must bring it to the cache, possibly evicting another, and pays $1$. The algorithm may also bring additional pages to cache, and the cost of the algorithm is $1$ per added page.
\end{problem}

\begin{algorithm}[ht]
    \SetAlgoLined
    \DontPrintSemicolon
    \KwIn{
         A sequence of connection requests $\sigma$, revealed one at a time.
    }

    
    \SetKwFunction{funcInit}{Init}
    \SetKwFunction{funcServe}{Serve}
    \SetKwFunction{funcMakeRoom}{MakeRoom}
    \SetKwProg{Fn}{Function}{:}{}
    
    \Fn{\funcInit{caching algorithm $A$}}{
        Set the cache $M = \emptyset$. \;
        $\forall u \in V$: Initialize an instance $A_u$ and denote its virtual cache by $M_{u}$.
    }
    
    \Fn{\funcServe{request $(u,v)$}}{
        Regardless of whether $(u,v) \in M$, feed $(u,v)$ to $A_u$ and to $A_v$.\;
        If $A_u$ added 
        $(u,b)$ (for any $b \in V$) and 
        $(u,b) \in M_{b}$ then 
        add $(u,b)$ to $M$.
        If $A_u$ evicted
        $(u,b)$ and $(u,b) \in M$, then evict it from $M$.\;
        Similarly, if $A_v$ added 
        $(b,v)$ and $(b,v) \in M_{b}$ then 
        add $(b,v)$ to $M$. 
        If $A_v$ evicted
        $(b,v)$ and $(b,v) \in M$ then evict it from $M$.\;
    }
    
    \caption{Decoupling Connection Caching}
    \label{alg_compose_connection_caching}
\end{algorithm}

\theoremDecouplingCCtoCaching*

\begin{proof}
The idea of the proof is to charge the cost of the connection caching algorithm to the virtual local caching algorithms that run on each node, while maintaining the \emph{invariant} that an edge (connection) $(a,b)$ is cached if and only if it is cached in both virtual caches of $a$ and $b$.

To simplify notation, denote $\rho = c(r,k)$, so that $A$ is a $\rho$-competitive caching algorithm with additive constant $\delta \ge 0$.\footnote{While $\rho$ is a function of $k$ and $r$, the exact dependence does not matter for the analysis.}
Fix the sequence of requests $\sigma$ for connections. For a vertex $u \in V$, let $\sigma_{u} \subseteq \sigma$ consist of the requests for edges $(u,w)$ for some $w \in V$. Also, denote $\costFull{ALG(\sigma)}{\sigma_{u}}$ the cost of a connection caching algorithm $ALG$ over the sequence $\sigma$ when we only consider costs due to requests in $\sigma_{u}$. Let $OPT$ be the optimal algorithm for serving $\sigma$. The cost of each request $(u,v) \in \sigma$ is counted both in $\costFull{OPT(\sigma)}{\sigma_{u}}$ and $\costFull{OPT(\sigma)}{\sigma_{v}}$, thus $\sum_{u \in V}{\costFull{OPT(\sigma)}{\sigma_u}} = 2 \cdot \cost{OPT(\sigma)}$.

We define $B_{u}$ to be an offline connection caching algorithm that minimizes $\costFull{ALG(\sigma)}{\sigma_{u}}$ over all algorithms $ALG$, therefore $\costFull{B_{u}(\sigma)}{\sigma_{u}} \le \costFull{OPT(\sigma)}{\sigma_{u}}$. By our definitions we can think of $B_{u}$ as a caching algorithm in $u$ running over $\sigma_{u}$. To clarify, since $k \ge 2$, $B_u$ can reserve the first cache slot of any $v (\ne u) \in V$ for the connection $(v,u)$ while using the rest of the slots to serve requests of edges $(v,w)$ for $w \ne u$.

We charge the cost of the algorithm for connection caching, denote it by $A'$, on every request $(u,v)$ to the virtual cost of either $A_u$ or $A_v$, depending on the state of the cache when $(u,v)$ arrives as follows. By \emph{virtual cost} we refer to the cost that each of these instances ``thinks'' that it pays, when $A'$ feeds it with $(u,v)$ in order to use its decisions.
Concretely, if $A'$ fetched an edge $(u,b)$ for some $v \ne b \in V$, then by the invariant both $A_u$ and $A_b$ have it cached, but previously $A_u$ did not have it cached (the state of $A_b$ did not change). So we can charge the cost of $A'$ to the virtual cost of $A_u$. Similarly we can charge the fetching of an edge $(b,v)$ for $u \ne b \in V$ to the virtual cost of $A_v$. Finally, if $A'$ fetched $(u,v)$, either $A_u$ or $A_v$ (or both) must have fetched it too, so we can charge that one. Overall, $\cost{A'(\sigma)} \le \sum_{u \in V}{\cost{A_u(\sigma_u)}}$. Since $A_u$ is $\rho$-competitive with additive term $\delta$ against $B_{u}$ when $B_{u}$ is viewed as a caching algorithm, we can wrap-up as follows.
$$\cost{A'(\sigma)}
\le \sum_{u \in V}{\cost{A_u(\sigma_u)}}
\le \sum_{u \in V}{(\rho \cdot \costFull{B_{u}(\sigma)}{\sigma_{u}} + \delta)}
$$
$$
\le \rho \sum_{u \in V}{\costFull{OPT(\sigma)}{\sigma_{u}}} + |V| \cdot \delta
= 2\rho \cdot \cost{OPT(\sigma)} + |V| \cdot \delta$$
Note that the $2$ in the multiplicative term originates from the two sides of a connection, and the additive term is due to each node contributing separately up to $\delta$. If $A$ is randomized, the same analysis can be repeated as follows:
$$\expect{\cost{A'(\sigma)}}
\le \mathbb{E} \Big [ \sum_{u \in V}{\cost{A_u(\sigma_u)}} \Big ]
= \sum_{u \in V}{\expect{\cost{A_u(\sigma_u)}}}
\le \sum_{u \in V}{(\rho \cdot \costFull{B_{u}(\sigma)}{\sigma_{u}} + \delta)}
$$
$$
\le \rho \sum_{u \in V}{\costFull{OPT(\sigma)}{\sigma_{u}}} + |V| \cdot \delta
= 2\rho \cdot \cost{OPT(\sigma)} + |V| \cdot \delta$$
\end{proof}

\begin{remark}
\label{remark_general_decoupling}
Theorem~\ref{theorem_decoupling_competitive} and Algorithm~\ref{alg_compose_connection_caching} can be generalized to other caching variants, such as \emph{generalized caching} (edges may have different sizes in cache, and different costs) and \emph{caching with rejections}~\cite{RejectionCaching}. The argument remains the same: due to the invariant, the cost paid for a connection can be charged to the cost of (at least) one local-cache cost of the connection's end-points.
\end{remark}



\begin{lemma}
\label{lemma_even_graph_factorization}
The edges of a clique of $n$ nodes can be divided into $n-1$ disjoint matchings if and only if $n$ is even.
\end{lemma}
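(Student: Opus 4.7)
The plan is a short counting-plus-construction argument, treating the two directions separately.

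For the ``only if'' direction, I would count edges. The clique $K_n$ has $\binom{n}{2} = n(n-1)/2$ edges, and any matching on $n$ vertices contains at most $\lfloor n/2 \rfloor$ edges. If we partition the edge set into $n-1$ matchings, then
\[
\frac{n(n-1)}{2} \le (n-1) \cdot \left\lfloor \frac{n}{2} \right\rfloor,
\]
which forces $\lfloor n/2 \rfloor = n/2$, i.e.\ $n$ must be even (and each matching is a perfect matching).

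For the ``if'' direction, with $n$ even I would give the standard round-robin 1-factorization. Label the vertices $v_0, v_1, \dots, v_{n-1}$. Pin $v_{n-1}$ and arrange $v_0, \dots, v_{n-2}$ around a circle. For each color $c \in \{0, 1, \dots, n-2\}$, define the matching $M_c$ to consist of the edge $(v_{n-1}, v_c)$ together with the edges $(v_{c+i \bmod (n-1)}, v_{c-i \bmod (n-1)})$ for $i = 1, 2, \dots, n/2 - 1$. I would then verify two things: (i) each $M_c$ is a perfect matching, which is immediate because as $i$ ranges over $1, \dots, n/2-1$ the pairs $\{c+i, c-i\}$ partition $\{0,1,\dots,n-2\} \setminus \{c\}$ modulo $n-1$; and (ii) every edge appears in exactly one $M_c$. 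For edges incident to $v_{n-1}$ this is obvious. For an edge $(v_a, v_b)$ with $a, b < n-1$, it belongs to $M_c$ iff $c \equiv (a+b)/2 \pmod{n-1}$; since $n-1$ is odd, $2$ is invertible modulo $n-1$, so the equation $2c \equiv a+b \pmod{n-1}$ has a unique solution $c$.

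Neither step is hard; the only point that requires a moment of thought is invertibility of $2$ modulo $n-1$ in the ``if'' direction, which is exactly where the parity of $n$ re-enters the argument and confirms why the construction works precisely when $n$ is even.
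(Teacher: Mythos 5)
Your proof is correct and takes essentially the same approach as the paper: the same edge-counting argument for the impossibility when $n$ is odd, and the standard $1$-factorization of $K_n$ for even $n$ — your algebraic round-robin construction (pin $v_{n-1}$, pair $v_{c+i}$ with $v_{c-i}$ modulo $n-1$) is precisely the reflection-about-a-line construction the paper describes geometrically with a regular $(n-1)$-gon and a center vertex.
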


\begin{proof}
A complete graph with $n=2\ell+1 \ge 3$ nodes (odd) has $n \cdot \ell$ edges, and each matching can contain at most $\ell$ edges, so $n-1$ colors are insufficient.
A \emph{$1$-factorization} of a graph divides its edges to disjoint perfect matchings, and it is known how to find such factorization for any complete graph with an even $n$. Constructively: pick $(n-1)$ nodes to form a regular polygon on the plane, and place the remaining node in the center. Each of the $n-1$ perfect matchings is defined by matching the center to one of the nodes, and matching each remaining node to its ``reflection'' with respect to the line spanned by the center's matched edge (the center's edge is perpendicular to all other edges of the matching).
\end{proof}

\begin{lemma}
\label{lemma_k_upper_bound_general}
Consider Caching in Matchings on a general graph. For fixed $k \ge 2$, There is an $O(n^2 k^2)$ competitive deterministic algorithm.
\end{lemma}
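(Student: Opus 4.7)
I plan to adapt the reduction of Theorem~\ref{theorem_upper_bound_vanilla} to the general-graph setting. Run a deterministic $k$-competitive Connection Caching algorithm $A_c$ (Remark~\ref{remark_competitiveness_of_connections}), and let $A_m$ maintain $A_c$'s cache $C$ together with a proper $k$-edge-coloring of~$C$. By Remark~\ref{remark_multiply_competitive_factors}, it suffices to bound the per-$A_c$-insertion overhead of $A_m$ by $O(n^2 k)$, which combined with the $k$-competitiveness of $A_c$ and the inequality $\cost{OPT_c} \le \cost{OPT_m}$ (Remark~\ref{remark_cost_of_opts}) gives the claimed $O(n^2 k^2)$ competitive ratio.

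\textbf{Coloring step.} When $A_c$ inserts $(u,v)$ into $C$, $A_m$ first checks for a matching in which both $u$ and $v$ are free; if none exists, it mimics the bipartite argument by picking free colors $c_1$ at $u$ and $c_2$ at $v$ and flipping the $(c_1,c_2)$-alternating path starting at $u$. In a bipartite graph this path cannot reach $v$, but in a general graph it might close an odd cycle that blocks the flip with only $k$ colors. When that happens, $A_m$ evicts an arbitrary edge on the offending closed walk (informing $A_c$ of the eviction as well) and retries. Each retry either succeeds or removes one more edge, strictly decreasing $|C|$.

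\textbf{Bounding the work and main obstacle.} Each successful flip touches at most $O(n)$ edges, since the alternating path has at most $|V|$ vertices, and at most $O(nk)$ failed retries can occur per $A_c$ insertion because $|C| \le nk/2$ throughout and every failure removes an edge. Thus $A_m$ pays at most $O(n) \cdot O(nk) = O(n^2 k)$ per $A_c$ insertion, giving the $O(n^2 k^2)$ competitive ratio after composing with the $k$-competitiveness of $A_c$. The main obstacle is that the extra evictions forced on $A_c$ may invalidate its $k$-competitiveness guarantee, since $A_c$ is no longer running purely on the original request sequence. I plan to handle this by treating every forced eviction of an edge $e$ as if $e$ were immediately re-requested in a virtual sequence fed to $A_c$: this charges at most one extra miss per forced eviction, inflates $A_c$'s cost by at most a constant factor, and preserves the $O(n^2 k^2)$ ratio.
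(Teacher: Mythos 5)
Your reduction-to-Connection-Caching plan has a fundamental gap that the paper's phase-based proof is specifically designed to avoid. The cache maintained by any Connection Caching algorithm $A_c$ is only guaranteed to have maximum degree $\le k$; in a general (non-bipartite) graph such a cache need not be $k$-edge-colorable (Vizing, class-two graphs). Your evict-and-retry step recognizes this locally, but the ``virtual re-request'' repair does not converge. Concretely, take $k=2$ and a triangle $\{ab,bc,ca\}$: $A_c$ happily caches all three (each vertex has degree $2$), but the triangle is not $2$-edge-colorable. When $ca$ arrives, your flip fails, you evict say $bc$, color $ca$, and then virtually re-request $bc$. $A_c$ fetches $bc$ back (degree still $\le 2$, so it has no reason to evict anything else), and you are back to trying to $2$-color the triangle. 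This loop never terminates, so the claimed $O(nk)$ bound on retries per insertion and the ``constant-factor inflation'' of $A_c$'s cost are both unjustified. The deeper point is that Remark~\ref{remark_multiply_competitive_factors} is usable only when the coloring layer is guaranteed to succeed on whatever graph $A_c$ hands it, which holds in the bipartite case (König) but not in general.

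The paper's proof instead abandons the $A_c$/$A_m$ decomposition entirely. It runs a marking/phase-style argument directly on the matchings cache of effective size $r=nk/2$: within a phase it re-solves the (offline) $k$-edge-coloring feasibility problem each time a new edge arrives, and when no $k$-coloring of the accumulated edges exists it flushes and starts a new phase. The key step that your argument is missing is the charging claim that OPT incurs at least one miss per phase (because the first request of the new phase provably could not have coexisted with the phase's other requests in any $k$-coloring), while the algorithm pays $O(r^2)$ per phase by full recolorings, giving $O(r^2) = O(n^2 k^2)$. If you want to salvage your approach, you would need either to replace $A_c$ by an algorithm that maintains a provably $k$-colorable cache (not just bounded degree), or to allow a $(k+1)$st color — which is exactly the resource-augmented route the paper takes in Corollary~\ref{corollary_augmented_algorithms}.
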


\begin{proof}
We essentially repeat the standard phases-based proof that is used to show that a standard caching algorithm with a cache of size $r$ is $r$-competitive. The difference is that our effective cache size is $r = \frac{nk}{2}$, and that in each request we may completely reorganize the cache at a cost of $O(r)$ instead of only paying $1$ to cache the new request.

We define an online algorithm that acts in phases as follows. Every phase begins with an empty cache. During a phase, when a request arrives, check if there is \emph{any} coloring of the edges that fits the existing edges and the new one. If there is, the phase continues, change the matchings accordingly to fit all the edges together. If not, we evict all the edges from the matchings and start a new phase. There can be at most $r = \frac{nk}{2}$ edges in the matchings overall, so a phase contains at most $r$ unique edges. When the matchings have $i \le r$ edges in total, the recoloring cost is at most $i$ plus $1$ to insert a new edge. Note that we do not always have to recolor all the edges, in particular the first $k$ edges always have room without recoloring. We get that, per phase: $cost(ALG) \le r + \sum_{i=k+1}^{r}{i} < r + \frac{r(r+1)}{2} = \frac{r(r+3)}{2}$.

Next we argue that $OPT$ pays at least $1$ per phase, which yields the competitive ratio $\frac{cost(ALG)}{cost(OPT)} = O(n^2 k^2)$. Denote the subsequence of requests of phase $i$ by $\phi_i$ and its first request by $r_i$. Then $r_1$ incurs a cost because the cache is initially empty. For $i>1$, we claim that either $r_i$ or some request in $\phi_{i-1} \setminus \{ r_{i-1} \}$ incurs a cost. To see why, denote by $q$ the edge requested by $r_i$. Note that $q$ was not requested during $\phi_{i-1}$ since by definition it started a new phase. Moreover, there is no configuration of the cache in which $q$ could be part of $\phi_{i-1}$ or else it would have been, because $ALG$ fully rearranges its cache to pack additional edges, if possible. So if $r_i$ does not incur a cost, then $q$ must have been cached by $OPT$ during all of $\phi_{i-1}$, particularly right after $r_{i-1}$ was served, and there must be some other edge $q'$ requested during $\phi_{i-1} \setminus \{ r_{i-1} \}$ that was not cached when requested.
\end{proof}

\bibliography{reference}

\begin{thebibliography}{10}

\bibitem{Generalized_Caching_SODA2012}
Anna Adamaszek, Artur Czumaj, Matthias Englert, and Harald R{\"a}cke.
\newblock An ${O}(\log k)$-competitive algorithm for generalized caching.
\newblock In {\em SODA}, pages 1681--1689. SIAM, 2012.

\bibitem{Generalized_Caching}
Anna Adamaszek, Artur Czumaj, Matthias Englert, and Harald R\"{a}cke.
\newblock An ${O}(\log k)$-competitive algorithm for generalized caching.
\newblock {\em ACM Trans. Algorithms}, 15(1), 2018.

\bibitem{ConnectionCachingGeneralized}
Susanne Albers.
\newblock Generalized connection caching.
\newblock In {\em ACM SPAA}, page 70–78, 2000.

\bibitem{BMatchingPhysicalDetails}
Chen Avin, Chen Griner, Iosif Salem, and Stefan Schmid.
\newblock An online matching model for self-adjusting {ToR}-to-{ToR} networks, 2020.
\newblock \href {https://arxiv.org/abs/2006.11148} {\path{arXiv:2006.11148}}.

\bibitem{AzarCompetitiveVertexRecoloring}
Yossi Azar, Chay Machluf, Boaz Patt-Shamir, and Noam Touitou.
\newblock Competitive vertex recoloring.
\newblock {\em Algorithmica}, 85:2001--2027, 2023.

\bibitem{GeneralizedCaching_Buchbinder}
Nikhil Bansal, Niv Buchbinder, and Joseph~(Seffi) Naor.
\newblock Randomized competitive algorithms for generalized caching.
\newblock In {\em STOC}, page 235–244, 2008.

\bibitem{WeightedCaching_Buchbinder}
Nikhil Bansal, Niv Buchbinder, and Joseph~(Seffi) Naor.
\newblock A primal-dual randomized algorithm for weighted paging.
\newblock {\em J. ACM}, 59(4), 2012.

\bibitem{DynamicGraphColoringBarba2019}
Luis Barba, Jean Cardinal, Matias Korman, Stefan Langerman, Andr\'{e} Renssen, Marcel Roeloffzen, and Sander Verdonschot.
\newblock Dynamic graph coloring.
\newblock {\em Algorithmica}, 81(4):1319–1341, 2019.

\bibitem{BarenboimSublinearColoring}
Leonid Barenboim and Tzalik Maimon.
\newblock Fully dynamic graph algorithms inspired by distributed computing: Deterministic maximal matching and edge coloring in sublinear update-time.
\newblock {\em ACM J. Exp. Algorithmics}, 24:1--24, 2019.

\bibitem{Bernshteyn2022SmallRecoloring}
Anton Bernshteyn.
\newblock A fast distributed algorithm for ({$\Delta$}+1)-edge-coloring.
\newblock {\em Journal of Combinatorial Theory, Series B}, 152:319--352, 2022.

\bibitem{L1BodyChasing2023}
S.~Bhattacharya, N.~Buchbinder, R.~Levin, and T.~Saranurak.
\newblock Chasing positive bodies.
\newblock In {\em FOCS}, pages 1694--1714, 2023.

\bibitem{bhattacharya2018dynamic}
Sayan Bhattacharya, Deeparnab Chakrabarty, Monika Henzinger, and Danupon Nanongkai.
\newblock Dynamic algorithms for graph coloring.
\newblock In {\em SODA}, pages 1--20, 2018.

\bibitem{NibblingCycles2024}
Sayan Bhattacharya, Martín Costa, Nadav Panski, and Shay Solomon.
\newblock Nibbling at long cycles: Dynamic (and static) edge coloring in optimal time.
\newblock In {\em SODA}. SIAM, 2024.

\bibitem{BMatchingPaper}
Marcin Bienkowski, David Fuchssteiner, Jan Marcinkowski, and Stefan Schmid.
\newblock Online dynamic b-matching: With applications to reconfigurable datacenter networks.
\newblock {\em SIGMETRICS Perform. Eval. Rev.}, 48(3), 2021.

\bibitem{Borodin_book}
Allan Borodin and Ran El-Yaniv.
\newblock {\em Online Computation and Competitive Analysis}.
\newblock Cambridge University Press, 1998.

\bibitem{ConvexBodyChasingLB2020}
S{\'e}bastien Bubeck, Bo'az Klartag, Yin~Tat Lee, Yuanzhi Li, and Mark Sellke.
\newblock Chasing nested convex bodies nearly optimally.
\newblock In {\em SODA}, pages 1496--1508, 2020.

\bibitem{RestrictedCaching2014}
Niv Buchbinder, Shahar Chen, and Joseph~(Seffi) Naor.
\newblock Competitive algorithms for restricted caching and matroid caching.
\newblock In {\em ESA}, pages 209--221, 2014.

\bibitem{Buchbinder_book}
Niv Buchbinder and Joseph~(Seffi) Naor.
\newblock {\em The Design of Competitive Online Algorithms via a Primal-Dual Approach}.
\newblock Now Foundations and Trends, 2009.

\bibitem{christiansen2022fullydynamicArboricity}
Aleksander B.~G. Christiansen and Eva Rotenberg.
\newblock {Fully-Dynamic $\alpha + 2$ Arboricity Decompositions and Implicit Colouring}.
\newblock In {\em ICALP}, pages 42:1--42:20, 2022.

\bibitem{MultistepVizing2023}
Aleksander Bj\o{}rn~Grodt Christiansen.
\newblock The power of multi-step vizing chains.
\newblock In {\em STOC}, page 1013–1026, 2023.

\bibitem{ConnectionCaching}
Edith Cohen, Haim Kaplan, and Uri Zwick.
\newblock Connection caching under various models of communication.
\newblock {\em ACM SPAA}, 2000.

\bibitem{DynamicEdgeColoring2019}
Ran Duan, Haoqing He, and Tianyi Zhang.
\newblock {\em Dynamic Edge Coloring with Improved Approximation}, pages 1937--1945.
\newblock SIAM, 2019.

\bibitem{Edmonds1965MaximumMA}
Jack Edmonds.
\newblock Maximum matching and a polyhedron with 0,1-vertices.
\newblock {\em Journal of Research of the National Bureau of Standards Section B Mathematics and Mathematical Physics}, page 125, 1965.

\bibitem{BMatching2023}
Antoine El-Hayek, Kathrin Hanauer, and Monika Henzinger.
\newblock On $b$-matching and fully-dynamic maximum $k$-edge coloring, 2023.
\newblock \href {https://arxiv.org/abs/2310.01149} {\path{arXiv:2310.01149}}.

\bibitem{RejectionCaching}
Leah Epstein, Csan\'{a}d Imreh, Asaf Levin, and Judit Nagy-Gy\"{o}rgy.
\newblock Online file caching with rejection penalties.
\newblock {\em Algorithmica}, 71(2):279–306, 2015.

\bibitem{MARK_FIAT1991}
Amos Fiat, Richard~M Karp, Michael Luby, Lyle~A McGeoch, Daniel~D Sleator, and Neal~E Young.
\newblock Competitive paging algorithms.
\newblock {\em Journal of Algorithms}, 12(4):685--699, 1991.

\bibitem{CompanionCaching2002}
Amos Fiat, Manor Mendel, and Steven~S. Seiden.
\newblock Online companion caching.
\newblock In {\em ESA}, pages 499--511, 2002.

\bibitem{projecToR2016}
Monia Ghobadi, Ratul Mahajan, Amar Phanishayee, Nikhil Devanur, Janardhan Kulkarni, Gireeja Ranade, Pierre-Alexandre Blanche, Houman Rastegarfar, Madeleine Glick, and Daniel Kilper.
\newblock {ProjecToR}: Agile reconfigurable data center interconnect.
\newblock In {\em ACM SIGCOMM}, page 216–229, 2016.

\bibitem{DynamicColoringImplicitExplicit2020}
Monika Henzinger, Stefan Neumann, and Andreas Wiese.
\newblock Explicit and implicit dynamic coloring of graphs with bounded arboricity, 2020.
\newblock \href {https://arxiv.org/abs/2002.10142} {\path{arXiv:2002.10142}}.

\bibitem{ChromaticIndexNPC}
Ian Holyer.
\newblock The {NP}-completeness of edge-coloring.
\newblock {\em SIAM Journal on Computing}, 10(4):718--720, 1981.

\bibitem{MisraGries1992_EdgeColoring}
J.~Misra and David Gries.
\newblock A constructive proof of {V}izing's theorem.
\newblock {\em Information Processing Letters}, 41(3):131--133, 1992.

\bibitem{ImprovedDynamicColoring}
Shay Solomon and Nicole Wein.
\newblock Improved dynamic graph coloring.
\newblock {\em ACM Trans. Algorithms}, 16(3):1--24, 2020.

\bibitem{YaoPrinciple}
Andrew Chi-Chin Yao.
\newblock Probabilistic computations: Toward a unified measure of complexity.
\newblock In {\em SFCS}, pages 222--227, 1977.

\bibitem{NealYoungRandAugmented}
Neal Young.
\newblock {\em Competitive paging and dual-guided algorithms for weighted caching and match- ing}.
\newblock PhD thesis, Computer Science Dept., Princeton University, 1991.

\end{thebibliography}

\end{document}